\documentclass[letterpaper, 11pt, onecolumn]{article}

\usepackage{indentfirst}
\usepackage{setspace}

\usepackage[authoryear]{natbib}  
\usepackage[colorlinks=true, citecolor=blue]{hyperref}  

\usepackage{tabu}
\usepackage{epsfig} % for postscript graphics files
\usepackage{amsmath}
\usepackage{amssymb}  % assumes amsmath package installed

\usepackage{enumitem}
\usepackage{pifont}
\usepackage[dvipsnames]{xcolor}
\usepackage{tikz}
\usetikzlibrary{arrows,shapes}
\usetikzlibrary{arrows.meta}
\usepackage[T1]{fontenc}
\usepackage{aecompl}

\usepackage{pgfplots}
\usepackage{algorithm2e}
\usepackage{varwidth}
\usepackage{soul}
\usepackage{verbatim}
\usepackage{booktabs}
\usepackage{float}

\usepackage{graphics} % for pdf, bitmapped graphics files
\usepackage{caption}
\usepackage{subcaption}
\usepackage{multirow}
\usepackage{setspace}
\usepackage{graphicx}

\usepackage{titling}
\usepackage[top=1in, bottom=1in, left=1in, right=1in]{geometry}
\bibpunct[, ]{(}{)}{;}{a}{}{~}

\usepackage{amsthm}
\usepackage[scientific-notation=true]{siunitx}
\usepackage{amsfonts}
\usepackage{dsfont}
\usepackage{placeins}
\definecolor{darkblue}{RGB}{0,101,204}
\definecolor{carorange}{RGB}{255,131,0}

\newtheorem{theorem}{Theorem}

\newtheorem{corollary}{Corollary}

\theoremstyle{definition}
\newtheorem{definition}{Definition}
\newtheorem{assumption}{Assumption}
\theoremstyle{remark}
\newtheorem{remark}{Remark}

\newcounter{tmp}

\title{\Large\bf
When Altruism Meets Autonomy: Managing Bottleneck Congestion with Strategic Autonomous Vehicles}

\author{
Kexin Wang$^{1,*}$, 
Haohui He$^{1}$, 
Ruolin Li$^{1}$ \\[0.5em]
{\small \textit{$^{1}$Department of Civil and Environmental Engineering, University of Southern California}} \\[0.3em]
{\small \texttt{\{kwang255, haohuihe, ruolinl\}@usc.edu}}
}

\date{}

\begin{document}
\maketitle

\iffalse
\title{\LARGE \bf
When Altruism Meets Autonomy: Managing Bottleneck Congestion with Strategic Autonomous Vehicles}
\date{}
% <-this % stops a space
\maketitle
\vspace{-2.5cm} 
\begin{center}
  {\large Kexin Wang$^{1}$,\; Haohui He$^{1}$,\; and Ruolin {Li}^{*1}}\\[1em]
  {\small $^{1}$\textit{Department of Civil and Environmental Engineering, University of Southern California}\\[0.6em]
  \texttt{kwang255@usc.edu, haohuihe@usc.edu, ruolinl@usc.edu}}
\end{center}

\thispagestyle{empty}
\fi

%%%%%%%%%%%%%%%%%%%%%%%%%%%%%%%%%%%%%%%%%%%%%%%%%%%%%%%%%%%%%%%%%%%%%%%%%%%%%%%

\begin{abstract}

\noindent Weaving ramps are critical bottlenecks in highway networks due to conflicting traffic flows and complex interactions among heterogeneous vehicle types. In mixed-autonomy settings, the presence of controllable autonomous vehicles (AVs) introduces new opportunities to influence system-level outcomes, yet the \emph{structural} impact of such control remains poorly understood.
This paper develops a unified equilibrium framework to capture, predict, and optimize aggregate lane-choice behavior in weaving ramps with heterogeneous vehicle populations. We first formulate a Wardrop-based model capturing the selfish behavior of human-driven vehicles (HDVs) and establish existence, uniqueness and validity of the resulting equilibrium. We then introduce a Stackelberg–Wardrop formulation in which AVs act as strategic leaders optimizing system performance, while HDVs respond through equilibrium adaptation. The framework is further generalized to incorporate HDVs' and AVs' heterogeneous behavioral preferences via a Social Value Orientation (SVO) model.
Our analysis reveals a fundamental structural property of mixed-autonomy traffic systems: under selfish HDV behavior, the impact of AV penetration is inherently non-increasing, exhibiting plateau regions where performance remains unchanged and improves only at critical thresholds.
These results provide principled guidance for the design of AV control and incentive mechanisms in the presence of selfish human behavior, and demonstrate how strategically controlled autonomous agents can be deployed to induce system-level efficiency gains in mixed-autonomy transportation networks.

\noindent \textbf{Keywords:} Traffic equilibrium, Bottleneck congestion, Mixed autonomy, Game theory, Altruistic behavior
%Mixed-autonomy traffic, Weaving ramps, Lane choice, Wardrop equilibrium, Stackelberg game, Social value orientation, Autonomous vehicles, Altruism
\end{abstract}

\section{Introduction}\label{intro}

\noindent Mixed-autonomy traffic systems, in which AVs coexist with HDVs, are expected to fundamentally reshape transportation networks~\citep{talebpour2016influence,wu2017emergent}. A key distinction between these vehicle types lies in their decision-making capabilities: while HDVs act selfishly based on local information, AVs can be centrally coordinated and strategically controlled~\citep{zhang2018mitigating, ao2024control}. This asymmetry creates new opportunities to influence traffic equilibria and improve system performance. However, a fundamental question remains: how does the introduction of strategically controlled agents alter equilibrium outcomes in systems dominated by selfish users? Addressing this question is a central challenge in transportation science and the focus of this paper. Highway weaving ramps provide a representative bottleneck setting in which such interactions are especially pronounced~\citep{marczak2015macroscopic,chen2018capacity,lee2008empirical}. Motivated by this setting, we study the strategic interplay between HDVs and AVs at weaving ramp bottlenecks, and examine how these interactions can inform the design of AV control policies. Next, we motivate our problem setting from two complementary perspectives, mixed-autonomy systems and bottleneck characteristics at highway weaving ramps.

\subsection{Challenge of Mixed-Autonomy Systems}

\noindent Mixed-autonomy systems, characterized by the coexistence of AVs and HDVs, introduce a fundamental shift in how traffic systems operate. Unlike traditional settings composed solely of human drivers~\citep{beckmann1956studies,roughgarden2002bad}, these systems feature a heterogeneous population with distinct decision-making mechanisms: HDVs act selfishly based on local information, while AVs can be algorithmically controlled and coordinated at the system level~\citep{chen2017optimal}. This asymmetry creates the potential to influence traffic equilibria through the strategic deployment and control of AVs.

Existing studies have demonstrated that AVs can improve traffic efficiency, enhance stability, and enable new forms of coordinated control, even at moderate penetration levels~\citep{vinitsky2018benchmarks,wu2023leveraging}. These benefits arise from the ability of AVs to implement system-level objectives, such as throughput maximization and congestion mitigation, which are typically beyond the scope of selfish human drivers. As a result, AVs can act as controllable agents that reshape traffic dynamics and influence the behavior of surrounding vehicles.

However, the effectiveness of such control critically depends on the interaction between AVs and HDVs~\citep{lazar2017capacity, guo2022managing}. When AV penetration is low, the system remains dominated by selfish HDV behavior, which may limit or even neutralize the impact of strategically controlled AVs~\citep{lazar2017price}. As penetration increases, AVs gain greater influence over the aggregate traffic state, enabling more effective coordination and system-level improvements~\citep{lioris2017platoons, guo2020leveraging}. This suggests that the relationship between AV penetration and system performance is inherently nontrivial and mediated by equilibrium responses of HDVs, and also outlines a remaining fundamental gap: existing work largely focuses on performance improvements under specific control policies or simulation settings~\citep{vinitsky2018benchmarks, zhang2018mitigating}, but lacks a structural understanding of how strategically controlled AVs alter equilibrium outcomes in the presence of selfish users. In particular, it remains unclear how system performance evolves as AV penetration increases, and how behavioral heterogeneity shapes these dynamics.
Addressing this gap requires a unified framework that captures the strategic interaction between controllable AVs and selfish HDVs at equilibrium. This motivates the development of the equilibrium-based models proposed in this paper.

\subsection{Challenge of Highway Weaving Ramp}

\begin{figure*}[h!]
\centering
\includegraphics[width = 0.8\textwidth]{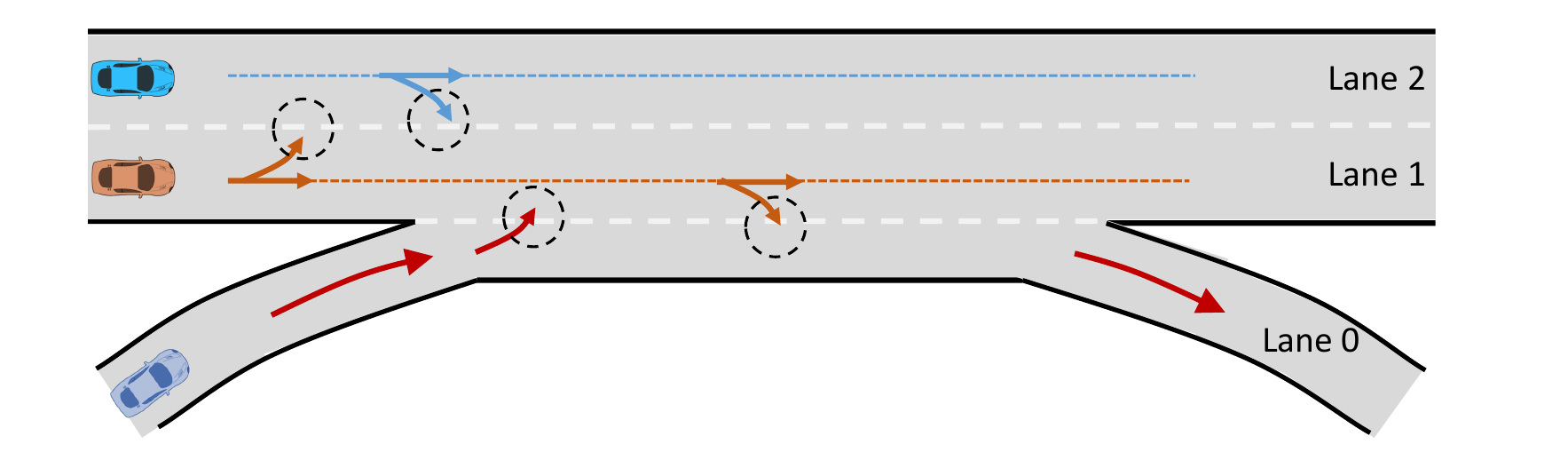}
\caption{A highway weaving ramp example, which shows complex interactions between entering, exiting and going through vehicles.}
\label{fig:interaction}
\end{figure*}

\noindent The interaction between AVs and HDVs becomes particularly pronounced in localized bottleneck settings, where vehicles must make rapid and strategic decisions under strong spatial and temporal constraints. Examples include intersections~\citep{dresner2008multiagent}, roundabouts~\citep{rodrigues2018autonomous}, and highway merging, diverging, and weaving segments~\citep{chandra2022gameplan}. In such environments, vehicles continuously adjust their behaviors, including merging, yielding, and changing lanes, in response to surrounding traffic conditions, leading to complex and highly coupled interactions.

Among these scenarios, highway weaving ramps provide a representative and analytically tractable bottleneck setting. In Figure.~\ref{fig:interaction}, it shows entering, exiting, and through vehicles must complete conflicting maneuvers within a limited roadway segment in this scenario. This induces strong competition for space and priority: entering vehicles merge into the mainline, exiting vehicles traverse across lanes to reach the off-ramp, and through vehicles decide whether to remain in their current lane or bypass congestion by changing lanes. These interdependent decisions generate substantial friction in traffic flow, leading to capacity reductions and congestion~\citep{laval2006lane}.

From a modeling perspective, weaving ramps naturally give rise to a strategic lane-choice problem~\citep{marczak2015macroscopic,lee2008empirical,ruolin2020onramp}. In particular, a subset of vehicles, namely those traveling through the weaving segment, face a fundamental trade-off between remaining in a congested lane or switching lanes to avoid conflicts, while incurring maneuvering costs~\citep{toledo2003lanechanging,kita1999merging,kita2002game}. This trade-off is inherently strategic, as the cost of each decision depends on the aggregate behavior of other vehicles. Moreover, in mixed-autonomy settings, AVs can influence these decisions through coordinated control, while HDVs respond selfishly based on observed conditions~\citep{talebpour2016influence}.

These features make weaving ramps an ideal setting to study how strategically controlled AVs interact with selfish HDVs and how such interactions shape equilibrium outcomes. At the same time, the strong coupling between merging and lane-changing behaviors amplifies congestion externalities, making it particularly challenging to design control strategies that achieve system-level improvements. This motivates the development of an equilibrium-based framework to analyze and manage lane-choice behavior in mixed-autonomy weaving bottlenecks.

\subsection{Our contributions}

\noindent This paper addresses the above question through the study of aggregate lane-choice behavior in highway weaving ramps. We develop a macroscopic game-theoretic framework that integrates (i) selfish HDV behavior modeled via Wardrop equilibrium, (ii) strategic AV control modeled through a Stackelberg formulation, and (iii) heterogeneous behavioral preferences captured using a Social Value Orientation (SVO) model.

Our analysis reveals a fundamental structural property of mixed-autonomy traffic systems. As AV penetration increases, system performance is not strictly improving; instead, the social cost follows a piecewise-constant regime structure, remaining unchanged over intervals of penetration and decreasing only at critical thresholds. Each regime is governed by a unique active behavioral type, and behavioral heterogeneity reshapes these thresholds, thereby altering the timing and magnitude of system-level improvements.

These findings suggest that mixed-autonomy traffic systems should not be viewed as smooth transitions from human-driven to fully automated operation. Rather, they exhibit regime-switching behavior driven by the interaction between strategic control and heterogeneous preferences. This perspective provides a new lens for understanding congestion dynamics and informs the design of AV deployment and control policies.

The main contributions of this study are as follows:
\begin{itemize}
    \item \textbf{Equilibrium Framework for Mixed Autonomy:} We develop a unified equilibrium framework for lane-choice behavior in weaving ramps, which integrates Wardrop equilibrium for HDVs with Stackelberg-based control of AVs, and further extends to heterogeneous populations through SVO-based preferences.

    \item \textbf{Existence, Uniqueness, and Structural Characterization:} We establish existence and uniqueness of the baseline selfish HDV equilibrium and characterize how mixed-autonomy equilibrium outcomes depend on traffic composition and control inputs.

    %\item \textbf{Threshold and Regime Analysis:} We derive explicit penetration thresholds that partition system behavior into distinct regimes and show that social cost exhibits a piecewise-constant structure with threshold-driven improvements.

    \item \textbf{Implications for AV Control Design:} We provide actionable insights for the design of AV control and incentive mechanisms, including conditions under which AV deployment leads to measurable reductions in system-level delay.
\end{itemize}

\section{Related Works}\label{related_works}

\noindent We next review the relevant literature along two key dimensions: models of lane-changing behavior and game-theoretic frameworks for analyzing strategic interactions in mixed-autonomy systems.

\subsection{Lane-Changing Behavior in Mixed Autonomy}

\noindent Lane-changing is a critical maneuver in traffic flow, as improper lane changes can create voids that trigger stop-and-go oscillations or even lead to breakdowns near bottlenecks~\citep{laval2008microscopic}. Empirical evidence suggest that approximately 10\% of crashes are attributable to lane-changing behavior~\citep{ji2020review}, underscoring the need for improved models of lane choice. Over the past decades, a variety of microscopic lane-changing models have been developed to describe when and how drivers decide to change lanes. Early rule-based frameworks, such as Gipps' seminal model~\citep{gipps1986model}, formalized drivers' decision processes based on safe distance and collision risk criteria. Later studies refined gap-acceptance logic by distinguishing free, cooperative, and forced maneuvers depending on urgency and driver willingness to decelerate~\citep{hidas2002modelling}. These classical models, along with subsequent works~\citep{sun2010research,li2019extended, li2020game}, established the foundation for simulating both mandatory and discretionary lane changes under human driving conditions. 

With the advent of AVs, research has begun to examine how partial automation may influence lane-changing dynamics and be leveraged to improve traffic performance~\citep{ghiasi2017mixed}. Data-driven approaches, particularly reinforcement learning, have been applied to train autonomous agents to replicate or optimize human-like lane-change decisions~\citep{shalev2016safe,wang2019lane}. Notably, studies show that even a small fraction of AVs can affect traffic conditions. For example, with 5-10\% AV penetration, properly designed AV control policies were found to increase average speeds and suppress stop-and-go waves in simulations~\citep{wang2020controllability}. Field experiments further demonstrate the potential of cooperative lane-changing~\citep{du2020cooperative,fan2025integrating}, where prototype AVs successfully executed coordinated lane maneuvers in live traffic, assisting HDVs during merges~\citep{adebisi2020developing}. Other system-wide analyses similarly confirm that AV deployment can improve efficiency and stability~\citep{li2018managing, guo2021mixed}. These findings illustrate that autonomy and V2X connectivity have powerful potential to reduce disturbances and enhance throughput during lane changes. 

Furthermore, highway weaving ramps present a particular challenging context for lane-changing in mixed autonomy. Weaving ramps are recognized as capacity bottlenecks, with throughput reductions of 3-20\% once congestion sets~\citep{yan2024bi}. Traditional studies attribute this "capacity drop" to human behavioral factors, including suboptimal gap acceptance and the spatial distribution of lane changes~\citep{cassidy1999some,treiber2013traffic}. AVs, by contrast, create opportunities for predictive and cooperative maneuvering: equipped with connectivity, they can act on a system level, unlike HDVs limited by perception and independent decision-making~\citep{talebpour2016influence,stern2018dissipation}. Recent research has thus proposed coordinated lane-changing control for weaving ramps, where centralized or distributed algorithms optimize lane-change trajectories across multiple vehicles~\citep{rios2016survey}. While promising in fully automated scenarios, these methods often require solving large-scale optimization problems and encounter scalability issues as vehicle numbers increase~\citep{zhou2016impact,chen2017towards}. Moreover, many assume high AV penetration and overlook the ongoing transition phase with mixed fleets~\citep{shladover2012impacts}. In summary, existing studies have largely focused either on localized, lane-changing decisions or on centralized coordination under idealized conditions, leaving a gap in modeling aggregate lane-choice behavior in weaving ramps that accounts for the strategic interplay between AVs and HDVs. To address this, our work introduces a set of game-theoretic formulations for lane choices in mixed-autonomy weaving zones.

\subsection{Game-Theoretic Modeling in Mixed Autonomy}

\noindent Game-theoretic frameworks have proven well-suited for studying mixed-autonomy environments, as they can capture interactive behaviors among heterogeneous agents within a unified decision-making model. By accommodating diverse objectives in a single framework, game theory enables a rigorous analysis of strategic decision-making in these complex settings. For instance,~\citet{li2023cooperative} formulated a Nash equilibrium approach to AV lane-changing that explicitly balances safety and energy efficiency. Similarly,~\citet{lopez2022game} combined normal-form game formulations, Nash equilibrium analysis, and Q-learning techniques to improve the stability of multi-agent lane-changing decisions. In another study,~\citet{talebpour2015modeling} employed a fictitious play-based framework to model AV lane-changing in a V2V communication-enabled environment, where drivers iteratively update their beliefs and compute best responses. Together, these studies demonstrate the effectiveness of game-theoretic methods for capturing realistic and strategic interactions among vehicles in mixed traffic.

Beyond modeling individual driver behaviors, game-theoretic principles also facilitate system-level traffic analysis. Wardrop's first principle formalizes the notion of a user-equilibrium traffic flow, wherein no driver can reduce their travel time by unilaterally changing behaviors~\citep{wardrop1952some}. While microscopic behavioral models are critical for operational-level decision making, aggregate equilibrium-based approaches provide valuable insights for high-level planning and policy design by transportation agencies and companies. In fact, such equilibrium conditions are often integrated into bi-level models~\citep{chiou2005bilevel}. These formulations often referred to as mathematical programs with equilibrium constraints, offer a principled framework for analyzing system-optimal strategies under rational user responses.

%In our prior work, we explored several game-theoretic frameworks to capture aggregate lane-choice behavior in complex traffic scenarios. For example, one study focused on a highway diverge with bifurcating lanes where vehicles must split into two lanes~\citep{li2019extended}, while another addressed lane choice at a highway on-ramp, where merging vehicles interact with the mainline traffic flow~\citep{li2020game,li2021employing}. These models were shown to effectively capture and predict HDVs’ lane-choice behavior in their respective scenarios. %Beyond these localized problems, we also extended our game-theoretic modeling to broader mixed-autonomy traffic networks by accounting for the unique operational characteristics of AVs. Specifically, we investigated the influence of AV headway policies on overall network-wide social delay~\citep{li2020impact}, designing toll lane frameworks that integrate AVs alongside high-occupancy vehicles~\citep{li2021highway}, and developing dynamic routing and queuing strategies under traffic-responsive intersection control schemes~\citep{li2022dynamic}. 
%Collectively, these studies illustrate the broad applicability of game-theoretic approaches, ranging from modeling macroscale lane-choice behavior to addressing system-level mixed-autonomy traffic management challenges.

In prior work, we developed several game-theoretic frameworks to capture aggregate lane-choice behavior in specific traffic scenarios. For example, one study considered a highway diverge with bifurcating lanes, where vehicles must split into two downstream links~\citep{li2019extended}, while another examined lane choice at highway on-ramps, where merging vehicles interact with mainline traffic~\citep{li2020game,li2021employing}. These models were shown to effectively capture and predict HDVs’ lane-choice behavior in their respective settings. Building on these foundations, the present work significantly extends the scope and depth of the analysis. We move beyond homogeneous, single-scenario models to a unified framework that incorporates mixed autonomy, strategic AV control, and heterogeneous behavioral preferences. This extension enables us to characterize how equilibrium structure evolves under the interaction between controllable and selfish agents, which is not captured in prior models.

%%%%%%%%%%%%%%%%%%%%%%%%%%%%%%%%%%%%%%%%%%%%%%%%%%%%%%%%%%%%%%%%%%%%%%%%%%%%%%%

\section{Problem Setting: Weaving Ramps}\label{problem_setting}

\noindent In this section, we introduce our core scenario: highway weaving ramps, defined as freeway segments where on-ramps and off-ramps are located in close proximity. In such settings, entering, exiting, and through vehicles perform intensive lane-changing maneuvers within a limited distance~\citep{HCM2010}. We first examine the inherent interaction complexity of such facilities, identify key lane-choice behaviors, and then define the essential notations for our framework. To maintain clarity, we focus on a representative case with two mainline lanes, as illustrated in Figure~\ref{fig:strategic}. 

\subsection{Key Strategic Lane Choice Behavior}

\begin{figure*}[h!]
\centering
\includegraphics[width=0.9\textwidth]{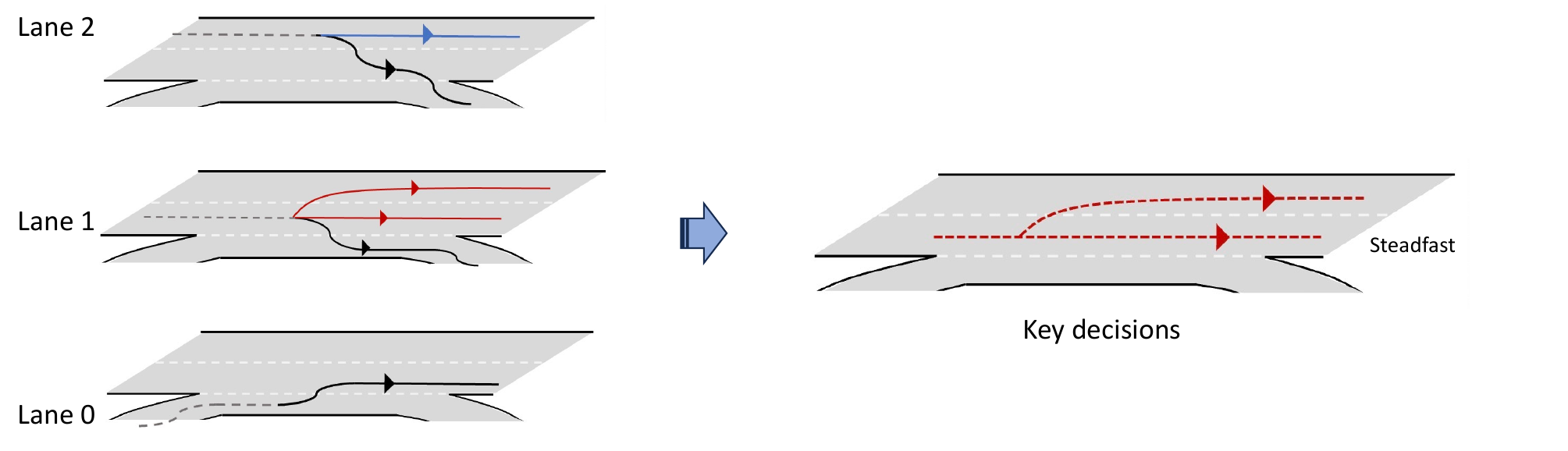}
    \caption{This figure illustrates the structure of lane-changing decisions within a highway weaving ramp. The flows shown in black and blue represent entering and exiting vehicles, along with through vehicles in the inner mainline lane (Lane 2), whose decisions are predetermined by their initial intentions or by the absence of lane-changing incentives, and are not subject to strategic adjustment. By contrast, the flows shown in red correspond to through vehicles in the outer mainline lane (Lane 1), which retain the flexibility to modify their lane-changing decisions in response to system-level delays and thus constitute the focus of our analysis.}
    \label{fig:strategic}
\end{figure*}

\noindent Most vehicles in this setting operate under predefined goal constraints. As illustrated in Figure~\ref{fig:strategic}, entering vehicles on Lane 0 must merge into Lane 1 to access the highway, while exiting vehicles on Lane 1 and 2 must leave via the weaving ramp. Through vehicles in Lane 2 have no incentive to change lanes. Consequently, their lane-changing decisions are relatively rigid and less responsive to external control strategies.

By contrast, through vehicles in Lane 1 represent a distinct class with more flexible objectives and greater susceptibility to external influence. These vehicles face two strategic decisions: (1) \textbf{steadfast in Lane 1}, accepting potential interactions and disruptions from both entering and exiting flows, or (2) \textbf{bypass to Lane 2} to avoid the conflict zone, while incurring the risks and time costs associated with lane changes. These decisions reflect a trade-off between exposure to disruption and the effort of maneuvering. 

Strategic decision-making also differs substantially between AVs and HDVs. HDV behavior is largely spontaneous and difficult to influence, particularly in short-horizon scenarios such as weaving ramps, where decisions must be made almost instantaneously. By contrast, AVs are more controllable, because they can access richer information and execute timely decisions. This controllability suggests that, as AV penetration increases in the near future, their decision-making can be leveraged to improve traffic system performance. Given this behavioral heterogeneity, our analysis focuses on how AVs make strategic decisions to regulate traffic flow and how HDVs respond selfishly to those decisions in Lane 1. In particular, we investigate how such interactions generate distinct behavioral patterns and evaluate their implications for both individual outcomes and system-level delays, which constitutes the central focus of this study.

\subsection{Notations}

\begin{figure*}[h!]
\centering
\includegraphics[width = 0.8\textwidth]{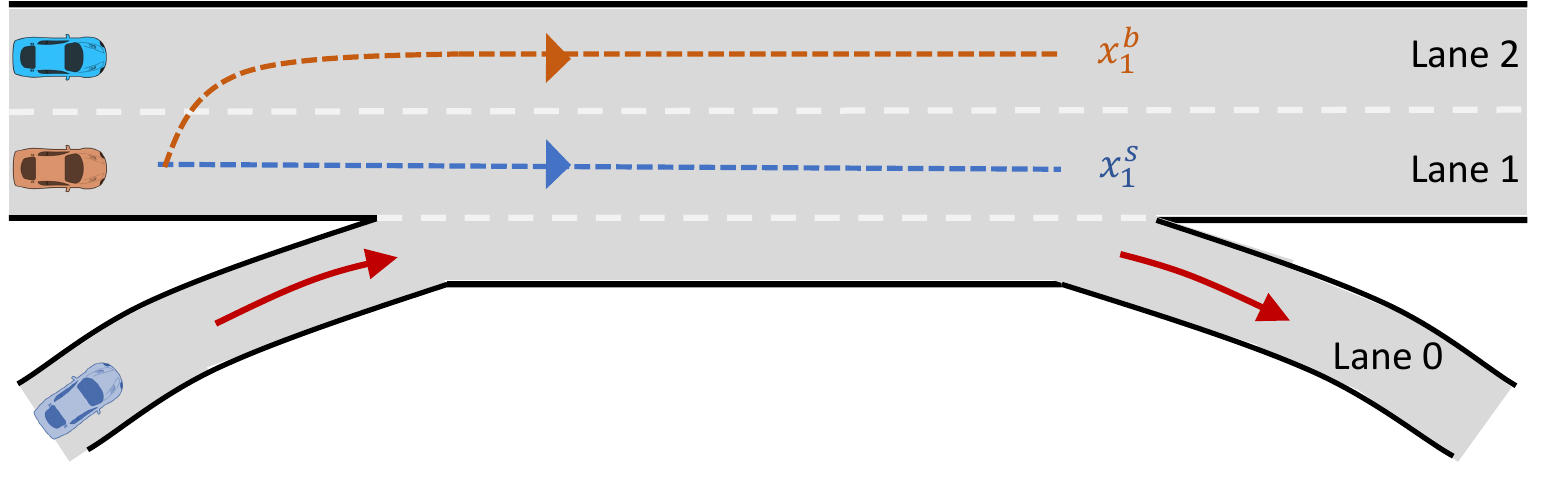}
\caption{The highway weaving ramp consists of three lanes. Lane 0 accommodates the entering flow $f_0^\text{enter}$, and Lane 2 carries both the exiting flow $f_2^\text{exit}$ and the non-strategic through flow $f_2^\text{s}$. These flows are treated as given parameters.
By contrast, Lane 1 contains two through flows $f_1^\text{s}$(steadfast) and $f_1^\text{b}$(bypass), which constitute the key decision variables in this study.}
\label{fig:geometry}
\end{figure*}

\noindent To evaluate the impact of strategic vehicle behavior on system performance and capture interactions among heterogeneous vehicle types, we adopt a macroscopic modeling perspective. Our focus is on the aggregate traffic flow patterns of each vehicle type, rather than the microscopic actions of individual vehicles. 

The vehicle types considered and their corresponding flow notations are as follows: (1) $f_0^{\text{enter}}$: entering vehicles on Lane 0 that merge into Lane 1; (2) $f_2^{\text{exit}}$: exiting vehicles on Lane 2 that traverse Lane 1 and exit the weaving ramp via Lane 0; (3) $f_2^\text{s}$: through vehicles on Lane 2; (4) $f_1^\text{s}$: through vehicles on Lane 1 that remain steadfast; (5) $f_1^\text{b}$: through vehicles on Lane 1 that bypass to Lane 2. Since our focus is on flows whose decisions can be strategically controlled, we distinguish between predetermined behaviors and strategic choices. Vehicle types (1)-(3) are goal-constrained, with actions either predetermined or lacking incentives for adjustment, and are therefore treated as \textbf{exogenous normalized flow ratios}. By contrast, vehicle types (4) and (5) represent decision-dependent variables whose allocations adapt to traffic conditions or control policies, and are modeled as \textbf{decision-dependent normalized flow ratios}.  

The \textit{exogenous normalized flow ratios} are defined as follows:
\begin{align}
  n_{0}^\text{enter} &:= \frac{f_{0}^\text{enter}}{f_{0}^\text{enter} + f_{2}^\text{exit} + f_{2}^\text{s}}, \\
  n_{2}^\text{exit} &:=\frac{f_{2}^\text{exit}}{f_{0}^\text{enter} + f_{2}^\text{exit} + f_{2}^\text{s}}, \\
  n_{2}^\text{s} &:=\frac{f_{2}^\text{s}}{f_{0}^\text{enter} + f_{2}^\text{exit} + f_{2}^\text{s}}, 
\end{align}
where we have $n_{0}^\text{enter} \ge 0, n_{2}^\text{exit} \ge 0, n_{2}^\text{s} \ge 0$, and naturally $n_{0}^\text{enter} + n_{2}^\text{exit} + n_{2}^\text{s} = 1$. The normalized flow ratios $n_0^{\text{enter}}$, $n_2^{\text{exit}}$ and $n_2^\text{s}$ represent the proportions of entering, exiting, and through vehicles relative to the total vehicle flow in the vicinity of Lane 1, originating from Lane 0 or Lane 2. These flow ratios are treated as known parameters in the model and are collectively denoted by the flow configuration vector $\mathbf{n} := \left(n_0^{\text{enter}},n_2^{\text{exit}},n_2^\text{s}\right).$

The \textit{decision-dependent normalized flow ratios} are defined separately as follows:
\begin{align}
    x_1^\text{s} := \frac{f_1^\text{s}}{f_1}, \\
    x_1^\text{b} := \frac{f_1^\text{b}}{f_1}, 
\end{align}
where naturally $x_1^\text{s} + x_1^\text{b} = 1$, $x_1^\text{s} \geq 0$, and $x_1^\text{b} \geq 0$. The variable \( x_1^\text{s} \) represents the proportion of steadfast vehicles on Lane 1, and \( x_1^\text{b} \) represents the proportion of bypassing vehicles on Lane 1. 

We incorporate AVs into the decision-dependent flows through the penetration rate $p$, as these flows are directly influenced by strategic decision-making. While AVs could in principle be included in the exogenous normalized flows, such an extension is unlikely to yield meaningful improvements in overall system performance. Accordingly, we restrict AVs to the decision-dependent categories and exclude them from $n_0^\text{enter}$, $n_2^\text{exit}$ and $n_2^\text{s}$. The total flow distribution vector is defined as $\mathbf{x}:=(x_{1,\text{CAV}}^\text{s},x_{1,\text{CAV}}^\text{b},x_{1,\text{HDV}}^\text{s},x_{1,\text{HDV}}^\text{b})$, where the aggregate decision-dependent flow ratios $x_1^\text{s}$ and $x_1^\text{b}$ are given by the sums of their AV and HDV components:
\begin{align}
   x_1^\text{s} = x_{1,\text{HDV}}^\text{s} + x_{1,\text{CAV}}^\text{s}, \label{eq:x_1^s} \\
   x_1^\text{b} = x_{1,\text{HDV}}^\text{b} + x_{1,\text{CAV}}^\text{b},
\end{align}
where we have $x_{1,\text{CAV}}^\text{s} + x_{1,\text{CAV}}^\text{b} = p$, and $x_{1,\text{HDV}}^\text{s} + x_{1,\text{HDV}}^\text{b} = 1-p$.

The relationships defined above establish the notation and the decomposition of flows into AV and HDV components. In the subsequent sections, we first formulate a strategic lane choice model under the Wardrop equilibrium framework for an HDV-only environment. This framework is then extended to a bi-level Stackelberg-Wardrop formulation, in which $\textbf{dedicated altruistic CAVs}$, which are centrally controlled AVs with prescribed strategies, are introduced to manage and influence system-level behavior. Finally, the model is generalized to a mixed-traffic setting that incorporates $\textbf{relaxed altruistic CAVs}$, i.e., AVs whose behavior is guided by incentivization mechanisms rather than direct control. All flow-ratio variables discussed in the remainder of the paper are defined with respect to this framework.

\iffalse
\begin{assumption}[On-ramp vehicles merge to Lane 1]
    In the weaving zone, on-ramp vehicles in Lane 0 aim to merge into Lane 1 within a constrained distance. okay if they bypass afterwards
\end{assumption}

\begin{remark}(feasibility of measurements)
    n2 and ne
\end{remark}

\begin{assumption}[Exit vehicles perform a series of lane changes to off ramp]
    doesn't impact if ne on lane 1
\end{assumption}
\fi

\iffalse
\begin{figure*}[h!]
    \centering
    \begin{subfigure}[b]{0.3\textwidth}
       \centering \includegraphics[width=\textwidth]{figs/JJ1.pdf}
        \caption{$J_1^\text{s}(x_1^\text{b})>J_1^\text{b}(x_1^\text{b})$}
    \end{subfigure}\qquad
    \begin{subfigure}[b]{0.3\textwidth}
       \centering \includegraphics[width=\textwidth]{figs/JJ2.pdf}
        \caption{$J_1^\text{s}(x_1^\text{b})$ and $J_1^\text{b}(x_1^\text{b})$ intersect}
    \end{subfigure} \qquad
    \begin{subfigure}[b]{0.3\textwidth}
       \centering \includegraphics[width=\textwidth]{figs/JJ3.pdf}
        \caption{$J_1^\text{s}(x_1^\text{b})<J_1^\text{b}(x_1^\text{b})$}
    \end{subfigure}  
    \caption{Three possible sketches of $J_1^\text{s}(x_1^\text{b})$ and $J_1^\text{b}(x_1^\text{b})$ in the region of $x_1^\text{b}\in [0,1]$.}
    \label{fig:JJ}
\end{figure*}
\fi

%%%%%%%%%%%%%%%%%%%%%%%%%%%%%%%%%%%%%%%%%%%%%%%%%%%%%%%%%%%%%%%%%%%%%%%%%%%%%%%%
\section{Selfish HDVs' Strategic Lane Choice Behavior as a Baseline} \label{sec:wardrop_eq_model}
%\section{Baseline: Homogeneous Human-Driven Scenario}

\noindent We begin by analyzing the decision-making process of HDVs in the absence of AVs, i.e., when AV penetration rate $p=0$ and ${x_{1,\text{CAV}}^\text{s}}={x_{1,\text{CAV}}^\text{b}}=0$. In this case, HDVs are assumed to behave selfishly, seeking to minimize their individual travel time delays while navigating the weaving ramp. Such behavior can be effectively captured by the Wardrop equilibrium framework~\citep{wardrop1952some}, which represents a state where no driver can unilaterally reduce travel time by changing lanes.

%Traditionally, traffic systems have been dominated by HDVs, resulting in a homogeneous traffic environment where each driver makes decisions independently to maximize personal efficiency. In weaving ramp scenarios, this self-interested behavior often leads to competitive lane-changing and suboptimal system outcomes. Such selfish behavior is well captured by the Wardrop equilibrium framework, which models user equilibrium as a state where no individual can reduce their travel time by unilaterally changing lanes. This formulation serves as a natural baseline for understanding lane-choice dynamics in HDV-only settings, and provides a benchmark against which more complex, mixed-autonomy scenarios can be evaluated.

\subsection{The Baseline Model Framework}

\noindent In the highway weaving ramps, decision-making vehicles choose between two strategic behaviors: \textbf{steadfast behavior} associated with a cost $J_1^\text{s}(\mathbf{x})$, and \textbf{bypassing behavior}, associated with a cost $J_1^\text{b}(\mathbf{x})$. The Wardrop-based model captures the equilibrium relationship between the steadfast flow rates $x_{1,\text{HDV}}^\text{s}$, the bypassing flow rate $x_{1,\text{HDV}}^\text{b}$, and their corresponding delay costs $J_1^\text{s}(\mathbf{x})$ and $J_1^\text{b}(\mathbf{x})$. At the lane choice equilibrium, no vehicles have incentive to change their lanes.

\begin{definition}[HDVs' Wardrop Lane Choice Equilibrium]\label{def:wdp_basic}
For a given weaving ramp configuration \( G = (\mathbf{N}, \mathbf{C}) \), a flow distribution vector \( \mathbf{x}:=(x_{1,\text{HDV}}^\text{s},x_{1,\text{HDV}}^\text{b}) \) is in equilibrium if and only if
\begin{subequations}\label{eq:eq_def}
    \begin{align}
        x_{1,\text{HDV}}^\text{s}  (J_1^\text{s}(\mathbf{x}) - J_1^\text{b}(\mathbf{x})) &\leq 0 ,\label{eq:eq_def_a}\\
        x_{1,\text{HDV}}^\text{b}  (J_1^\text{b}(\mathbf{x}) - J_1^\text{s}(\mathbf{x})) &\leq 0. \label{eq:eq_def_b}
    \end{align}
\end{subequations}
\end{definition}

This equilibrium model specifies that HDVs adopt the strategy that minimizes their own travel cost. Specifically, when bypassing incurs a lower cost (\( J_1^\text{s}(\mathbf{x}) - J_1^\text{b}(\mathbf{x}) > 0 \)), all Lane 1 through vehicles choose to bypass, resulting in \( x_{1,\text{HDV}}^\text{s} = 0 \). Conversely, when steadfast travel is less costly (\( J_1^\text{b}(\mathbf{x}) - J_1^\text{s}(\mathbf{x}) > 0 \)), all Lane 1 through vehicles remain in their lane, yielding \( x_{1,\text{HDV}}^\text{b} = 0 \). When the costs of the two strategies are equal (\( J_1^\text{b}(\mathbf{x}) - J_1^\text{s}(\mathbf{x}) = 0 \)), both behaviors may coexist, producing nonzero values of \( x_1^\text{s} \) and \( x_1^\text{b} \). The resulting Wardrop equilibrium represents a long-term, steady-state distribution of strategies, which may not occur at every moment in real traffic but captures the aggregate outcome of repeated behavioral adjustments over time.

\subsection{The Cost Model}\label{sec:wardrop_cost_func}

\begin{figure*}[h!]
\centering
\begin{subfigure}{0.25\textwidth}
    \includegraphics[width=\textwidth]{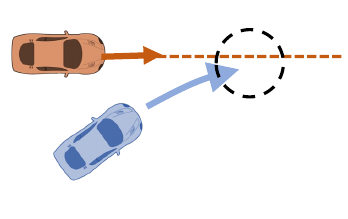}
    \caption{Merging Behavior}
    \label{fig:merging_behavior}
\end{subfigure}
\hspace{0.05\textwidth} 
\begin{subfigure}{0.25\textwidth}
    \includegraphics[width=\textwidth]{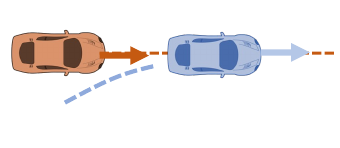}
    \caption{Traversing Behavior}
    \label{fig:traversing_behavior}
\end{subfigure}
\caption{Two types of traffic interactions contribute to the costs of bypassing and steadfast behaviors: merging and traversing. Merging occurs when a vehicle attempts to enter another lane, often leading to delays due to conflicts with existing traffic. Traversing refers to the condition where many vehicles remain in the same congested lane. These two behaviors are key components of the cost structure in the weaving segment.}
\end{figure*}

\noindent To model the costs associated with the two lane choice options ($steadfast$ and $bypass$) of through vehicles on Lane 1, some key conditions must be satisfied. First, from an aggregate perspective, through vehicles choosing the same option should have the same cost. Second, the cost should be an increasing function of the relevant traffic flows, reflecting the principle that higher traffic volumes lead to greater travel difficulty and, consequently, higher costs. More specifically, a cost function can consist of two components: a traversing cost, which is proportional to the total flow on the chosen lane, and a merging cost, which is proportional to the product of the flows of merging parties.

The specific structure and parameter values of the cost model are determined through calibration and validation using data collected from simulations. The model is considered appropriate only when the calibration and validation results demonstrate satisfactory accuracy. For the sake of brevity, we omit the iterative design process and present only the final validated models. Let \( J_1^\text{s} \) denote the cost for through vehicles choosing the steadfast option, and \( J_1^\text{b} \) denote the cost for through vehicles choosing the bypassing option. We have the following costs $J_1^\text{s}$ and $J_1^\text{b}$:
\begin{align}
J_1^\text{s}(\mathbf{x}) &= C_1^\text{t} \left( \alpha x_1^\text{s}+\beta n_2^\text{exit}+ n_0^\text{enter}\right) +C_1^\text{m} \left(\omega x_1^\text{s} n_2^\text{exit}+ x_1^\text{s} n_0^\text{enter}\right),\label{eq:Js}\\
J_1^\text{b}(\mathbf{x}) &= C_2^\text{t}\left(\gamma x_1^\text{b}+ n_2^\text{s}\right) + C_2^\text{m} \left( \rho x_1^\text{b} n_2^\text{s} + \delta x_1^\text{b} n_2^\text{exit} \right).\label{eq:Jb}
\end{align}

In the cost model, let \( C_i^\text{t} \) denote the unit traversing cost for Lane \( i \), where \( i \in \{1, 2\} \). After the strategic decisions of through vehicles, Lane 1 is shared by steadfast vehicles, exiting vehicles, and entering vehicles, while Lane 2 is occupied by bypassing vehicles and vehicles always traveling along Lane 2. The weight parameters \( \alpha, \beta, \gamma \) are assumed to be positive, representing the relative impact on the cost compared to a standard traversing behavior. If a parameter value exceeds 1, it indicates a higher cost relative to normal traversing, possibly due to lower speeds or increased discomfort. Conversely, values less than 1 suggest a lower impact, while a value of 1 indicates a neutral, baseline cost. Similarly, let \( C_i^\text{m} \) represent the unit merging cost for Lane \( i \), with \( \omega, \rho, \delta \) denoting the additional effort or discomfort associated with conflicting merging movements. A higher value of \( \delta \) implies a more challenging merging process, reflecting the increased difficulty compared to standard merging behavior. Then let the cost coefficient vector as \(\mathbf{C} := (C_i^\text{t}, C_i^\text{m}, \alpha, \beta, \omega, \gamma,\rho, \delta \, | \, i \in \{1, 2\})\), which includes all parameters that need be calibrated prior to the deployment on a new weaving ramp.

\begin{remark}[Application Scenarios of Our Model]
    The proposed cost functions, which are proportional to flow rates, are particularly well-suited for modeling uncongested scenarios where traffic flows smoothly. However, they may not fully capture the dynamics of congested conditions, such as queue formation, and thus are less applicable in oversaturated environments.
\end{remark}

\subsection{Existence and Uniqueness of HDVs' Wardrop Equilibrium}
\noindent The aggregate lane choice equilibrium implies that each vehicle selects the option that minimizes its own cost, resulting in one of three outcomes: (1) all vehicles remain steadfast, (2) all vehicles bypass, or (3) a mixture of both behaviors if the costs $J_1^\text{s}$ and $J_1^\text{b}$ are equal. We now establish the existence and uniqueness of this equilibrium, showing how HDVs' equilibrium strategies vary across different exogenous conditions. 

\begin{theorem}[Existence and Uniqueness]\label{thm:uniq}
For any given weaving ramp configuration \( G = (\mathbf{N}, \mathbf{C}) \), the equilibrium flow distribution vector \( \mathbf{x} \) defined in Definition~\ref{def:wdp_basic} always exists and is unique.
\end{theorem}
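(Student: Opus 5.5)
Since $p=0$, the flow distribution reduces to the single scalar $x := x_1^\text{b}\in[0,1]$ with $x_1^\text{s}=1-x$. Define the cost difference
\begin{equation*}
D(x) := J_1^\text{s}(\mathbf{x}) - J_1^\text{b}(\mathbf{x}),
\end{equation*}
where \eqref{eq:Js} and \eqref{eq:Jb} are evaluated at $x_1^\text{s}=1-x$ and $x_1^\text{b}=x$. Definition~\ref{def:wdp_basic} is then equivalent to a scalar complementarity condition. The point $x=0$ is an equilibrium iff $D(0)\le 0$, the point $x=1$ is an equilibrium iff $D(1)\ge 0$, and an interior $x\in(0,1)$ is an equilibrium iff $D(x)=0$. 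We will show that $D$ is affine and strictly decreasing, and read off existence and uniqueness from this.

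\textbf{Step 1 (monotonicity).} Substituting gives
\begin{equation*}
D'(x) = -C_1^\text{t}\alpha - C_1^\text{m}(\omega n_2^\text{exit} + n_0^\text{enter}) - C_2^\text{t}\gamma - C_2^\text{m}(\rho n_2^\text{s} + \delta n_2^\text{exit}),
\end{equation*}
which is a constant. We assume, as the cost-model discussion implies, that the unit costs $C_i^\text{t}, C_i^\text{m}$ are positive and that $\alpha,\beta,\gamma,\omega,\rho,\delta$ are positive (the text states this explicitly for $\alpha,\beta,\gamma$). With $\mathbf{n}\ge 0$, every term above is then nonpositive and $C_1^\text{t}\alpha+C_2^\text{t}\gamma>0$. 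Hence $D'<0$, so $D$ is a strictly decreasing affine function on $[0,1]$.

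\textbf{Step 2 (case split for existence).}
\begin{itemize}
\item If $D(0)\le 0$, then $x=0$ satisfies both inequalities: \eqref{eq:eq_def_a} holds because $D(0)\le0$, and \eqref{eq:eq_def_b} holds trivially because $x_1^\text{b}=0$.
\item If $D(1)\ge 0$, then $x=1$ satisfies both inequalities by the symmetric argument.
\item Otherwise $D(0)>0>D(1)$. By continuity (the intermediate value theorem), or simply by solving the linear equation, there is a root $x^*\in(0,1)$ with $D(x^*)=0$, and both inequalities hold at $x^*$ with equality.
\end{itemize}
Because $D$ is strictly decreasing, the events $D(0)\le0$ and $D(1)\ge0$ cannot both occur. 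Together with the third case, the three cases therefore partition the parameter space.

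\textbf{Step 3 (uniqueness).} Suppose there were two equilibria $x<y$. Strict decrease gives $D(x)>D(y)$. The equilibrium conditions require $D(x)\ge 0$ whenever $x<1$, because $x_1^\text{s}>0$ forces \eqref{eq:eq_def_a}. They likewise require $D(y)\le 0$ whenever $y>0$, because $x_1^\text{b}>0$ forces \eqref{eq:eq_def_b}. Since $x<y$, we have $x<1$ and $y>0$. Combining,
\begin{equation*}
0 \le D(y) < D(x) \le 0
\end{equation*}
is impossible. This already needs $D(y)\ge 0$; to get it, observe that $D(x)\ge 0$ together with strict decrease gives $D(z)>0$ for all $z<x$, and so on. More cleanly: if $D(x)>0$, then \eqref{eq:eq_def_b} forces $x=0$ and every $y>0$ violates the conditions. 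If $D(x)=0$, then strict decrease gives $D(y)<0$, which forces $y=0$ by \eqref{eq:eq_def_a} unless $y=1$, and that contradicts $D(y)<0$ with $y_1^\text{s}=0$ not binding. I expect this bookkeeping of boundary cases to be the only delicate part. I would handle it by casing on the sign of $D$ at the candidate point, and I would state the positivity assumptions on $\mathbf{C}$ explicitly, since strict monotonicity is exactly where they are needed.
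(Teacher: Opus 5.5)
Your approach is the same as the paper's. The paper also notes that $J_1^\text{s}$ is increasing affine and $J_1^\text{b}$ decreasing affine in $x_1^\text{s}$, so the two cross at most once, and then splits into all-bypass, interior-crossing, and all-steadfast cases. Your Steps 1--2 are correct, and they are slightly more careful than the paper. The paper's three cases (strict inequality everywhere, or a crossing strictly inside $(0,1)$) silently omit a crossing exactly at an endpoint, whereas your weak inequalities on $D(0)$ and $D(1)$ cover it. You also state the positivity assumptions on $\mathbf{C}$ explicitly, which the paper leaves implicit.

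Step 3, however, is wrong as written because you have the complementarity implications reversed.
\begin{itemize}
\item With $x_1^\text{s}=1-x$, condition \eqref{eq:eq_def_a} reads $(1-x)D(x)\le 0$. So $x<1$ forces $D(x)\le 0$, not $D(x)\ge 0$.
\item Condition \eqref{eq:eq_def_b} reads $-xD(x)\le 0$. So $x>0$ forces $D(x)\ge 0$, not $D(x)\le 0$.
\item Equivalently, $D>0$ forces $x=1$ through \eqref{eq:eq_def_a}, and $D<0$ forces $x=0$ through \eqref{eq:eq_def_b}.
\end{itemize}
Your ``more cleanly'' patch inherits this reversal. $D(x)>0$ forces $x=1$, not $x=0$. $D(y)<0$ forces $y=0$ through \eqref{eq:eq_def_b}, with no ``unless $y=1$'' exception. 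The patch also never treats the case $D(x)<0$.

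With the signs corrected, your displayed chain is exactly right. For equilibria $x<y$ we have $x<1$, hence $D(x)\le 0$, and $y>0$, hence $D(y)\ge 0$. Strict decrease then gives $0\le D(y)<D(x)\le 0$, a contradiction.

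Simpler still, Step 3 is unnecessary, because your Step 2 characterization already gives uniqueness.
\begin{itemize}
\item If $D(0)\le 0$, strict decrease gives $D<0$ on $(0,1]$. So there is no interior root, $x=1$ fails, and $x=0$ is the only equilibrium.
\item If $D(1)\ge 0$, then $D>0$ on $[0,1)$ by the symmetric argument, and $x=1$ is the only equilibrium.
\item In the remaining case, both endpoints fail and the interior root is unique.
\end{itemize}
Each cell of your partition therefore contains exactly one equilibrium. This closes the proof more rigorously than the paper's remark that ``the cost functions can intersect at most once,'' which does not by itself rule out boundary equilibria.
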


The proof details are provided in Appendix \ref{sec:appendix_hdv_eq}. This proof establishes the existence and uniqueness of the HDV equilibrium, ensuring that a lane-choice equilibrium is guaranteed to exist for any real weaving ramp. With this result, our Wardrop-based lane choice model provides a well-defined baseline for subsequent analysis. 

\subsection{Parameter Calibration and Model Validation}\label{4A}

\noindent In this section, we calibrate the parameters and validate the proposed Wardrop-based lane choice equilibrium model for HDVs. The data used for calibration are obtained from the microscopic traffic simulator SUMO~\citep{SUMO2012}. An overview of the simulated highway weaving ramp scenario is shown in Figure~\ref{fig:geometry}.

\subsubsection{Parameter Calibration}

\noindent To evaluate the performance of our model, we first calibrate the cost coefficient vector \(\mathbf{C}\) using the optimization method detailed in~\citep{mehr2021game,li2019extended}. A broad range of scenarios is selected for data generation to ensure a comprehensive understanding of the model. Specifically, the simulations are conducted under a total flow rate of 1400 vehicles per hour. In the total flow rate, 600 vehicles per hour is allocated among the three neighboring vehicle types $f_2^\text{s}, f_2^\text{exit}, f_0^\text{enter}$. The remaining flow rate of 800 vehicles per hour is set for $f_1^\text{s}$ and $f_1^\text{b}$. This ensures sufficient interaction without triggering excessive congestion that could cause a deadlock in the simulation. We further fix the flow rate of one of the three vehicle types \(f_0^\text{enter}\), \(f_2^\text{exit}\), and \(f_2^\text{s}\) while varying the flow rates of the other two. This results in a total of 415 distinct data points. Each data point captures the equilibrium flow distribution vector \(\mathbf{x} = (x_1^\text{s}, x_1^\text{b})\) for a simulation run with a given flow configuration (\(n_0^\text{enter}\), \(n_2^\text{exit}\), \(n_2^\text{s}\)). To ensure that each simulation run reaches a steady state, we set each simulation duration to 20,000 timesteps, with the length of a single timestep as 1 second. All the vehicles have been successfully loaded into the simulation. 

%\begin{remark}[Steady state is key to obtaining quality data]
%    When calculating the equilibrium flow distribution for each data point, we monitor the ratio of the number of vehicles making different decisions to the total number of through vehicles on Lane 1 accumulating in a time period. The simulation only reaches a steady state if the duration is sufficiently long. Based on our tests, a simulation duration of over 5,000 timesteps is required for the traffic to stabilize. Therefore, we set the duration to 20,000 timesteps, ensuring that the simulation results are as realistic and stable as possible to expose the equilibrium state while maintaining acceptable computational efficiency.
%\end{remark}

Solving the calibration optimization problem\citep{mehr2021game,li2019extended}, which aims to find the best parameters that enable as many data points as possible to satisfy the condition in Definition \ref{def:wdp_basic}, subject to the constraints on unit costs, we obtain:
\begin{align}
    C_1^\text{t} &= C_2^\text{t} = 1, \\
    C_1^\text{m} &= C_2^\text{m} = 1.
\end{align}
We then obtain the calibrated cost coefficients as follows:
\begin{align}
    \alpha &= 1.255, \quad
    \beta = 1.138, \quad
    \omega = 1.000, \\
    \gamma &= 2.384, \quad
    \delta = 3.094, \quad
    \rho = 1.000.
\end{align}

The above parameters reflect the relative impact of different vehicle interactions on the overall cost. Specifically, \(\alpha\) and \(\beta\) are above 1 and $\omega =1$, which suggests that the presence of steadfast vehicles in Lane 1 introduces additional discomfort or delay, possibly due to the increased conflicts and the need to brake and slow down. Further, lane-changing maneuvers of exiting vehicles disrupt the flow in Lane 1, increasing travel time or creating a need for additional adjustments by steadfast vehicles, while entering vehicles merge into Lane 1 without significantly disrupting the traffic flow, possibly due to the adequate space between the ramps leading to a neutral impact on steadfast vehicles. On lane 2, \(\gamma\) is significantly greater than 1, suggesting that the traversing cost for bypassing vehicles in Lane 2 is higher compared to the baseline traversing cost. This could be due to the increased discomfort of needing to maneuver around other vehicles, reflecting the high cost of making a bypass decision. And the higher value of \(\delta\) suggests that conflicting merging movements significantly increase the cost by over three times than the standard merging cost. This reflects the increased difficulty and safety risks associated with such complex maneuvers, making it a significant factor in the decision-making process for through vehicles facing the weaving zone.

\begin{remark}[Parameter values can reflect intrinsic characteristics of the weaving ramp]
    Although parameter magnitudes are determined through calibration, they are influenced by factors such as ramp geometry, speed limits, and other intrinsic features of the weaving zone. Hence, the calibrated values should be interpreted within the context of the specific scenario and may vary across different conditions. The effects of speed limits, vehicle gap, and driver aggressiveness on these parameters are analyzed in Appendix~\ref{sec:appendix_model_analysis}.
\end{remark}

%\subsection{Model Validation}\label{sec:model_validation}

\subsubsection{Model Validation}

\noindent While validating our model, we selected a broad range of data to demonstrate its predictive accuracy and robustness. Specifically, we used 320 distinct data points, covering the range of \( n_0^\text{enter},n_2^\text{exit},n_2^\text{s} \in [0.2, 0.8] \), separate from the data points used for calibration, while ensuring the total demand across these three vehicles types is fixed at 600 vehicles per hour, with no changes made to other parameters.

Fig.~\ref{fig:validation} presents the validation results, illustrating that the predictions of our lane choice model align closely with the observed simulation outcomes. Subfigures (a) and (c) indicate that, when the ratio of entering vehicles is fixed, increasing the proportion of exiting vehicles leads to more frequent interactions on Lane~1 and consequently more bypassing behavior by through vehicles, as they shift to Lane~2 to avoid congestion and reduced speeds. Subfigures (b) and (d) further illustrate that, when the normalized flow of through vehicles on Lane 2 is fixed, a higher entering ratio still results in more bypassing on Lane~1 than a higher exiting ratio. This finding indicates that entering vehicles exert a stronger influence on the bypassing behavior of through vehicles on Lane 1 than exiting vehicles. The phenomenon also corresponds to the high value of $\delta$ in the cost function, as exiting vehicles substantially increase the bypassing cost.

It is noteworthy that, despite the linear structure of our cost models, the lane choice model can still capture the nonlinear lane choice behavior observed in Fig.~\ref{fig:validation} due to the inherent nonlinearity of the inequalities governing the equilibrium conditions. As a result, our model demonstrates significant flexibility and minimal requirements for computation and calibration while satisfying accuracy.

\section{Is the Selfish Behavior Socially Optimal?}
\begin{figure*}[h!]
\centering
\begin{subfigure}{0.45\textwidth}
    \includegraphics[width=\textwidth]{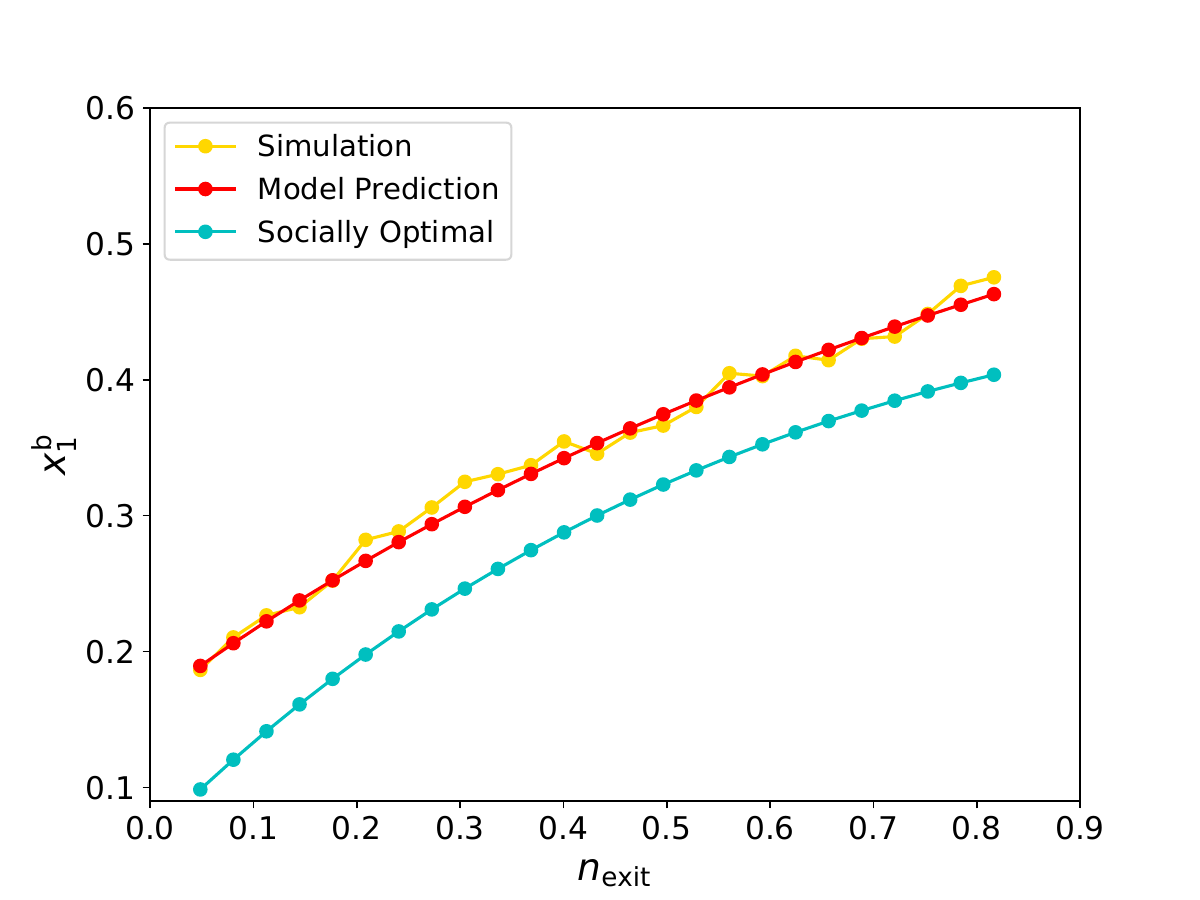}
    \caption{Validation result with $n_\text{enter}=0.1667$}
    \label{fig:first}
\end{subfigure}
\hfill
\begin{subfigure}{0.45\textwidth}
    \includegraphics[width=\textwidth]{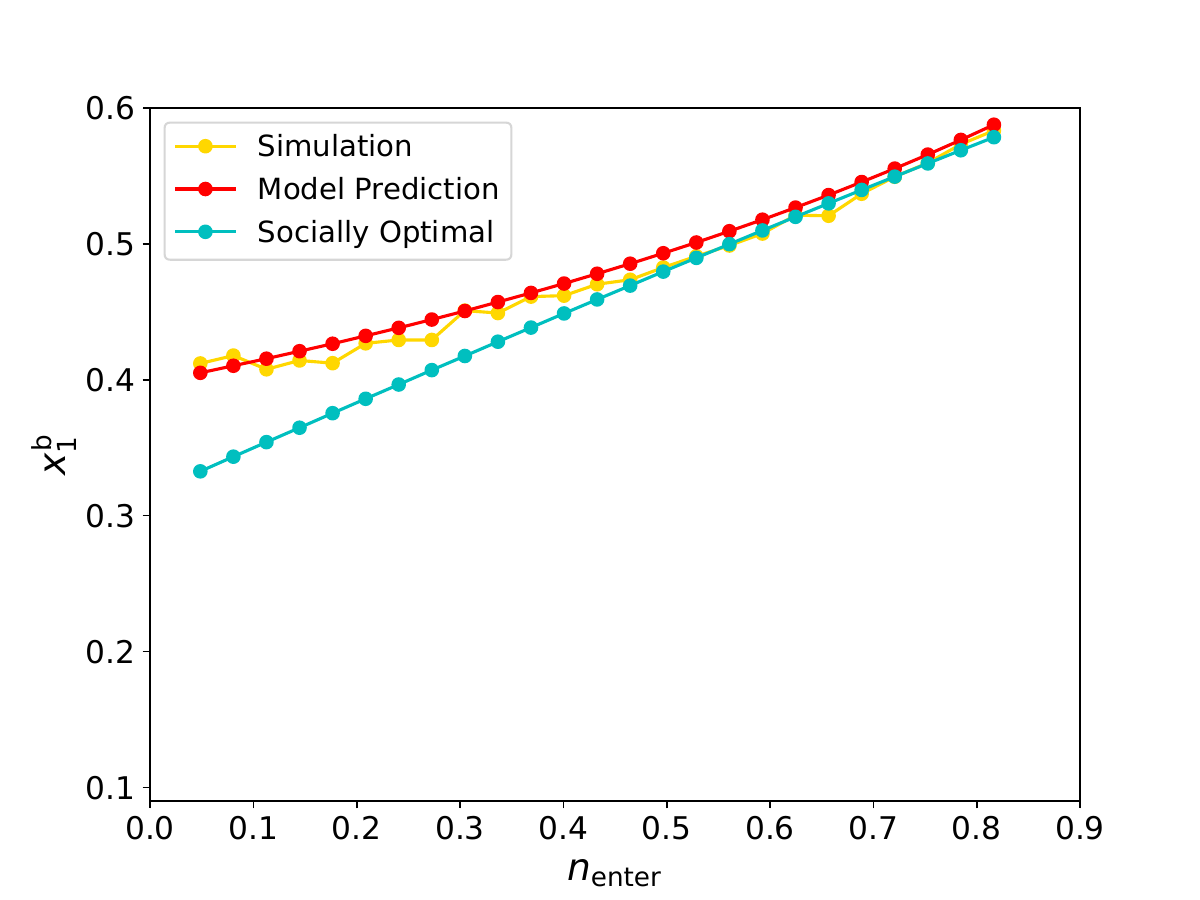}
    \caption{Validation result with $n_\text{2}=0.1667$}
    \label{fig:second}
\end{subfigure}

\vspace{0.5cm} % Adjust vertical space between rows if needed

\begin{subfigure}{0.45\textwidth}
    \includegraphics[width=\textwidth]{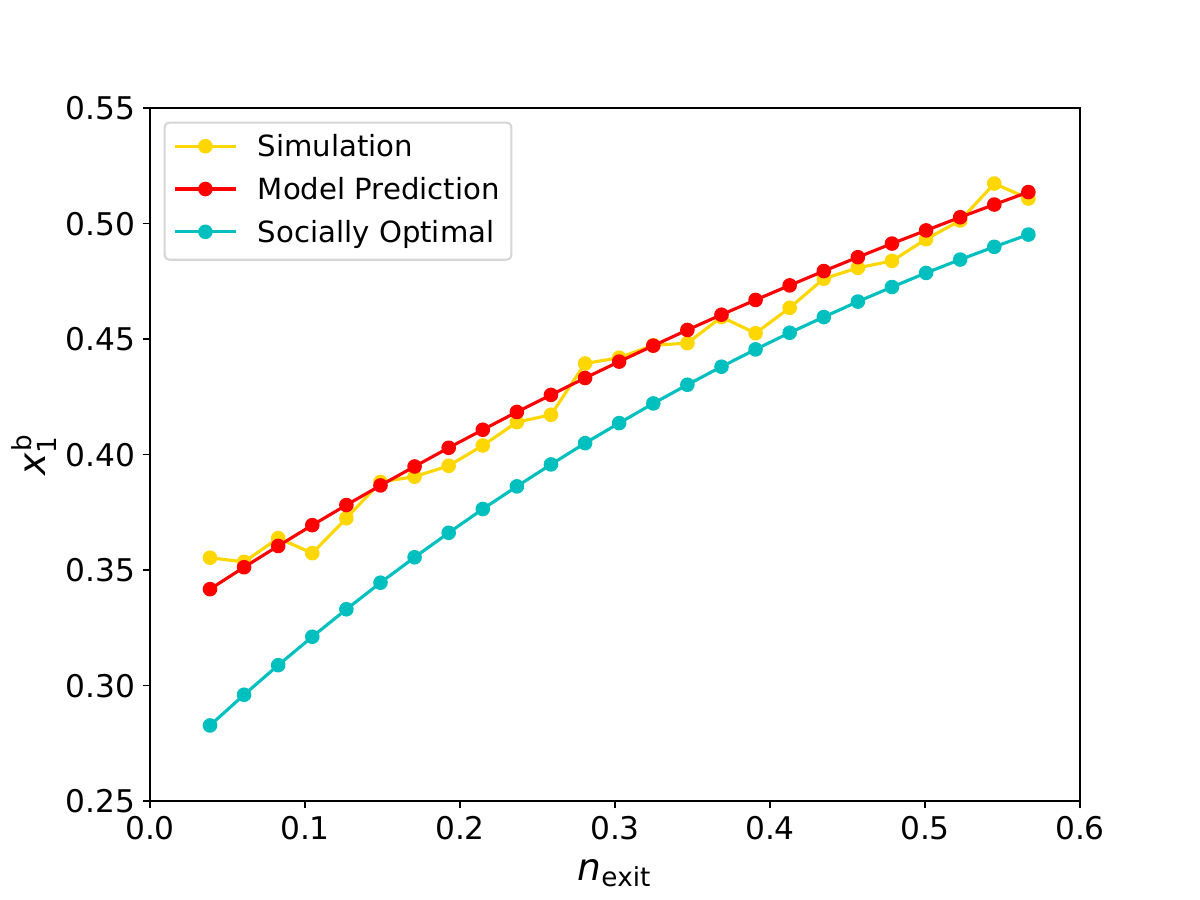}
    \caption{Validation result with $n_\text{enter}=0.4167$}
    \label{fig:third}
\end{subfigure}
\hfill
\begin{subfigure}{0.45\textwidth}
    \includegraphics[width=\textwidth]{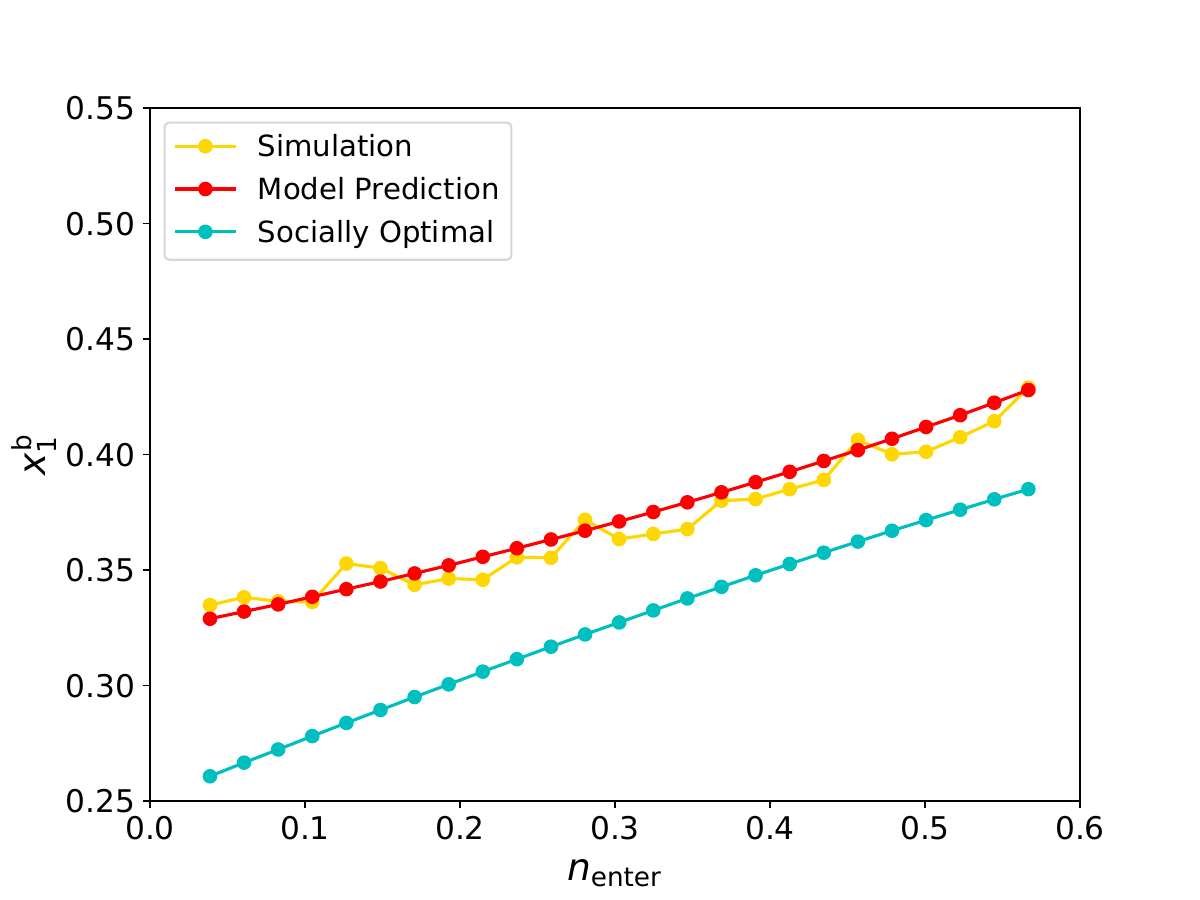}
    \caption{Validation result with $n_\text{2}=0.4167$}
    \label{fig:fourth}
\end{subfigure}
\caption{Validation results demonstrate strong agreement between the proposed lane-choice model and observed simulation outcomes. Despite the underlying cost model is linear, the lane choice model successfully captures nonlinear behavior patterns due to the inherent nonlinearity of the equilibrium conditions. Moreover, the validation consistently reveals a gap between the equilibrium outcomes of our lane-choice model and the socially optimal benchmark across all tested configurations.}
\label{fig:validation}
\end{figure*}

\noindent Building on the Wardrop-based model introduced above, we further compare the user equilibrium (UE) and the socially optimal (SO) outcomes in this section. We first define the social cost $J_\text{soc}$ as the total delay experienced by all vehicles within the weaving ramp, which is calculated as the sum of individual behavioral costs weighted by their corresponding flow magnitudes:
\begin{align} \label{eq:social_cost}
    J_\text{soc} (\textbf{x}) &= x_{1}^\text{s} J_{1}^\text{s} + x_{1}^\text{b} J_{1}^\text{b} + n_{2}^\text{s} J_{2}^\text{s} + n_{2}^\text{exit} J_{2}^\text{exit} + n_{0}^\text{enter} J_{0}^\text{enter},
\end{align}
where $J_1^\text{s}, J_1^\text{b}, J_2^\text{s},J_2^\text{exit}, J_0^\text{enter}$ denote the cost functions corresponding to steadfast behavior on Lane 1, bypassing behavior on Lane 1, steadfast behavior on Lane 2, exiting from Lane 2, and entering from Lane 1, respectively. The detailed description of $J_2^\text{s},J_2^\text{exit}$, and $J_0^\text{enter}$ is provided in Section \ref{subsec:updated_notation}. 

In Equation \eqref{eq:social_cost}, we already know the cost function $J_1^\text{s}$ and $J_1^\text{b}$ from Equation \eqref{eq:Js} and \eqref{eq:Jb}. By following the formulation logic and structure form for these functions, we extend the cost functions for entering, exiting and through behaviors as follows:
\begin{align}
  J_{2}^\text{s} (\textbf{x}) &= C_2^\text{t}(\gamma x_1^\text{b}+n_2^\text{s}) + C_2^\text{m}x_1^\text{b}n_2^\text{s}, \label{eq:J_2^s}\\
  J_{2}^\text{exit} (\textbf{x}) &= C_1^\text{t}(\alpha x_1^\text{s}+\beta n_2^\text{exit} + \omega n_0^\text{enter}) + C_1^\text{m}(x_1^\text{s}n_0^\text{enter} + x_1^\text{s}n_2^\text{exit}) + C_2^\text{m} \delta x_1^\text{b}n_2^\text{exit} ,\label{eq:J_2^exit} \\
  J_{0}^\text{enter} (\textbf{x}) &= C_1^\text{t}(\alpha x_1^\text{s}+\beta n_2^\text{exit} + \omega n_0^\text{enter}) + C_1^\text{m} (x_1^\text{s}n_0^\text{enter} + x_1^\text{s}n_2^\text{exit}),\label{eq:J_0^enter}
\end{align}
where $C_i^\text{t}$ denotes the unit traversing cost on Lane $i\in [1,2]$, $C_i^\text{m}$ denotes the unit merging cost on Lane $i\in [1,2]$, and the parameters $\alpha, \beta, \gamma, \omega, \delta$ capture the relative influence of different interactions on the overall cost. $J_2^\text{s}$ is determined by congestion within the lane and the merging interactions caused by bypassing vehicles entering it. $J_2^\text{exit}$ captures congestion arising from the traversing flow of steadfast, entering, and exiting vehicles, together with the merging difficulties as exiting vehicles sequentially merge into Lanes 1 and 0. $J_0^\text{enter}$ depends on congestion in Lane 1 and on the merging interactions between entering and exiting vehicles.

Equation~\eqref{eq:social_cost} is then used to compute the socially optimal solution, illustrated as the “socially optimal” line in Figure~\ref{fig:validation}. The numerical results show that HDVs’ selfish decisions consistently deviate from the socially optimal outcomes. Subfigures (a) and (d) demonstrate a persistent gap between the Wardrop-based model predictions and the social optimum across all variations of exogenous normalized flows under fixed conditions $n_0^\text{enter}=0.1667$ and $n_2^\text{s}=0.4167$. Subfigures (b) and (c) reveal a similar pattern, when $n_0^\text{enter}$ is small, the deviation between UE and SO is pronounced, but this gap gradually narrows as $n_0^\text{enter}$ increases. 

%From a theoretical perspective, following~\citet{beckmann1956studies}, Wardrop equilibrium can be formulated as the solution to a convex optimization problem, which shows equilibrium flows are generically inefficient relative to the social optimum.~\citet{roughgarden2002selfish} introduced that whenever congestion effects are present, selfish decisions are always yield higher system cost than the social optimum~\citep{roughgarden2002price,correa2008geometric}. These theoretical insights confirm that selfish HDV behavior is always leads to suboptimal system performance. 

Thus, achieving socially optimal outcomes is inherently difficult in HDV-only environments. To enhance traffic efficiency and mitigate this inefficiency, we introduce two types of CAVs, which are \textbf{dedicated altruistic CAVs} and \textbf{relaxed altruistic CAVs} as control mechanisms designed to improve overall system performance.

%Figure.~\ref{fig:validation} presents a comparison between the Wardrop-based model predictions, the socially optimal lane choices, and the SUMO simulation results. The numerical results illustrate that HDVs' selfish decisions consistently deviate from the socially optimal outcomes. Subfigures (a) and (d) demonstrate a persistent gap between the Wardrop-based model prediction and the social optimum across all variations of exogenous normalized flows, under the fixed conditions $n_0^\text{enter}=0.1667$ and $n_2^\text{s}=0.4167$. Subfigures (b) and (c) reveal a similar pattern, where a noticeable gap exists when $n_0^\text{enter}$ is small, and this gap gradually narrows as $n_0^\text{enter}$ increases. 

\section{AVs as Dedicated Altruistic Decision Makers} \label{sec:dictated_CAV}

\noindent The Wardrop-social optimality gap identified in the previous section highlights an opportunity for system-level intervention. To mitigate this inefficiency, we introduce dedicated altruistic CAVs, a class of AVs that follow externally mandated, system-oriented objectives. Such objectives can be communicated by a central authority through low-latency V2X links~\citep{talebpour2016influence}. Unlike HDVs, whose behavior is driven by individual self-interest, dedicated altruistic CAVs' strategic behavior is governed by prescribed system-level goals. This unique characteristic makes them a promising mechanism for traffic management and optimization via strategic behavioral control. In this section, we develop a Stackelberg-Wardrop framework to capture the hierarchical interactions between CAVs and HDVs, and further investigate how system-level delays respond to varying CAV penetration rates through numerical experiments.

\subsection{Updated Notations for Dedicated Altruistic CAVs}\label{subsec:updated_notation}

\iffalse
\begin{align}
J_1^\text{s}(\mathbf{x}) &= C_1^\text{t} \left( \alpha x_1^\text{s}+\beta n_2^\text{exit}+ n_0^\text{enter}\right) +C_1^\text{m} \left(\omega x_1^\text{s} n_2^\text{exit}+ x_1^\text{s} n_0^\text{enter}\right),\label{eq:Js}\\
J_1^\text{b}(\mathbf{x}) &= C_2^\text{t}\left(\gamma x_1^\text{b}+ n_2^\text{s}\right) + C_2^\text{m} \left( \rho x_1^\text{b} n_2^\text{s} + \delta x_1^\text{b} n_2^\text{exit} \right).\label{eq:Jb}
\end{align}
\fi

\noindent For simplicity of reference and analysis, we rewrite Equation~\eqref{eq:Js},~\eqref{eq:Jb},~\eqref{eq:J_2^s},~\eqref{eq:J_2^exit}, and~\eqref{eq:J_0^enter} as follows, where $K_i^*$ ($i \in [0,1,2]$, $* \in \{\text{s}, \text{b}, \text{exit}, \text{enter}\}$) aggregates all terms associated with variables $x_1^\text{s}$ and $x_1^\text{b}$, $B_i^*$ ($i \in [0,1,2]$, $* \in \{\text{s}, \text{b}, \text{exit}, \text{enter}\}$) collects the remaining constant terms:
\begin{align}
  J_{1}^\text{s} (\textbf{x}) &= K_{1}^\text{s} x_{1}^\text{s} + B_{1}^\text{s}, \label{eq:J_1^s_x_1^s}\\
  J_{1}^\text{b} (\textbf{x}) &= K_{1}^\text{b} x_{1}^\text{b} + B_{1}^\text{b}, \label{eq:J_1^b_x_1^b}\\
  J_{2}^\text{s} (\textbf{x}) &= K_{2}^\text{s} x_{1}^\text{b} + B_{2}^\text{s}, \\
  J_{2}^\text{exit} (\textbf{x}) &= K_{2}^\text{exit} x_{1}^\text{s} + B_{2}^\text{exit}, \\
  J_{0}^\text{enter} (\textbf{x}) &= K_{0}^\text{enter} x_{1}^\text{s} + B_{0}^\text{enter},
\end{align}

%As traffic volume on a lane increases, the behavioral cost of entering that lane rises due to intensified merging and traversing conflicts. Accordingly, all congestion-sensitivity coefficients $K_\text{1}^\text{s}, K_\text{1}^\text{b}, K_\text{1}^\text{exit}$, $ K_2^\text{s}, K_2^\text{exit}$ and $K_0^\text{enter}$ are non-negative, since they scale with the flow variables $x_1^\text{s}$ and $x_1^\text{b}$. In contrast, the constants $B_1^\text{s}, B_1^\text{b}, B_1^\text{exit}, B_2^\text{s}, B_2^\text{exit},$ and $B_0^\text{enter}$ are also non-negative and capture fixed costs determined solely by the geometric configuration of the weaving ramp.

To explicitly characterize the control mechanism of dedicated altruistic CAVs, we further introduce $q_\text{s}$ as the CAV steadfast proportion, with the remaining fraction $1-q_\text{s}$ corresponding to bypassing proportion. Here, $q_\text{s}$ is controlled by the central command. When $q_\text{s} = 1$, all CAVs remain steadfast; when $q_\text{s}=0$, all adopt bypassing. For intermediate values of $q_\text{s} \in (0,1)$, a mixed strategy arises in which some CAVs remains steadfast while others bypass. Accordingly, the CAV flow proportions are given by:
\begin{align}
  x_{1,\text{CAV}}^\text{s} &= p q_\text{s}, \label{eq:choice_CAV_s} \\
  x_{1,\text{CAV}}^\text{b} &= p (1-q_\text{s}), 
\end{align}

Each behavioral cost $J$ is a function of the CAV penetration rate $p$ and the CAV steadfast proportion $q_\text{s}$. This formulation implies that the system-level delay can be directly influenced through the strategic control of through CAVs on Lane 1. In particular, adjusting $q_\text{s}$ for a given penetration rate $p$ provides a lever for mitigating congestion and improving social efficiency.

\subsection{The Stackelberg-Wardrop Framework}

\noindent The interaction between CAVs and HDVs can be naturally represented by a \textbf{leader-follower structure}, reflecting their distinct behavioral characteristics. CAVs, with access to global system information through centralized controllers, can make informed decisions in advance, whereas HDVs rely solely on local observations of surrounding vehicles and respond based on observed behaviors. 

\textbf{To explicitly capture this leader–follower dynamic, a bilevel optimization framework is more appropriate than a single-level equilibrium model.} Specifically, we adopt a Stackelberg game formulation~\citep{basar1999dynamic}, in which the leader first commits to a strategy and the follower subsequently responds optimally given the leader's choice. This hierarchical structure aligns closely with the interaction logic between CAVs and HDVs in mixed-autonomy traffic systems. Within this framework, dedicated altruistic CAVs act as leaders, aiming to minimize the system's social cost (total delay), while selfish HDVs act as followers, individually seeking to minimize their own delays. The collective responses of HDVs are captured by the Wardrop equilibrium conditions previously introduced in Definition~\ref{def:wdp_basic}.

Thus, we propose a Stackelberg-Wardrop bilevel framework to model strategic lane choice behavior in mixed autonomy. In this framework, the upper-level is formulated as a Stackelberg game, where CAVs act as leaders and HDVs as followers. The lower-level follows the Wardrop equilibrium principle, capturing the adaptive responses of HDVs to the strategies selected by CAVs. The Stackelberg-Wardrop framework is defined as follows: 

\begin{definition}[Stackelberg--Wardrop Strategic Lane Choice Equilibrium]
\label{def:stack_opt}
For a given weaving ramp configuration $G=(\mathbf{N},\mathbf{C})$ and CAV penetration rate $p\in[0,1]$, 
let the Stackelberg--Wardrop flow vector be 
$\mathbf{x}:=\big(x^{\mathrm{s}}_{1,\mathrm{CAV}}=pq_\text{s},\,x^{\mathrm{b}}_{1,\mathrm{CAV}}=p(1-q_\text{s}),\,x^{\mathrm{s}}_{1,\mathrm{HDV}},\,x^{\mathrm{b}}_{1,\mathrm{HDV}}\big)$.
A flow vector $\mathbf{x}^{*}$ is a Stackelberg--Wardrop equilibrium if and only if the following bilevel
conditions hold:

\begin{itemize}
\item \textit{Upper level (CAV optimization).}
\begin{align}
q_s^* \in \arg\min_{q_s \in [0,1]} 
J_{\mathrm{soc}}\!\big(\mathbf{x}^*\big). \label{eq:upper_xb}
\end{align}

\item \textit{Lower level (HDV lane-choice equilibrium).}
\begin{subequations}\label{eq:wardrop_lower}
\begin{align}
x^{\mathrm{s}}_{1,\mathrm{HDV}}\!\left(J^{\mathrm{s}}_{1}(\mathbf{x^*})-J^{\mathrm{b}}_{1}(\mathbf{x^*})\right) &\le 0, \label{eq:equilibrium1}\\
x^{\mathrm{b}}_{1,\mathrm{HDV}}\!\left(J^{\mathrm{b}}_{1}(\mathbf{x^*})-J^{\mathrm{s}}_{1}(\mathbf{x^*})\right) &\le 0, \label{eq:equilibrium2}.
\end{align}
\end{subequations}
\end{itemize}
\end{definition}

%This definition formalizes the Stackelberg-Wardrop equilibrium as a bilevel structure that integrates centralized CAV control with decentralized HDV responses. The upper-level problem minimizes the social cost with respect to the CAV steadfast proportion $q_\text{s}$, while the objective also depends on HDV behavior, which is endogenously determined by the HDV lane choice equilibrium. This results in a Mathematical Program with Equilibrium Constraints (MPEC), where the lower-level is formulated as a Mixed Complementary Problem (MCP) ~\citep{luo1996mathematical,outrata2013nonsmooth}. The interaction between HDVs and CAVs within this framework can be interpreted as an iterative adjustment process converging to equilibrium. CAVs first determine a tentative allocation between steadfast and bypassing strategies. HDVs then respond to these decisions by adjusting their own lane choices, which alters the overall system state. In response, CAVs update their strategy, and this process continues until neither CAVs nor HDVs have an incentive to deviate. 

This definition formalizes the Stackelberg–Wardrop equilibrium as a bilevel structure integrating centralized CAV control with decentralized HDV responses. The upper-level problem minimizes the social cost with respect to the CAV steadfast proportion $q_\text{s}$, while the objective implicitly depends on HDV behavior determined by the lower-level lane choice equilibrium. This leads to a Mathematical Program with Equilibrium Constraints (MPEC), where the lower level is formulated as a Mixed Complementarity Problem (MCP)~\citep{luo1996mathematical,outrata2013nonsmooth}. The interaction between HDVs and CAVs can be viewed as an iterative process: CAVs choose a tentative strategy allocation, HDVs respond by adjusting their lane choices, and the system iterates until neither side has an incentive to deviate.

\begin{remark}[Residual Reformulation for Efficient Computation]
  To reduce the computational burden of repeatedly solving the lower-level equilibrium \eqref{eq:equilibrium1} and \eqref{eq:equilibrium2}, we reformulate the complementarity conditions into continuous residual functions of the nonlinear complementarity problem (NCP), which allows the Sequential Quadratic Programming (SQP) algorithm to update upper-level variables without explicitly resolving the KKT system at each iteration. Specifically, the lane-choice equilibrium is enforced by requiring $h_1(x_{1,\text{HDV}}^\text{s})=0$ and $h_2(x_{1,\text{HDV}}^\text{b})=0$, where $h_1(x_{1,\text{HDV}}^\text{s})=x_{1,\text{HDV}}^\text{s} \cdot \max(0,J_1^\text{s}-J_1^\text{b})$, and $h_2(x_{1,\text{HDV}}^\text{b})=x_{1,\text{HDV}}^\text{b} \cdot \max(0,J_1^\text{b}-J_1^\text{s})$. Hence, by converting the bilevel problem into a tractable single-level nonlinear program, the proposed residual reformulation enables efficient computation without repeatedly solving the lower-level equilibrium problem.
\end{remark}

\subsection{Impact of Dedicated Altruistic CAVs}\label{subsec:HDV-CAV}
%This phase analysis characterizes the interaction between CAVs and HDVs and highlights how the increasing penetration of CAVs gradually influences HDV behavior. When the CAV penetration rate is very low, CAVs have limited influence on the traffic system and are often compelled to bypass due to the dominant behavior patterns of HDVs. As the penetration rate increases, CAVs begin to exhibit a behavioral shift, transitioning from bypassing to steadfast strategies. Within a certain range of penetration rates, all CAVs adopt the steadfast behavior, contributing to a more stable flow configuration. However, when the penetration rate becomes sufficiently high, further changes in HDV behavior are triggered, potentially altering the system dynamics once again.

%To formalize the behavioral transitions, we introduce three critical penetration thresholds: \textbf{Influence threshold $p_1$}, the minimum CAV penetration rate at which CAVs begin to alter HDV lane-choice decisions. \textbf{Efficiency threshold $p_2$}, the penetration level at which the social delay $J_\text{soc}$ first starts to decrease. \textbf{Saturation threshold $p_3$}, the penetration level beyond which further CAV deployment no longer yields additional reductions in $J_{\text{soc}}$. In general, the set {$p_1,p_2,p_3$} partitions the penetration spectrum $p \in (0,1]$ into distinct interaction regimes, each reflecting the evolving role of CAVs in mixed-autonomy traffic.

\noindent For the phase analysis, we denote two steadfast proportion $x_1^\text{s}$ benchmarks as the \textbf{pure HDV equilibrium} at HDV-only environment ($p=0$), denoted by $\Phi$, and the \textbf{socially optimal equilibrium}, where total cost reaches the theoretical minimum, denoted by $\Gamma$, respectively. Our focus is on configurations satisfying $\Phi<\Gamma$, which implies that under purely selfish behavior, fewer vehicles choose the steadfast strategy than in the socially optimal case (echoed by results in Section~\ref{fig:validation}). 
%Furthermore, we define the steadfast proportion $x_1^\text{s}$ benchmark at \textbf{pure CAV equilibrium} when $p=1$ as $\Psi$, we have $\Psi=\Gamma$ in this section, indicating that dedicated altruistic CAVs are fully altruism. 
The admissible set of configurations for the subsequent analysis is therefore refined as follows:
\begin{align}\label{eq:config}
  \mathcal{G}=\{G:0<\Phi<\Gamma<1\},
\end{align}

To formalize behavioral transitions, we introduce two critical penetration thresholds: \textbf{Efficiency threshold $p_1$}, defined as the minimum penetration rate at which the social delay $J_\text{soc}$ begins to decline. \textbf{Saturation threshold $p_2$}, defined as the penetration rate beyond which additional CAV deployment no longer produces further reductions in $J_{\text{soc}}$. For reference, we define the social delay in the absence of CAVs (i.e., $p=0$) as the baseline social delay $J_\text{soc}^\text{ref}$.

\begin{theorem}[Impact of Dedicated Altruistic CAVs on Social Delay] \label{thm:CAV_penetration}
For any given weaving ramp configuration \( G = (\mathbf{N}, \mathbf{C}) \in \mathcal{G} \), with CAV penetration rate $p$,
\begin{itemize}
  \item The social delay remains invariant under dedicated altruistic CAVs, i.e., $J_\text{soc}(\mathbf{x})=J_\text{soc}^\text{ref}$, if and only if $p \in \mathcal{A}_1$, where $\mathcal{A}_1:=[0,p_1]$,
  \item The social delay is decreased by dedicated altruistic CAVs, i.e., $J_\text{soc}(\mathbf{x})<J_\text{soc}^\text{ref}$, if and only if $p \in \mathcal{A}_2$, where $\mathcal{A}_2:=(p_1,p_2]$,
  \item The social delay is optimized by dedicated altruistic CAVs, i.e., $J_\text{soc}(\mathbf{x})=J_\text{soc}^\text{opt}$, if and only if $p \in \mathcal{A}_3$, where $\mathcal{A}_3:=(p_2,1]$,
\end{itemize}
where the two CAV penetration thresholds can be calculated as:
\begin{align}
    p_1&= \frac{K_1^\text{b}+B_1^\text{b}-B_1^\text{s}}{K_1^\text{s}+K_1^\text{b}},\\
    p_2&=\frac{2K_1^\text{b} + B_1^\text{b} - B_1^\text{s} -n_2^\text{exit}K_2^\text{exit} - n_0^\text{enter}K_0^\text{enter}+ n_2^\text{s} K_2^\text{s}}{2(K_1^\text{s}+K_1^\text{b})},
    %p_2&=\frac{2K_2-K_3+K_4}{2(K_1+K_2)}.
\end{align}
\end{theorem}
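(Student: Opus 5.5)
The plan is to collapse the whole problem onto the single aggregate variable $x:=x_1^\text{s}$, using $x_1^\text{b}=1-x$. Every cost in Section~\ref{subsec:updated_notation} is affine in $x$, so $J_\text{soc}$ in \eqref{eq:social_cost} becomes a quadratic in $x$. Its quadratic coefficient comes only from $xJ_1^\text{s}+(1-x)J_1^\text{b}$ and equals $K_1^\text{s}+K_1^\text{b}>0$; the remaining terms $n_2^\text{s}J_2^\text{s}$, $n_2^\text{exit}J_2^\text{exit}$ and $n_0^\text{enter}J_0^\text{enter}$ are affine in $x$. Hence $J_\text{soc}$ is strictly convex in $x$.

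Setting $dJ_\text{soc}/dx=0$ gives
\[
2K_1^\text{s}x+B_1^\text{s}-2K_1^\text{b}(1-x)-B_1^\text{b}+n_2^\text{exit}K_2^\text{exit}+n_0^\text{enter}K_0^\text{enter}-n_2^\text{s}K_2^\text{s}=0 .
\]
Its root is exactly the stated $p_2$, so $p_2=\Gamma$. Likewise, the HDV-only equilibrium of Theorem~\ref{thm:uniq} with $0<\Phi<1$ is interior, so it solves $K_1^\text{s}\Phi+B_1^\text{s}=K_1^\text{b}(1-\Phi)+B_1^\text{b}$. This gives $\Phi$ equal to the stated $p_1$. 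The assumption $G\in\mathcal{G}$ then yields $0<p_1<p_2<1$.

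Next I characterize the lower level for a fixed leader action $(p,q_\text{s})$. Write $c:=pq_\text{s}\in[0,p]$ for the CAV steadfast mass. Because $J_1^\text{s}-J_1^\text{b}$ is strictly increasing in $x$, the argument of Theorem~\ref{thm:uniq} shows the HDV equilibrium is unique, with the following aggregate $x$:
\begin{itemize}
\item If $c>\Phi$, all HDVs bypass and $x=c$.
\item If $c\le\Phi$ and $\Phi-c\le 1-p$, HDVs top up steadfast flow until the costs equalize, so $x=\Phi$.
\item If $\Phi-c>1-p$, all HDVs stay and $x=c+1-p<\Phi$.
\end{itemize}
Since $\Phi<\Gamma$ and $J_\text{soc}$ is convex with minimizer $\Gamma$, outcomes with $x<\Phi$ are dominated. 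As $c$ ranges over $[0,p]$, the reachable set of $x$ has upper endpoint $\max(\Phi,p)$. So the upper level chooses $x^*=\min\{\max(\Phi,p),\Gamma\}$, namely $q_\text{s}=1$ when $p\le\Gamma$ and $q_\text{s}=\Gamma/p$ otherwise.

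The three regimes then follow directly:
\begin{itemize}
\item For $p\le p_1=\Phi$, the best reachable state is $x^*=\Phi$, so $J_\text{soc}=J_\text{soc}^\text{ref}$.
\item For $p_1<p<p_2$, $x^*=p\in(\Phi,\Gamma)$. Strict convexity gives $J_\text{soc}(p)<J_\text{soc}(\Phi)$, and this is strictly above the optimum.
\item For $p\ge p_2$, $x^*=\Gamma$ and the optimum is attained.
\end{itemize}
The "only if" directions follow because these cases partition $[0,1]$ and the three cost levels are distinct.

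I expect the main obstacle to be the lower-level case analysis, in particular the claim that the leader cannot push $x$ above $\max(\Phi,p)$: HDVs always have the option to bypass, which they take whenever $J_1^\text{s}>J_1^\text{b}$. A smaller obstacle is the endpoint $p=p_2$. There $x^*=\Gamma$ is already optimal, so the cleanly provable partition is $(p_1,p_2)$ and $[p_2,1]$. I would state this explicitly and note that the two endpoint conventions agree except at the single point $p=p_2$, where both "decreased" and "optimized" hold.
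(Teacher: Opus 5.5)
Your proposal is correct and follows essentially the same route as the paper's proof. Both reduce everything to the aggregate $x_1^\text{s}$, use that $J_\text{soc}$ is a strictly convex quadratic with vertex $\mathbf{B}=p_2$ while the interior HDV equilibrium gives $\Phi=p_1$, and split the lower level into the equal-cost, all-bypass and all-steadfast cases; your reachable-set formula $x^*=\min\{\max(\Phi,p),\Gamma\}$ just makes explicit the cross-case comparison the paper handles informally (its Case 3 ``does not exist'' really means ``is never optimal''). Your endpoint remark matches the paper's own proof, which concludes the optimal regime is $[\mathbf{B},1]$ rather than $(p_2,1]$; also, read literally, the second bullet's strict inequality holds on all of $(p_2,1]$, so your reading of it as $J_\text{soc}^\text{opt}<J_\text{soc}<J_\text{soc}^\text{ref}$ is what makes the ``only if'' claims true.
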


\begin{proof} 
The detailed proof is provided in Appendix~\ref{sec:appendix_theorem2}. The proof examines all possible cases of mixed autonomy equilibrium as the CAV penetration rate $p$ varies and analyzes the corresponding behaviors of both HDVs and CAVs. It shows that the social delay $J_\text{soc}$ remains unchanged for $p\in[0,p_1]$, decreases for $p\in(p_1,p_2]$, and reaches its minimum for $p\in(p_2,1]$. Furthermore, the theorem demonstrates that introducing dedicated altruistic CAVs effectively reduces the system delay and provides actionable guidance on the deployment scale (indicated by $p$) and locations (indicated by parameters $K$ and $B$) required to ensure system-level benefits.
\end{proof}

\begin{remark}[Flat Stage of Social Delay] \label{rem:flat_stage}
  When CAV penetration is low, the system experiences an initial plateau in the social cost \( J_{\text{soc}} \). This occurs because selfish HDVs take advantage of the limited number of altruistic CAVs, benefiting individually while preventing immediate system-wide improvement.
\end{remark}

\subsection{Numerical Example for Stackelberg-Wardrop Framework}

\noindent In this subsection, we present numerical examples to illustrate the impact of dedicated altruistic CAVs on both individual behavior and system-level performance as the CAV penetration rate $p$ varies. The traffic demand is set to 60 entering vehicles on Lane 0, 30 exiting vehicles on Lane 1, 30 exiting vehicles on Lane 2, and 200 through vehicles on Lane 1 and Lane 2 per hour. 

\begin{figure*}[h!]
\centering
\begin{subfigure}{0.31\textwidth}
    \includegraphics[width=\textwidth]{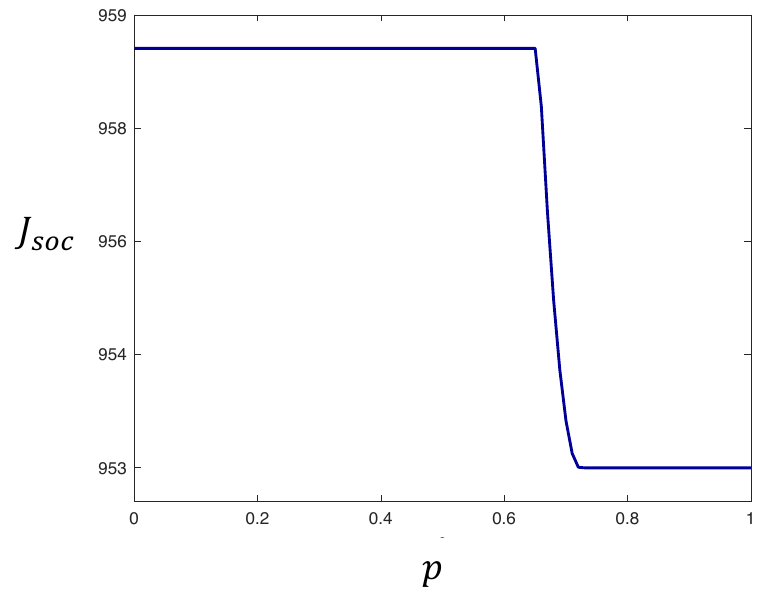}
    \caption{Social cost under $p$}
    \label{fig:alpha_social_cost}
\end{subfigure}
\hspace{0.01\textwidth} 
\begin{subfigure}{0.3\textwidth}
    \includegraphics[width=\textwidth]{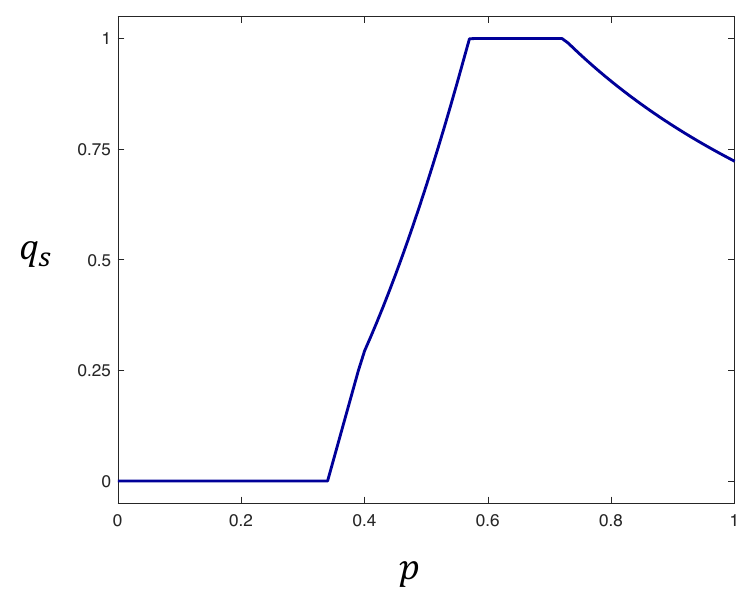}
    \caption{CAV strategy under $p$}
    \label{fig:alpha_beta}
\end{subfigure}
\hspace{0.01\textwidth} 
\begin{subfigure}{0.31\textwidth}
    \includegraphics[width=\textwidth]{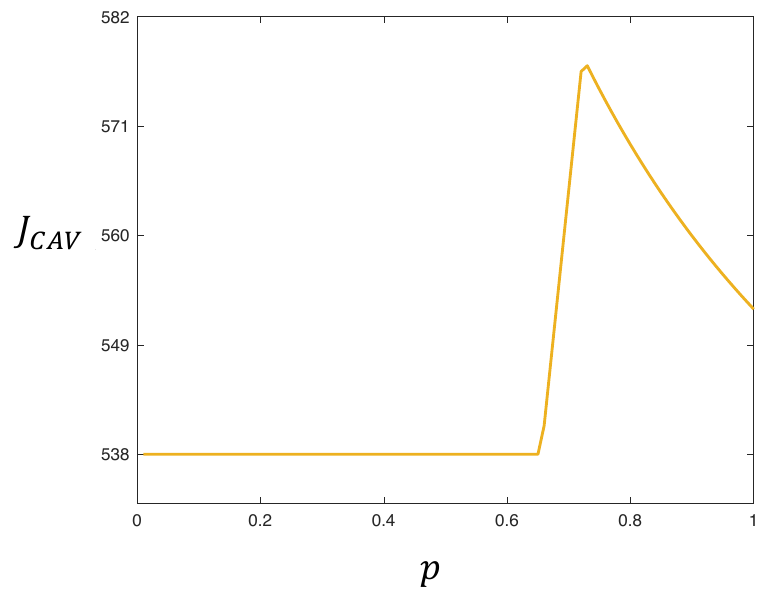}
    \caption{Individual cost under $p$}
    \label{fig:CAV_individual}
\end{subfigure}
\caption{These figures show the system-level and individual-level impact of CAV penetration rate $p$. Figure (a) shows how social cost $J_\text{soc}$ changes during $p$ increases. Figure (b) demonstrates how CAV steadfast proportion $q_\text{s}$ changes when $p$ increases. Figure (c) illustrates how each CAV's cost changes when $p$ increases.} 
\end{figure*}

Figure~\ref{fig:alpha_social_cost} illustrates how the social cost $J_\text{soc}$ varies with the CAV penetration rate $p$. When $p$ is low, $J_{soc}$ remains constant. It starts to decrease around $p \approx 0.6$, reaches its optimal value near $p \approx 0.7$, and remains at this optimal level as $p$ continues to increase. This pattern is consistent with the theoretical results established in Theorem~\ref{thm:CAV_penetration}.

Furthermore, we examine both $q_\text{s}$ in Figure \ref{fig:alpha_beta} and $J_\text{CAV}$ in Figure \ref{fig:CAV_individual}, where $J_\text{CAV}=p(J_1^\text{s}x_1^\text{s}+J_1^\text{b}x_1^\text{b})$) defined as individual CAV cost. When $p$ is low, $q_\text{s}$ remains constant initially and then begins to adjust. During this stage ($0 \leq p \leq 0.6$), $J_\text{soc}$ and $J_\text{CAV}$ remain unchanged. This indicates that CAVs adjust their strategies to benefit the system while the overall social cost stays constant, mainly because HDVs exploit the altruistic contributions of CAVs, which aligns with Remark~\ref{rem:flat_stage}. As $p$ further increases, such reduction comes at the cost of a significant increase in their individual cost, demonstrating their altruism toward society at their own expense. Finally, once $J_\text{soc}$ reaches its minimum, $J_\text{CAV}$ begins to decline, suggesting that dedicated altruistic CAVs can eventually benefit themselves once their market share becomes sufficiently large.

\section{AVs as Relaxed Altruistic Decision Makers}\label{sec:svo}

\noindent In practice, CAVs and HDVs rarely adhere to strictly homogeneous behavioral rules. Instead, they exhibit heterogeneous characteristics in decision-making and interactions. A more realistic framework therefore considers relaxed altruistic CAVs, whose objectives reflect a weighted combination of self-interest and social welfare. For HDVs, we characterize their behavioral diversity using the Social Value Orientation (SVO) framework~\citep{van1999pursuit}, which encompasses altruistic, selfish, competitive, and other extreme orientations. In this section, we analyze the performance of relaxed altruistic CAVs under varying penetration rates $p$ within heterogeneous environments, compare their outcomes with those of dedicated altruistic CAVs, and discuss potential mechanisms to incentivize altruistic behavior to enhance overall system efficiency and CAV–HDV interactions.

\subsection{Updated Notations for Relaxed Altruistic CAVs}

\begin{figure*}[h!]
\centering
\includegraphics[width = 0.5\textwidth]{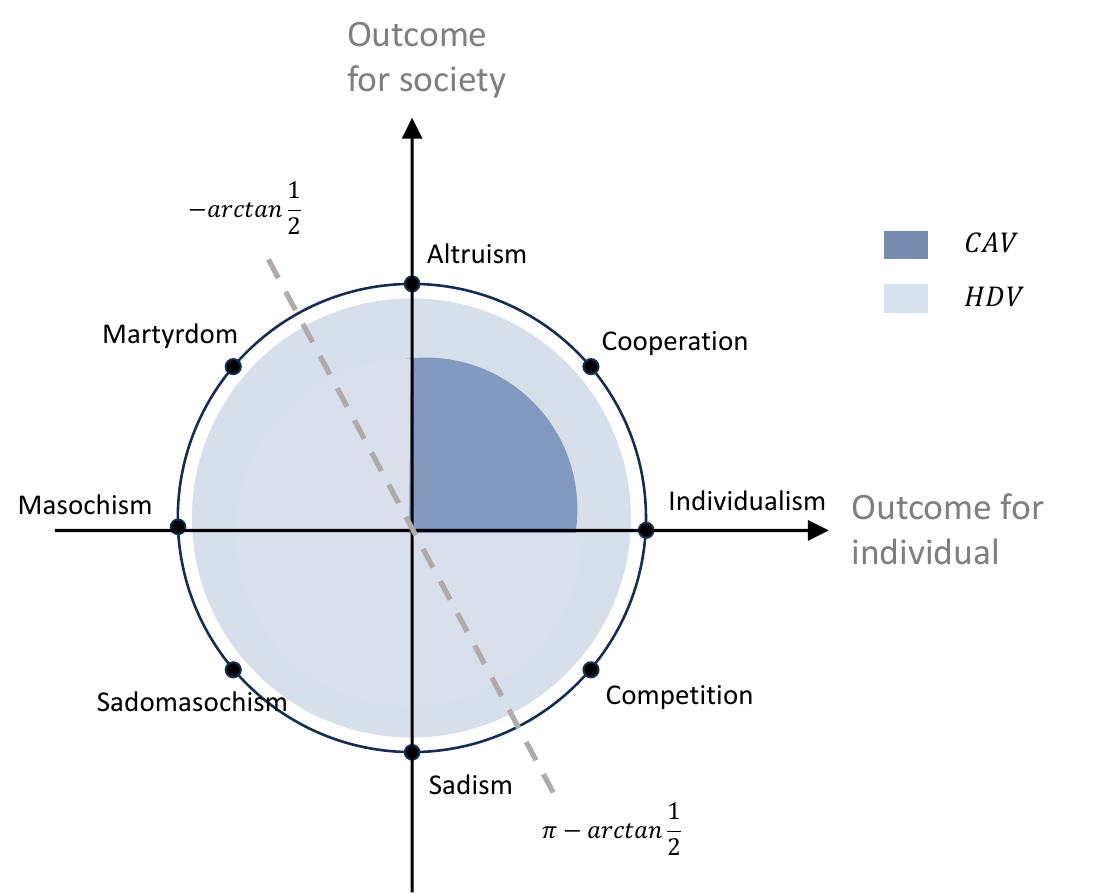}
\caption{This figure illustrates the SVO circle~\citep{liebrand1986value}, where each vehicle type $j$ is characterized by an angle $\theta_j$ representing its orientation toward individual versus collective outcomes. The x-axis denotes concern for individual outcomes, and the y-axis denotes concern for societal outcomes. Positive values indicate prosocial orientations, negative values reflect competitive or antagonistic tendencies, and zero represents neutrality.
In our setting, we focus on the subset of orientations satisfying $\cos\theta_j + \sin\theta_j > 0$, which corresponds to vehicle types whose combined valuation of individual and social outcomes is positive.}
\label{fig:SVO_circle}
\end{figure*}

\noindent As we extend the analysis from dedicated altruistic CAVs and selfish HDVs to a more heterogeneous setting, both CAVs and HDVs are represented by vehicles types with different SVO levels. Accordingly, we denote $j$ as the vehicle type, and update the steadfast and bypassing flow ratios as $x_{1,j}^\text{s}$ and $x_{1,j}^\text{b}$, respectively, where each type $j$ is associated with a specific SVO level $\theta_j$. 

Furthermore, we define the cost functions of SVO-level vehicles as a weighted combination of individual travel delay and marginal social cost, capturing their trade-off between self-interest and social welfare. The marginal social cost is defined as the additional total system delay generated by one more vehicle adopting a given strategy~\citep{beckmann1956studies}, which we mathematically represent as the partial derivative of the social cost with respect to the corresponding flow. Let $\tilde{J}_{1,j}^\text{s}$ and $\tilde{J}_{1,j}^\text{b}$ denote the cost function for steadfast and bypassing vehicles, respectively. To capture the heterogeneity in social preferences, we introduce the SVO parameter $\theta_j$, illustrated in Figure \ref{fig:SVO_circle}. Specifically, $\theta_j=0$ corresponds to purely selfish behavior, $\theta_j=\frac{\pi}{2}$ represents fully altruistic behavior, and $\theta_j=(0,\frac{\pi}{2})$ indicates partially altruistic behaviors. Extreme altruism is described by $\theta_j \in (\frac{\pi}{2},\pi]$. Negative values capture non-cooperative orientations: $\theta_j \in (-\frac{\pi}{2},0)$ denotes competitive behavior, and $\theta_j \in (-\pi,-\frac{\pi}{2})$ corresponds to extreme sadomasochism tendencies, which may be interpreted as pathological driving behaviors. In this formulation, $sin(\theta_j)$ captures the vehicle's orientation toward social welfare, while $cos(\theta_j)$ reflects the weight placed on individual outcomes. Thus, we have the cost function for steadfast and bypassing vehicles $\tilde{J}_{1,j}^\text{s}$ and $\tilde{J}_{1,j}^\text{b}$ as:
\begin{align} 
   \tilde{J}_{1,j}^\text{s}(\mathbf{x}) &= cos(\theta_j)J_{1}^\text{s}(\mathbf{x})+ sin(\theta_j)\frac{\partial J_\text{soc}(\mathbf{x})}{\partial x_{1,j}^\text{s}}, \label{eq:SVO_cost1} \\
   \tilde{J}_{1,j}^\text{b}(\mathbf{x}) &= cos(\theta_j)J_1^\text{b}(\mathbf{x}) + sin(\theta_j)\frac{\partial J_\text{soc}(\mathbf{x})}{\partial x_{1,j}^\text{b}}, \label{eq:SVO_cost2} 
\end{align}

HDVs and CAVs are distinguished by their respective ranges of the SVO parameter~$\theta_j$. For relaxed altruistic CAVs, $\theta_j$ is constrained within $0 \leq \theta_j \leq \frac{\pi}{2}$, reflecting their controllable and cooperative nature. In contrast, HDVs are assigned $\theta_j$ satisfying $cos \theta_j+2sin\theta_j>0$, capturing the broader behavioral heterogeneity of human drivers.

\iffalse
Based on $\tilde{J}_{1,j}^\text{s}$ and $\tilde{J}_{1,j}^\text{b}$, we can further update the social cost $J_\text{soc}$ from Equation~\eqref{eq:social_cost} as:
\begin{align}
  \tilde{J}_\text{soc}=\sum_{j=1}^J x_{1,j}^\text{s} \tilde{J}_{1,j}^\text{s}+ \sum_{j=1}^J x_{1,j}^\text{b} \tilde{J}_{1,j}^\text{b} + n_2^\text{s}J_2^\text{s}+n_2^\text{exit}J_2^\text{exit}+n_0^\text{enter}J_0^\text{enter}.\label{eq:hetero_social_cost}
\end{align}
\fi

\subsection{Heterogeneous Coupled Wardrop Framework}

\noindent In the heterogeneous setting, both CAVs and HDVs adopt cost functions defined in Equations~\eqref{eq:SVO_cost1} and~\eqref{eq:SVO_cost2}, respectively. All vehicle types are assumed to satisfy Wardrop’s principle, leading to the following framework:
\\
\begin{definition} [Heterogeneous Coupled Wardrop Lane Choice Equilibrium]\label{def:wdp_SVO}
  For a given weaving-ramp configuration \( G = (\mathbf{N}, \mathbf{C}) \), a flow distribution vector \( \mathbf{x} \) is in equilibrium if and only if
\begin{subequations}\label{eq:eq_def}
    \begin{align}
        x_{1,j}^\text{s}  (\tilde{J}_{1,j}^\text{s}(\mathbf{x}) - \tilde{J}_{1,j}^\text{b}(\mathbf{x})) &\leq 0 ,\\
        x_{1,j}^\text{b}  (\tilde{J}_{1,j}^\text{b}(\mathbf{x}) - \tilde{J}_{1,j}^\text{s}(\mathbf{x})) &\leq 0 ,
    \end{align}
\end{subequations}
for all vehicle classes $j=1,2,\ldots,J$.
\end{definition}

\subsection{Impact of Relaxed Altruistic CAVs}

%For the phase analysis, we denote total steadfast proportion $x_1^\text{s}$ at $p=0$ as the pure HDV equilibrium $\Phi$, at $p=1$ as the pure CAV equilibrium, and the socially optimal equilibrium as $\Gamma$. In the heterogeneous scenario, $\Psi \leq \Gamma$. Thus, the admissible set of configurations for the subsequent analysis is:
%\begin{align}
%  \mathcal{G}=\{G:0<\Phi<\Psi\leq \Gamma<1\},
%\end{align}

\noindent There are $H$ types of HDVs and $C$ types of CAVs, indexed by $h \in \mathcal{H}$ 
and $c \in \mathcal{C}$ respectively, where $\mathcal{H} \cup \mathcal{C} = \mathcal{J}$ 
and $|\mathcal{J}| = J$. Let $w_h^\text{HDV}$ denote the proportion of type $h$ within 
the HDV subpopulation, and $w_c^\text{CAV}$ denote the proportion of type $c$ within 
the CAV subpopulation, satisfying $\sum_{h \in \mathcal{H}} w_h^\text{HDV} = 1$ and 
$\sum_{c \in \mathcal{C}} w_c^\text{CAV} = 1$. In the following analysis, we consider $w_h^\text{HDV}$ and $w_c^\text{CAV}$ as given and fixed. Given a CAV penetration rate $p \in [0,1]$, 
the overall population share of each vehicle type $j \in \mathcal{J}$ is defined as:
\begin{align}
    w_j(p) = 
    \begin{cases} 
        (1-p)\, w_h^\text{HDV} & \text{if } j = h \in \mathcal{H}, \\[4pt]
        p\, w_c^\text{CAV}     & \text{if } j = c \in \mathcal{C},
    \end{cases}
    \label{eq:population_share}
\end{align}
where $w_j(p)$ represents the fraction of type $j$ vehicles in the overall traffic 
population at penetration rate $p$. It follows directly that 
$\sum_{j \in \mathcal{J}} w_j(p) = 1$ for all $p \in [0,1]$.

\begin{theorem}[Impact of Relaxed Altruistic CAVs on Social Delay]
\label{thm:theorem3}
Consider a weaving-ramp configuration $G=(N,\mathcal{C})$ with CAV penetration 
rate $p\in[0,1]$ and vehicle types $j=1,\cdots,J$. Assume that for every vehicle 
type $j$, $\cos(\theta_j)+2\sin(\theta_j)>0$, and that the thresholds 
$\{\chi_j\}_{j=1}^{J}$, defined by
\begin{align}
    \chi_j := \frac{B_{1,j}^\text{b}+K_{1,j}^\text{b}-B_{1,j}^\text{s}}
    {K_{1,j}^\text{s}+K_{1,j}^\text{b}}, \quad j=1,\cdots,J,
\end{align}
are pairwise distinct. For each admissible index $k$, define the aggregate 
steadfast share of all types with threshold exceeding $\chi_k$ and the 
associated plateau interval as:
\begin{align}\label{eq:I_k}
    W_k(p) := \sum_{j:\,\chi_j>\chi_k} w_j(p), 
    \qquad 
    I_k := \bigl\{p\in[0,1] : 0 < \chi_k - W_k(p) < w_k(p)\bigr\},
\end{align}
where $w_j(p)$ is the overall population share of type $j$ as defined in 
\eqref{eq:population_share}. Then the following statements hold:
\begin{itemize}
    \item For any $p\in[0,1]$, at most one vehicle type can play mixed strategy at equilibrium.

    \item Type $k$ vehicles are the unique type that plays the mixed strategy at equilibrium, if and only if conditions~\eqref{eq:I_k} are satisfied. At the equilibrium, all 
    types with $\chi_j>\chi_k$ are fully steadfast, and all types with 
    $\chi_j<\chi_k$ are fully bypass.

    \item On $I_k$, the equilibrium steadfast proportion satisfies 
    ${x_1^\text{s}}^*=\chi_k$. Since $J_\text{soc}$ depends only on 
    $x_1^\text{s}$, it follows 
    that $J_\text{soc}$ remains constant throughout $I_k$.

\iffalse
    \item The boundaries of the plateau interval $I_k$ are determined by 
    the two linear equations:
    \begin{align}
        W_k(p) = \chi_k, \qquad W_k(p) + w_k(p) = \chi_k.
        \label{eq:Ik_boundaries}
    \end{align}
    The onset and termination of the plateau associated with type $k$ 
    correspond to the penetration rates at which type $k$ first becomes 
    active and subsequently transitions to a purely steadfast or purely 
    bypass strategy.
\fi

    \item A penetration range $\mathcal{P}$ is free of 
    constant social-delay intervals if and only if 
    $\mathcal{P}\cap I_k=\varnothing$ for all $k=1,\cdots,J$.
\end{itemize}
\end{theorem}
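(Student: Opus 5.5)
The plan is to reduce each type's equilibrium condition to a comparison between the aggregate steadfast share $x_1^\text{s}$ and the type-specific threshold $\chi_j$. Then everything follows from a one-dimensional sorting argument.

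\textbf{Step 1: per-type cost gap.} Since $x_1^\text{s}+x_1^\text{b}=1$, use the representations \eqref{eq:J_1^s_x_1^s}--\eqref{eq:J_1^b_x_1^b} and the linear forms of $J_2^\text{s}, J_2^\text{exit}, J_0^\text{enter}$. The social cost $J_\text{soc}$ is then a quadratic function of $x_1^\text{s}$ alone, and its partial derivatives with respect to $x_{1,j}^\text{s}$ and $x_{1,j}^\text{b}$ are affine in $x_1^\text{s}$. Substituting into \eqref{eq:SVO_cost1}--\eqref{eq:SVO_cost2}, I will write the gap as
\[
\tilde J_{1,j}^\text{s}-\tilde J_{1,j}^\text{b}=(K_{1,j}^\text{s}+K_{1,j}^\text{b})\,x_1^\text{s}-(B_{1,j}^\text{b}+K_{1,j}^\text{b}-B_{1,j}^\text{s}).
\]
Here the type-dependent effective coefficients $K_{1,j}^{\cdot},B_{1,j}^{\cdot}$ are obtained by collecting the $\cos\theta_j$ and $\sin\theta_j$ contributions. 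The key check is that the effective slope $K_{1,j}^\text{s}+K_{1,j}^\text{b}$ is strictly positive. The selfish part contributes $\cos\theta_j(K_1^\text{s}+K_1^\text{b})$. The marginal social cost of a quadratic whose leading term is essentially $(K_1^\text{s}+K_1^\text{b})(x_1^\text{s})^2$ contributes roughly twice that slope times $\sin\theta_j$, plus nonnegative cross terms from the neighbouring flows. So the slope is proportional to $\cos\theta_j+2\sin\theta_j>0$, which is exactly the role of the hypothesis. I expect this bookkeeping, namely verifying that the cross-lane terms do not destroy positivity, to be the main obstacle. I would handle it by grouping the extra terms into the constant $B$ parts wherever possible.

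\textbf{Step 2: best responses as a threshold rule.} With a positive slope, the gap has the sign of $x_1^\text{s}-\chi_j$. Definition~\ref{def:wdp_SVO} then says the following: if $x_1^\text{s}<\chi_j$, type $j$ is fully steadfast; if $x_1^\text{s}>\chi_j$, it is fully bypass; and type $j$ can mix only if $x_1^\text{s}=\chi_j$. Because the $\chi_j$ are pairwise distinct, at most one type can satisfy $x_1^\text{s}=\chi_j$, which proves the first bullet.

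\textbf{Step 3: characterisation of the mixing type.} Suppose type $k$ mixes. Then $x_1^\text{s}=\chi_k$. Every type with $\chi_j>\chi_k$ sits strictly below its threshold and so is fully steadfast. Every type with $\chi_j<\chi_k$ is fully bypass. Hence $\chi_k=x_1^\text{s}=W_k(p)+x_{1,k}^\text{s}$ with $x_{1,k}^\text{s}\in(0,w_k(p))$, which is the condition defining $I_k$. Conversely, given $p\in I_k$, set $x_{1,k}^\text{s}=\chi_k-W_k(p)$ and assign the other types as above. By Step 2 this profile is an equilibrium. Uniqueness holds because the map $x\mapsto$ (aggregate steadfast share induced by best responses) is nonincreasing, so it has a unique fixed point in the generalised sense. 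This settles the second bullet. The third bullet then follows immediately: $x_1^{\text{s}*}=\chi_k$ is constant on $I_k$, and $J_\text{soc}$ depends only on $x_1^\text{s}$.

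\textbf{Step 4: the last bullet.} Off $\bigcup_k I_k$, no type mixes, so every type is pure and $x_1^\text{s}=W(p)$ is a sum of the $w_j(p)$ over the steadfast types. This is affine in $p$ with a locally fixed type set. I would argue that its slope is nonzero generically, and note that any constant stretch of $J_\text{soc}$ must come from a mixing type. Together these give the equivalence.
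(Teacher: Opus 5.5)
Your Steps 1--3 follow the paper's own argument. Each type's Wardrop condition reduces to the sign of $x_1^\text{s}-\chi_j$. Distinctness of the $\chi_j$ then gives at most one mixing type, and $x_1^{\text{s}*}=\chi_k$ is pinned on $I_k$. The bookkeeping you flag as the main obstacle is harmless. The terms $n_2^\text{s}J_2^\text{s}$, $n_2^\text{exit}J_2^\text{exit}$, $n_0^\text{enter}J_0^\text{enter}$ are linear in the flows, so they only enter the constants $B_{1,j}^{\cdot}$, and the slope is exactly $(\cos\theta_j+2\sin\theta_j)(K_1^\text{s}+K_1^\text{b})$. Your monotone fixed-point argument for uniqueness is a real addition. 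The paper only asserts the threshold structure of ``any Wardrop equilibrium'' and never proves uniqueness, which the second bullet, and the well-definedness of $J_\text{soc}(p)$, actually need.

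The genuine gap is Step 4, and it cannot be closed: the ``if'' direction of the last bullet is false without an extra hypothesis. In particular, your claim that any constant stretch of $J_\text{soc}$ must come from a mixing type fails.

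Off $\bigcup_k I_k$, the steadfast set is an upper set $S$ of the $\chi$-order, and $x_1^{\text{s}*}(p)=\sum_{j\in S}w_j(p)$ has slope $\sum_{c\in S\cap\mathcal{C}}w_c^\text{CAV}-\sum_{h\in S\cap\mathcal{H}}w_h^\text{HDV}$. This is the paper's $A_k$ when $S=\{j:\chi_j>\chi_k\}$, and the paper explicitly allows $A_k=0$.

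Here is a concrete counterexample. One checks that $\chi_j=(\Phi\cos\theta_j+2\Gamma\sin\theta_j)/(\cos\theta_j+2\sin\theta_j)$, with $\Gamma$ the appendix's $\mathbf{B}$. This moves continuously from $\Phi$ to $\Gamma$ as $\theta_j$ goes from $0$ to $\pi/2$.
\begin{itemize}
\item Fix $x_0\in(\Phi,\Gamma)$.
\item Take HDV types $h_1,h_2$ and CAV types $c_1,c_2$ with $w_{h_1}^\text{HDV}=w_{c_1}^\text{CAV}=1-x_0$ and $w_{h_2}^\text{HDV}=w_{c_2}^\text{CAV}=x_0$.
\item Choose angles giving distinct thresholds with $\chi_{h_1},\chi_{c_1}<x_0<\chi_{h_2},\chi_{c_2}$.
\end{itemize}
For every $p\in[0,1]$, consider the profile with $h_2,c_2$ fully steadfast and $h_1,c_1$ fully bypass. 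It has $x_1^\text{s}=(1-p)x_0+px_0=x_0$, and every type strictly prefers its action. By your own uniqueness argument it is the equilibrium, so $J_\text{soc}$ is constant on all of $[0,1]$. Yet every $I_k$ is empty: for the low types $W_k(p)\ge x_0>\chi_k$, and for the high types $W_k(p)+w_k(p)\le x_0<\chi_k$.

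So ``nonzero slope generically'' must become an explicit hypothesis, or the bullet must be weakened to its ``only if'' half. A suitable hypothesis is that the slope above is nonzero for every upper set $S$ whose pure regime meets $\mathcal{P}$ in a nondegenerate interval. With that hypothesis your Step 4 works once you add two points:
\begin{itemize}
\item A nonconstant affine $x_1^{\text{s}*}(p)$ composed with the strictly convex quadratic $J_\text{soc}$ is constant on no interval.
\item For the converse, $I_k$ is relatively open in $[0,1]$, so it meets a nondegenerate interval $\mathcal{P}$ in a nondegenerate subinterval.
\end{itemize}
The paper's proof gives no argument for this bullet either.
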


\begin{proof}
    The detailed proof is provided in Appendix~\ref{sec:appendix_theorem3}. As a sketch proof, we describe the key points as follows. First, vehicle types are ordered by their thresholds $\chi_j$ in ascending order, which determines a priority ranking that governs which type becomes the unique mixed type at any given penetration rate $p$. Second, for each interval $I_k$, the proof characterizes the equilibrium structure by showing that all types with $\chi_j > \chi_k$ are fully steadfast, all types with $\chi_j < \chi_k$ are fully bypass, and only type $k$ adopts a mixed strategy. Third, since the aggregate steadfast proportion is pinned at ${x_1^\text{s}}^* = \chi_k$ throughout $I_k$, and $J_\text{soc}$ depends only on $x_1^\text{s}$, it follows that $J_\text{soc}$ remains constant on each such interval. Consequently, improvements in social cost occur only at the boundary points of $I_k$, where the active mixed type switches from type $k$ to an adjacent type.
\end{proof}

The above results provide a structural characterization of penetration ranges that induce constant social delay. In particular, each interval $I_k$ corresponds to a plateau governed by the behavioral threshold $\chi_k$ of vehicle type $k$, and the union $\mathcal{I} := \bigcup_{k=1}^{J} I_k$ captures the full collection of such undesirable regimes. This characterization directly enables the design of target penetration ranges $\mathcal{P}$ that avoid all constant-delay regions, i.e., $\mathcal{P} \cap \mathcal{I} = \varnothing$, thereby ensuring that social cost improvements are achievable as $p$ varies within $\mathcal{P}$.

These findings reveal a fundamental limitation of penetration-rate-only deployment strategies: under heterogeneous behavioral responses, increasing $p$ alone does not guarantee monotonic improvements in system performance. Instead, $J_\text{soc}$ exhibits a piecewise-constant structure, remaining unchanged within each interval $I_k$ and decreasing only at the boundary points where the active mixed type switches. The emergence of these plateaus is intrinsically linked to the distribution of behavioral preferences $\theta_j$ across vehicle types, as each plateau $I_k$ is uniquely associated with the threshold $\chi_k$ of its governing type. Consequently, effective CAV deployment must account not only for CAV penetration rate $p$, but also for the composition of behavioral types within the CAV fleet, so as to position the operating point outside $\mathcal{I}$ and realize meaningful reductions in social cost.

\subsection{Numerical Example for Heterogeneous Wardrop Framework}

\noindent In this subsection, we do numerical examples to show comparisons impacts between relaxed altruistic CAVs and dedicated altruistic CAVs, and demonstrate insights in Theorem \ref{thm:theorem3}. In this simulation, we conduct 30 entering vehicles on Lane 0, 10 exiting vehicles on Lane 1, 10 exiting vehicles on Lane 2, and 50 through vehicles on Lane 1 and Lane 2 per hour. 

\begin{figure*}[h!]
\centering
\begin{subfigure}{0.45\textwidth}
    \includegraphics[width=\textwidth]{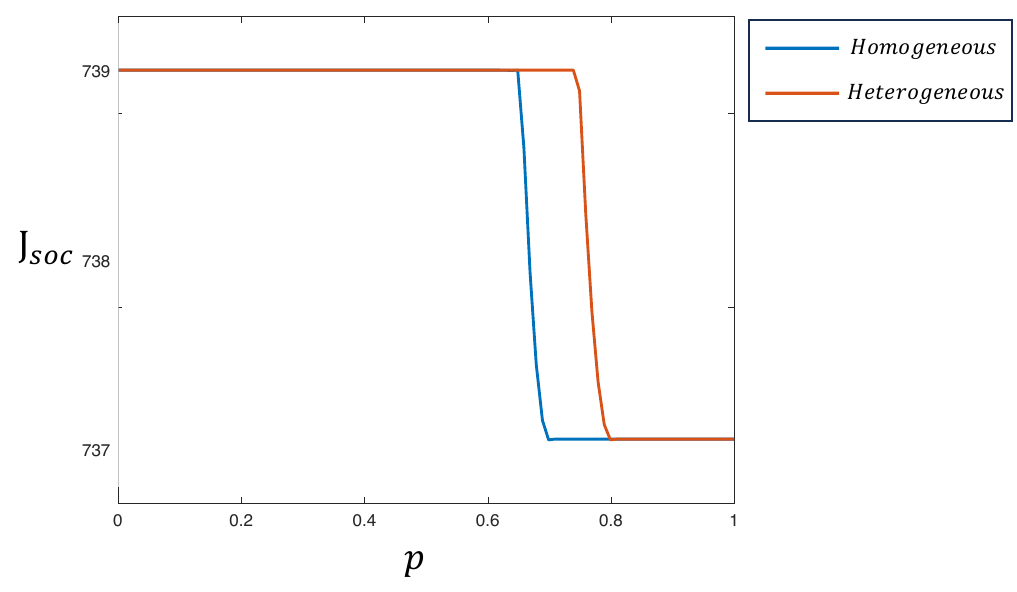}
    \caption{Heterogeneous vs. Homogeneous Phase}
    \label{fig:compare_phase}
\end{subfigure}
\hspace{0.01\textwidth} 
\begin{subfigure}{0.41\textwidth}
    \includegraphics[width=\textwidth]{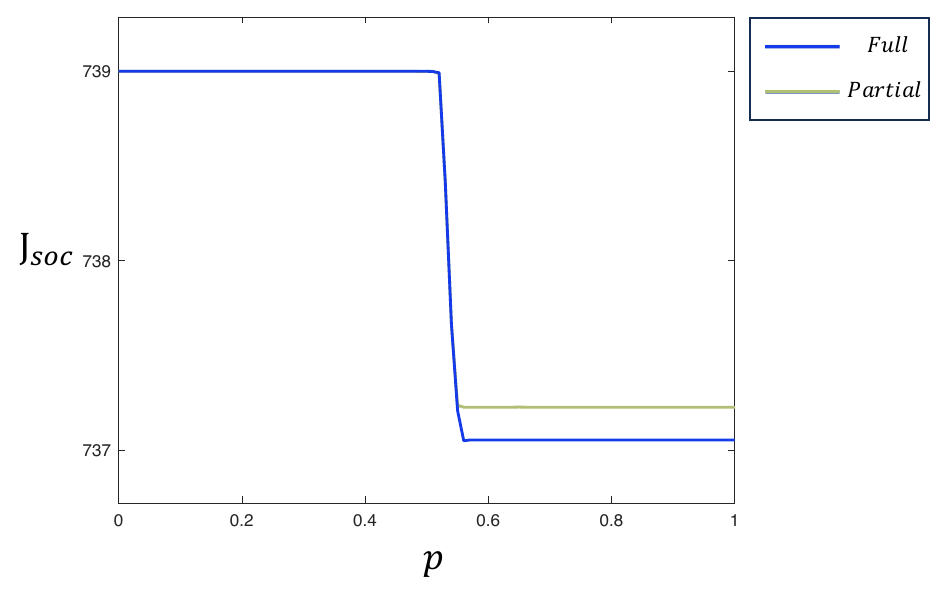}
    \caption{Full vs. Partial Altruism Phase}
    \label{fig:compare1}
\end{subfigure}
\caption{Figure (a) presents the social cost phase comparison between homogeneous and heterogeneous scenarios. The homogeneous scenario involves single type of dedicated altruism CAVs, whereas the heterogeneous scenario consists of multiple types of relaxed altruistic CAVs. Figure (b) provides a comparison between relaxed altruistic CAVs and dedicated altruistic CAVs in the homogeneous scenario. }\label{fig:compare}
\end{figure*}

For Figure~\ref{fig:compare_phase}, we compare two settings: (i) a homogeneous scenario with a single type of relaxed altruistic CAVs and a single type of selfish HDVs, and (ii) a heterogeneous scenario involving multiple types of relaxed altruistic CAVs and HDVs. In both cases, the social cost initially remains stable, then decreases, and ultimately converges toward the optimal level, consistent with the trend described in Theorem~\ref{thm:theorem3}. Notably, in the heterogeneous scenario, the onset of social cost reduction is delayed, beginning near $p=0.75$, and stabilizing around $p=0.8$, compared to the homogeneous case, where the transition occurs earlier, between $p=0.65$ and $p=0.7$. This delay highlights how the diversity of vehicle types and behavioral heterogeneity can influence the rate and extent of system improvement as CAV penetration increases.

For Figure~\ref{fig:compare1}, we further examine the difference between dedicated altruistic CAVs and relaxed altruistic CAVs in homogeneous settings. While relaxed altruistic CAVs contribute to reducing the social cost $J_\text{soc}$, their impact plateaus at high penetration levels, leading to a noticeable gap relative to dedicated altruistic CAVs. This finding suggests that partial altruism, though beneficial, cannot fully replicate the societal gains achieved under complete altruism. From a policy perspective, this implies that moderate levels of altruistic design or incentive mechanisms may improve system performance without requiring fully altruistic vehicle control, but achieving the social optimum still necessitates stronger coordination or regulatory intervention. 
\begin{figure*}[h!]
\centering
\begin{subfigure}{0.34\textwidth}
    \includegraphics[width=\textwidth]{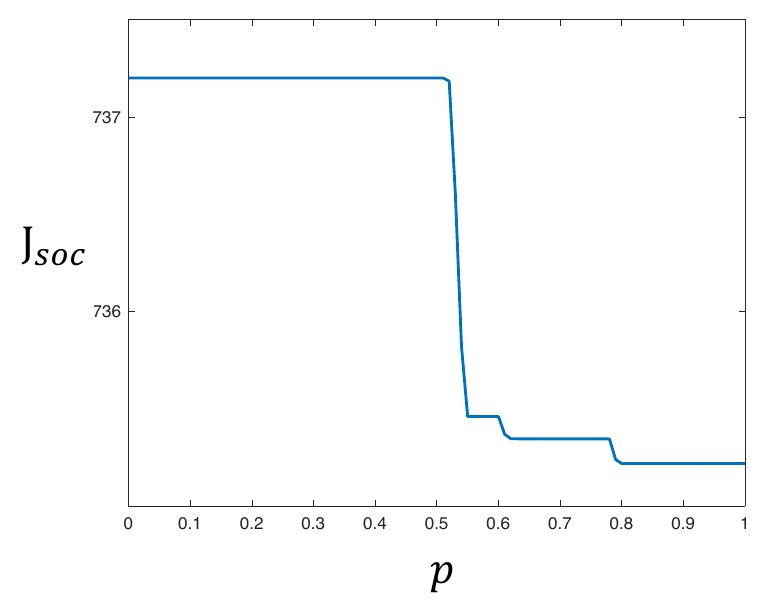}
    \caption{Social cost under $p$}
    \label{fig:hetero_social}
\end{subfigure}
\hspace{0.1\textwidth} 
\begin{subfigure}{0.345\textwidth}
    \includegraphics[width=\textwidth]{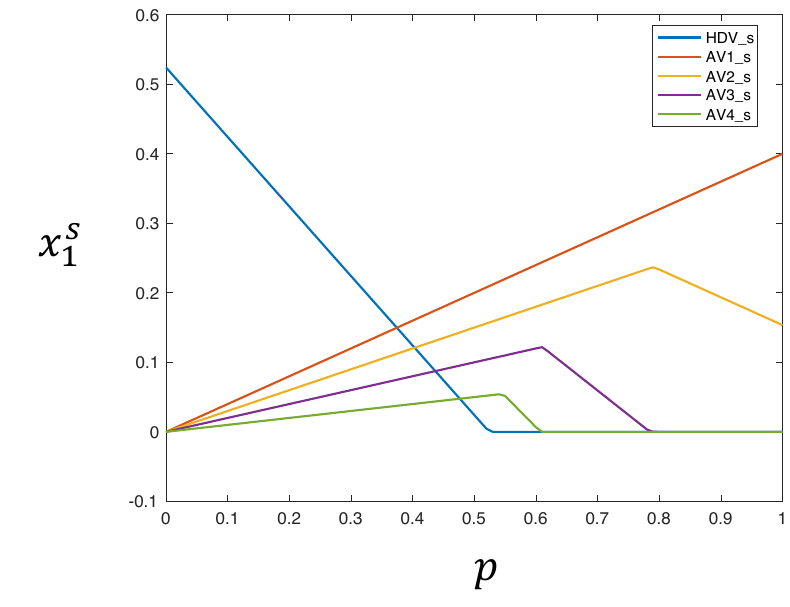}
    \caption{Vehicle strategies under $p$}
    \label{fig:hetero_strategy}
\end{subfigure}
\caption{Figure (a) presents the change of the social cost $J_{\text{soc}}$ as the CAV penetration rate $p$ varies in the heterogeneous mixed-autonomy setting. Figure (b) shows the corresponding change of each vehicle type $j$'s steadfast proportion $x_1^\text{s}$. The results illustrate that the multi-stage evolution of $J_{\text{soc}}$ corresponds to changes in the active mixed vehicle type $k$, which partitions the parameter range into several equilibrium regimes.}\label{fig:hetero}
\end{figure*}

%Furthermore, we design CAV types to be $\theta_{1,1}=\frac{1}{5}\pi,\theta_{1,2}=\frac{1}{4}\pi,\theta_{1,3}=\frac{1}{3}\pi,\theta_{1,4}=\frac{1}{2}\pi$, with within-CAV shares $w_1^\text{HDV}=0.1,w_2^\text{HDV}=0.2,w_3^\text{HDV}=0.3,w_4^\text{HDV}=0.4$. The change of social cost $J_\text{soc}$ and each vehicle type's steadfast proportion are shown in Figure~\ref{fig:hetero}. The results show that the multi-stage evolution of $J_\text{soc}$ corresponds to changes in the active mixed vehicle type. When the mixed vehicle type switches, the equilibrium steadfast proportion $x_1^\text{s}$ shifts to a new level, which leads to a corresponding change in the social cost. This numerical pattern is consistent with Theorem~\ref{thm:theorem3}, which shows that each interior interval $I_k$ is associated with a unique active mixed type and a constant value of $J_\text{soc}$, while regime switching occurs at the boundary penetration rates.

Furthermore, we design four types of CAVs with $\theta_{1,1}=\frac{1}{5}\pi$, $\theta_{1,2}=\frac{1}{4}\pi$, $\theta_{1,3}=\frac{1}{3}\pi$, and $\theta_{1,4}=\frac{1}{2}\pi$, with corresponding within-CAV shares $w_1^\text{CAV}=0.1$, $w_2^\text{CAV}=0.2$, $w_3^\text{CAV}=0.3$, and $w_4^\text{CAV}=0.4$. The evolution of the social cost $J_\text{soc}$ and the steadfast proportions of each vehicle type are illustrated in Figure~\ref{fig:hetero}. The results reveal a clear multi-stage structure in the evolution of $J_\text{soc}$ as the penetration rate $p$ increases. In particular, $J_\text{soc}$ remains constant within each interval and decreases only at specific transition points. This piecewise-constant behavior directly corresponds to the interval structure characterized in Theorem~\ref{thm:theorem3}, where each interior interval $I_k$ is associated with a unique active mixed vehicle type. 

From Figure~\ref{fig:hetero_strategy}, we observe that as $p$ increases, different vehicle types sequentially become the active mixed type in the equilibrium. Within each interval, only one vehicle type exhibits a mixed strategy, while all other types are either fully steadfast or fully bypass. As predicted by Theorem~\ref{thm:theorem3}, this implies that the aggregate steadfast proportion $x_1^\text{s}$ remains fixed at the corresponding threshold $\chi_k$, leading to a constant value of $J_\text{soc}$ in Figure~\ref{fig:hetero}(a). At the boundary of each interval, a regime-switching event occurs: the previously active mixed type becomes pure, and a new vehicle type takes over as the active mixed type. This transition causes a discrete shift in the equilibrium steadfast proportion $x_1^\text{s}$, which in turn produces a stepwise change in the social cost. As a result, the overall evolution of $J_\text{soc}$ follows a staircase-like pattern, where each plateau corresponds to a specific behavioral type. These results highlight that the multi-stage evolution of system performance is driven by the sequential activation of heterogeneous vehicle types. Rather than improving continuously with $p$, the system exhibits periods of stagnation (plateaus) followed by abrupt improvements at regime-switching points. This mechanism underscores the critical role of behavioral heterogeneity in shaping both the structure and timing of efficiency gains in mixed-autonomy traffic systems.

Figures~\ref{fig:compare} and~\ref{fig:hetero} jointly illustrate how behavioral heterogeneity shapes the evolution of the social cost as the CAV penetration rate increases. While both homogeneous and heterogeneous settings eventually achieve reductions in social cost, the heterogeneous case exhibits a delayed onset of improvement and a distinct multi-stage transition pattern. These results are consistent with Theorem~\ref{thm:theorem3}, where the penetration range is partitioned into intervals $I_k$ associated with different active vehicle types. As the penetration rate increases, regime switching between these intervals leads to stepwise changes in the equilibrium and the social cost. Overall, the results demonstrate that heterogeneous behavioral preferences not only affect the magnitude of efficiency gains but also induce a multi-stage, staircase-like evolution of system performance.

\section{Conclusions} \label{sec:future}

%This study develops a unified modeling framework to investigate how altruistic and strategic behaviors of CAVs affect lane choice and congestion at highway weaving ramps. Starting from a Wardrop formulation for HDVs, we construct a bilevel Stackelberg–Wardrop model capturing strategic CAV leadership and HDV equilibrium responses, and further extend it to a heterogeneous Wardrop framework incorporating SVO-based behavioral diversity. Analytical and numerical results consistently reveal that altruistic CAVs can effectively reduce social delay through a three-phase evolution, from HDV-dominated inefficiency to socially optimal operation, while behavioral heterogeneity moderates but does not negate the benefits of altruism.

%Future extensions will incorporate dynamic and stochastic environments to model time-varying demand, adaptive learning, and behavioral uncertainty, enabling analysis of temporal effects in realistic traffic systems. We also plan to generalize the framework to network-level applications such as multi-ramp corridors and interconnected mixed-autonomy networks. Ultimately, this research provides a foundation for designing incentive and regulatory mechanisms that harness strategic altruism to achieve equitable and efficient future mobility systems.

\noindent This study develops a unified modeling framework to investigate how altruistic and strategic behaviors of CAVs affect lane choice and congestion at highway weaving ramps. Starting from a Wardrop formulation for HDVs, we construct a bilevel Stackelberg--Wardrop model capturing strategic CAV leadership and HDV equilibrium responses, and further extend it to a heterogeneous Wardrop framework incorporating SVO-based behavioral diversity. 

Analytical and numerical results consistently reveal that altruistic CAVs can effectively reduce social delay, but the improvement is not monotonic. Instead, the system exhibits a multi-stage, piecewise-constant evolution, where the social cost remains unchanged within certain penetration intervals and decreases only at regime-switching points, which shows that each plateau corresponds to a specific active vehicle type. Behavioral heterogeneity reshapes these intervals, delaying the onset of improvements and altering the transition path, but does not negate the overall benefits of altruistic behavior.

Future extensions will incorporate dynamic and stochastic environments to model time-varying demand, adaptive learning, and behavioral uncertainty, enabling analysis of temporal effects in realistic traffic systems. We also plan to generalize the framework to network-level applications such as multi-ramp corridors and interconnected mixed-autonomy networks. Ultimately, this research provides a foundation for designing incentive and regulatory mechanisms that harness strategic altruism to achieve equitable and efficient future mobility systems.

%%%%%%%%%%%%%%%%%%%%%%%%%%%%%%%%%%%%%%%%%%%%%%%%%%%%%%%%%%%%%%%%%%%%%%%%%%%%%%%%
\iffalse
\section*{Acknowledgments}
This work was supported by 
The authors thank Matthew Wright for his helpful input during the preparation of this article.
\fi

%%%%%%%%%%%%%%%%%%%%%%%%%%%%%%%%%%%%%%%%%%%%%%%%%%%%%%%%%%%%%%%%%%%%%%%%%%%%%%%%

%\bibliographystyle{IEEEtran}
%\bibliographystyle{plainnat}
\section{Appendix} \label{sec:appendix}

\subsection{Proof for Theorem 1} \label{sec:appendix_hdv_eq}
\begin{proof}
To prove Theorem \ref{thm:uniq}, we first write the cost functions for through vehicles as follows:
\begin{align}
J_1^\text{s}(\mathbf{x}) &= C_1^\text{t} \left( \alpha x_1^\text{s} + \beta n_\text{exit} + n_\text{enter} \right)+ C_1^\text{m} \left(\omega  x_1^\text{s} n_2^\text{exit} +  x_1^\text{s} n_0^\text{enter} \right), \\
J_1^\text{b}(\mathbf{x}) &= C_2^\text{t} \left( \gamma (1-x_1^\text{s}) + n_2^\text{s} \right) + C_2^\text{m} \left( \rho (1-x_1^\text{s}) n_2^\text{s} + \delta (1-x_1^\text{s}) n_2^\text{exit} \right),
\end{align}

\noindent Next, we analyze the behavior of the cost functions with respect to \( x_1^\text{s} \):

\begin{itemize}
    \item \( \frac{\partial J_1^\text{s}}{\partial x_1^\text{s}} = C_1^\text{t} \alpha + C_1^\text{m} (\omega n_2^\text{exit} + n_0^\text{enter}) \), which is a positive constant, making \( J_1^\text{s}(\mathbf{x}) \) an increasing affine function of \( x_1^\text{s} \).
    \item \( \frac{\partial J_1^\text{b}}{\partial x_1^\text{s}} = -C_2^\text{t} \gamma - C_2^\text{m} (\rho n_2^\text{s} + \delta n_2^\text{exit}) \), which is a negative constant, making \( J_1^\text{b}(\mathbf{x}) \) a decreasing affine function of \( x_1^\text{s} \).
\end{itemize}

\begin{figure*}[h!]
\centering
\begin{subfigure}{0.32\textwidth}
    \includegraphics[width=\textwidth]{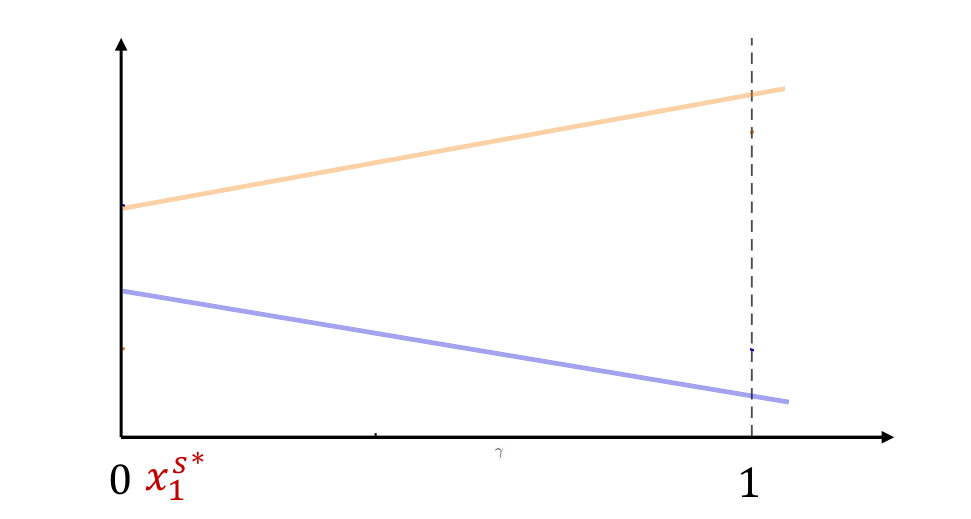}
    \caption{Case(a)}
    \label{fig:HDV_1}
\end{subfigure}
\hfill
\begin{subfigure}{0.325\textwidth}
    \includegraphics[width=\textwidth]{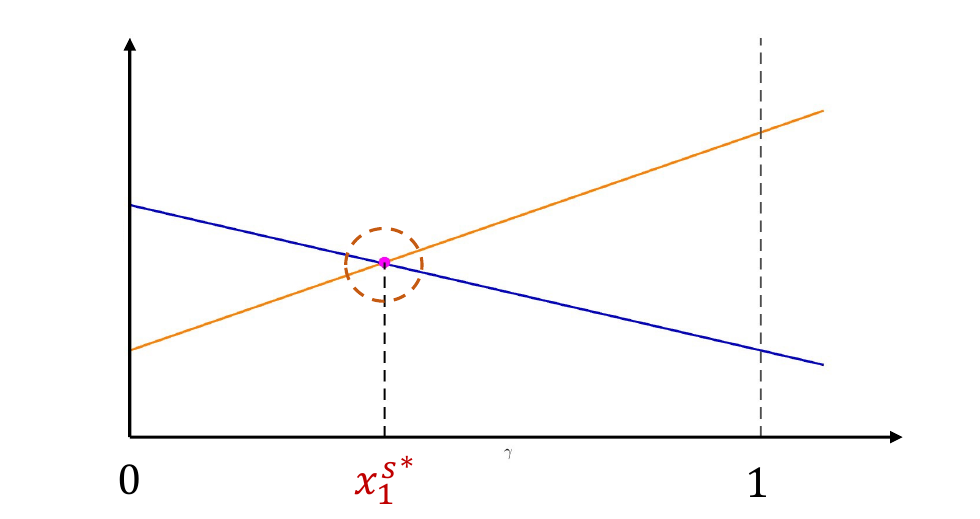}
    \caption{Case(b)}
    \label{fig:HDV_2}
\end{subfigure}
\hfill
\begin{subfigure}{0.325\textwidth}
    \includegraphics[width=\textwidth]{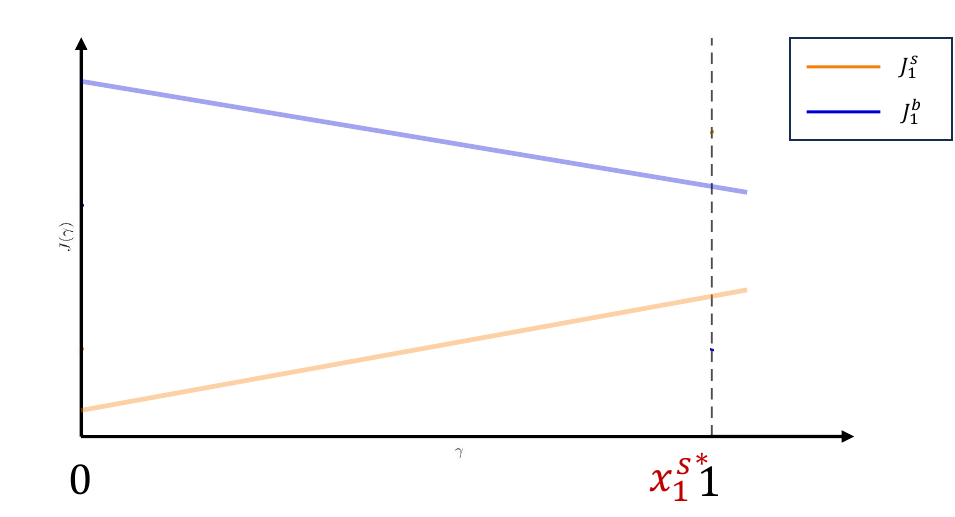}
    \caption{Case(c)}
    \label{fig:HDV_3}
\end{subfigure}

\caption{These figures show three equilibrium cases for HDVs. For Case(a), The bypassing delay cost is consistently higher than the steadfast delay cost across all flow ratios, leading to the equilibrium flow rates $x_1^\text{s}=0$. For Case(b), the bypassing and steadfast delay cost functions intersect, resulting in a mixed equilibrium where both strategies are used: $x_1^\text{b}>0, x_1^\text{s}>0$. For Case(c), The bypassing delay cost is strictly lower than the steadfast delay cost for all flow ratios, yielding an equilibrium of $x_1^\text{s}=1$.}
\label{fig:HDV_eq}
\end{figure*}

The above properties ensure that the cost functions can intersect at most once, which directly implies that the equilibrium is unique. Let us consider three scenarios based on the relative magnitudes of \( J_1^\text{s}(\mathbf{x}) \) and \( J_1^\text{b}(\mathbf{x}) \):

\begin{itemize}
    \item Case (a): For all \( x_1^\text{s} \in [0, 1] \), \( J_1^\text{s}(\mathbf{x}) > J_1^\text{b}(\mathbf{x}) \). This implies that all vehicles will choose to bypass, resulting in \( x_1^\text{s} = 0 \) and \( x_1^\text{b} = 1 \).
    
    \item Case (b): \( J_1^\text{s}(\mathbf{x}) \) and \( J_1^\text{b}(\mathbf{x}) \) intersect at a unique point \( \bar{x}_1^\text{b} \in (0, 1) \). This intersection defines the unique equilibrium distribution, where \( x_1^\text{s} = {x_1^\text{s}}^* \) and \( x_1^\text{b} = 1 - {x_1^\text{s}}^* \).
    
    \item Case (c): For all \( x_1^\text{s} \in [0, 1] \), \( J_1^\text{s}(\mathbf{x}) < J_1^\text{b}(\mathbf{x}) \). In this scenario, all vehicles will remain steadfast, resulting in \( x_1^\text{s} = 1 \) and \( x_1^\text{b} = 0 \).
\end{itemize}

Thus, in all cases, the equilibrium exists and is unique. This concludes the proof.
\end{proof}

\subsection{The Baseline Model Analysis}\label{sec:appendix_model_analysis}

\begin{table*}[!ht]
\caption{Mean Prediction Error Rates (MPER) of the cost function calibrated by the data points in \ref{4A} in different experimental scenarios. The formula for calculating the MPER is: $\text{MPER}=\frac{\sum_{i=1}^{n}\left| \frac{x^\text{s}_{1,\text{S},i}-x^\text{s}_{1,\text{M},i}}{x^{s}_{1,\text{S},i}} \right|}{n}\times 100\%$, in which $n$ is the length of test set; $x^{\text{s}}_{1,\text{S},i}$ is the $i^{\text{th}}$ simulation-generated $x^{\text{s}}_{1}$ in the test set; $x^{\text{s}}_{1,\text{M},i}$ is the $i^{\text{th}}$ $x^{\text{s}}_{1}$ predicted by our model. ${\text{MPER}^{0.1667}_{n_i}}$ represents the experiments with its constant flow \( n_i=0.1667\) for \( i \in \{\text{0, 2}\} \), while ${\text{MPER}^{0.4167}_{n_i}}$ represents the experiments with its constant flow \( n_i=0.4167\) for \( i \in \{\text{0, 2}\} \)}
\centering
\begin{tabular}{c >{\centering\arraybackslash}p{3cm} >{\centering\arraybackslash}p{3cm} >{\centering\arraybackslash}p{3cm} >{\centering\arraybackslash}p{3cm}}
\hline

Expt.   & ${\text{MPER}^{0.1667}_{n_0^\text{{enter}}}}$ & ${\text{MPER}^{0.1667}_{n_2^\text{{s}}}}$   & ${\text{MPER}^{0.4167}_{n_0^\text{{enter}}}}$& ${\text{MPER}^{0.4167}_{n_2^\text{{s}}}}$ \\ \hline
\ref{4A}  & 1.15\% & 1.55\% & 1.00\% & 1.05\% \\ 
1-1 & 5.51\% & 3.52\% & 2.41\% & 4.79\% \\ 
1-2 & 5.89\% & 11.00\% & 8.36\% & 1.85\% \\ 
2-1 & 2.54\% & 3.07\% & 3.18\% & 4.30\% \\ 
2-2 & 1.79\% & 2.23\% & 2.74\% & 1.18\% \\ 
3-1 & 1.13\% & 1.26\% & 1.21\% & 0.72\% \\ 
3-2 & 1.15\% & 1.36\% & 1.01\% & 0.87\% \\ 
3-3 & 1.04\% & 1.36\% & 1.10\% & 0.96\% \\ 
\hline
                         
\end{tabular}
\label{table1}
\end{table*}

\noindent To test the robustness of our model, we conduct the following three sets of univariate experiments:
\begin{enumerate}
    \item Adjust the velocity limits of through vehicles on lane 2 from 20 m/s to 12.5 m/s (Expt. 1-1) and the velocity limits of exiting vehicles from 20 m/s to 13.9 m/s (Expt. 1-2).
    \item Adjust the minimum gap (the minimum distance a vehicle can maintain from the vehicle ahead) of through vehicles on lane 2 (Expt. 2-1) and entering vehicles (Expt. 2-2) from 2m to 10m.
    \item Increase the vehicle aggressiveness level of through vehicles on lane 2 (Expt. 3-1),  exiting vehicles (Expt. 3-2) and entering vehicles (Expt. 3-3)
\end{enumerate}

We input the data points collected from the experiments into the cost function calibrated by the data points in \ref{4A}. For each test, we set either $n_0^\text{enter}$ or $n_2^\text{s}$ as constant, with values of \( n_0^\text{enter},n_2^\text{s} \in \left\{0.1667,0.4167\right\} \), following the method used in the model validation section to select data points. From each test, we obtain 100 data points: 50 data points are collected when $n_0^\text{enter}$ is constant (25 data points for $n_0^\text{enter}=0.1667$ and 25 data points for $n_0^\text{enter}=0.4167$), and the remaining 50 data points are collected while $n_2^\text{s}$ is constant. As the table \ref{table1} shows, all of the mean prediction error rates in different experimental scenarios are no larger than 11\%, which shows that the model is robust.

To explore the practical significance implied by each parameter in the model, thereby enabling informed adjustments under different environmental configurations to maintain accurate model performance, we recalibrate our model using the configuration adjusted in the experiments. The results are shown in Table \ref{table2} and Table \ref{table3}.

\begin{table}[!ht]
\caption{Model parameters under different configurations.}
\centering
%\begin{tabular}{ccccccc}
\begin{tabular}{c >{\centering\arraybackslash}p{2cm} >{\centering\arraybackslash}p{2cm} >{\centering\arraybackslash}p{2cm} >{\centering\arraybackslash}p{2cm} >
{\centering\arraybackslash}p{2cm} >{\centering\arraybackslash}p{2cm}}
\hline
Expt.    & $\alpha$   & $\beta$  & $\omega$ & $\gamma$  & $\rho$                  & $\delta$        \\ \hline
\ref{4A}            & 1.255    & 1.138   & 1.000           & 2.384     & 1.000            & 3.094    \\ 
1-1              & 1.323    & 2.618   & 1.000           & 2.323 & 1.000            & 6.240  \\ 
1-2 & 1.178 & 2.002 & 1.000 & 2.116 & 1.000 & 8.266\\ 
2-1        & 1.000 & 1.000 & 1.030 & 2.323 & 1.123 & 2.459\\ 
2-2           & 1.236 & 1.000 & 1.000 & 2.204 & 1.044 & 2.726 \\ 
3-1    & 1.302 & 1.056 & 1.000 & 2.436 & 1.000 & 2.958\\ 
%EXP 3-2        & 1.424 & 1.000 & 1.527 & 3.479 & 1.000 & 2.642 \\ 
3-2         & 1.324 & 1.000 & 1.000 & 2.475 & 1.000 & 2.835  \\ 
3-3    & 1.294 & 1.000 & 1.076 & 2.417 & 1.000 & 2.985\\ 
\hline
                           
\end{tabular}
\label{table2}
\end{table}

\begin{table*}[!ht]
\caption{The fluctuating rate (FR) between model parameters calibrated under different configurations and the configuration we used in \ref{4A} . The formula for calculating the fluctuating rate is: $\text{FR}=\frac{x_{\text{exp}}-x_{\text{ori}}}{x_\text{ori}}\times100\%$. If $\text{FR}>0$, the parameters obtained from the experiments are larger than the corresponding parameters we calibrated in \ref{4A}. Otherwise, if $\text{FR}<0$, the parameters obtained from the experiments are smaller than the corresponding parameters from \ref{4A}. The bold data represent parameters that have significant differences ($\left| \text{FR} \right|>25\%$) compared to the parameters from \ref{4A}. }
\centering
%\begin{tabular}{ccccccc}
\begin{tabular}{c >{\centering\arraybackslash}p{2cm} >{\centering\arraybackslash}p{2cm} >{\centering\arraybackslash}p{2cm} >{\centering\arraybackslash}p{2cm} >
{\centering\arraybackslash}p{2cm} >{\centering\arraybackslash}p{2cm}}
\hline
Expt.       & $\alpha$    & $\beta$  & $\omega$ & $\gamma$  & $\rho$    & $\delta$  \\ \hline
1-1     & 5.42\%    & \textbf{130.05\%}   & 0.00\%   & -2.56\% & 0.00\%  & \textbf{101.68\%}  \\ 
1-2     & -6.13\% & \textbf{75.92\%} & 0.00\% & -11.24\% & 0.00\% & \textbf{167.16\%}\\ 
2-1     & -20.32\% & -12.13\% & 3.00\% & -2.56\% & 12.30\% & -20.52\%\\ 
2-2     & -1.51\% & -12.13\% & 0.00\% & -7.55\% & 4.40\% & -11.89\% \\ 
3-1     & 3.75\% & -7.21\% & 0.00\% & 2.18\% & 0.00\% & -4.40\% \\ 
%EXP 3-2     & 13.47\% & -12.13\% & \textbf{52.70\%} & \textbf{45.93\%} & 0.00\% & -14.61\% \\ 
3-2     & 5.50\% & -12.13\% & 0.00\% & 3.82\% & 0.00\% & -8.37\%  \\ 
3-3     & 3.11\% & -12.13\% & 7.60\% & 1.38\% & 0.00\% & -3.52\%\\ 
\hline
                           
\end{tabular}
\label{table3}
\end{table*}

The results of experiments show that parameters $\omega$ and $\rho$ have very strong robustness since they gained a 0\% fluctuating rate in most of the experiments, while they only obtained a minimal fluctuating rate in some specific experiments. Parameters $\alpha$ and $\gamma$ also performed robustly, keeping the absolute value of the fluctuating rate below 25\% in all experiments. Parameters $\beta$ and $\delta$ acted robustly in experiments 2 and 3, as all the absolute values of the fluctuating rates are below 25\% in both experiments. Meanwhile, both values gained a high fluctuating rate when the velocity of vehicles in the scenario changed. Given that the results show the changing trends of different parameters under various factors, the model parameters can be adjusted accordingly in new experimental environments to achieve more accurate prediction performance.

In conclusion, we validated the accuracy of our model and further assessed its robustness across different experimental scenarios. By recalibrating the model under various experimental conditions, we identified the variation patterns of model parameters across scenarios with different configurations. These findings enable the model to achieve accurate prediction performance by adjusting the corresponding parameters when applied to unprecedented weaving ramp scenarios not included in our training data. Therefore, our model demonstrates strong practicality when applied to previously unseen scenarios.

\subsection{Proof for Theorem 2}\label{sec:appendix_theorem2}
\begin{proof}
In this problem, the decision variable is steadfast proportion $q_\text{s} \in [0,1]$. The CAV penetration rate $p \in [0,1]$ is given.  From Equation \eqref{eq:x_1^s} and \eqref{eq:choice_CAV_s}, the range of CAV steadfast proportion $x_{1,\text{CAV}}^\text{s} \in [0,p]$, and the range of HDV steadfast proportion $x_{1,\text{HDV}}^\text{s} \in [0,1-p]$. 

In Appendix \ref{sec:appendix_hdv_eq}, the existence and uniqueness of HDV equilibrium is proved. In Equation \eqref{eq:wardrop_lower}, the lower level is the HDV equilibrium, thus the lower level satisfies the uniqueness and existence. Then we prove the existence and uniqueness of upper level CAV optimization:

In Equation \eqref{eq:upper_xb}, $q_\text{s}^*$ is the point that can let $J_\text{soc}$ reaches the minimum value. Thus, in order to prove the existence and uniqueness of $q_\text{s}^*$, the existence and uniqueness of minimum $J_\text{soc}$ needs to be proved. The social cost $J_\text{soc}$ can be expressed as a function of the total steadfast proportion $x_1^\text{s}$:
\begin{equation}\label{eq:social_cost_beta}
  \begin{aligned}
    J_\text{soc}
    & = (K_1^\text{s}+K_1^\text{b})({x_1^\text{s}})^2+(-2K_1^\text{b}+n_2^\text{exit}K_2^\text{exit} + n_0^\text{enter}K_0^\text{enter} + B_1^\text{s}-n_2^\text{s} K_2^\text{s} - B_1^\text{b})x_1^\text{s} \\
    & +K_1^\text{b}+n_2^\text{s} K_2^\text{s} + B_1^\text{b}+B_\text{soc},
  \end{aligned}
\end{equation}

%&= K_1^\text{s} ({x_{1}^\text{s}})^2+ K_1^\text{b} (1-{x_{1}^\text{s}})^2+(n_2^\text{exit}K_2^\text{exit} + n_0^\text{enter}K_0^\text{enter} + B_1^\text{s})x_1^\text{s}+(n_2^\text{s} K_2^\text{s} + B_1^\text{b})(1-x_1^\text{s})+B_\text{soc}, \\

This expression defines a polynomial in $x_1^\text{s}$, where $x_1^\text{s} \in [0,1]$ lies in a compact and convex domain. The cost function $J_{\text{soc}}$ is continuous and differentiable on this interval, and reduces to a quadratic form. The leading coefficient of the quadratic term in Equation~\eqref{eq:social_cost_beta} is $K_1^\text{s} + K_1^\text{b}>0$. Hence, $J_\text{soc}$ is strictly convex over the domain of $x_1^\text{s}$. In Equation \eqref{eq:x_1^s} and \eqref{eq:choice_CAV_s}, $x_1^\text{s}=x_{1,\text{HDV}}^\text{s}+pq_\text{s}$. As $x_{1,\text{HDV}}^\text{s*}$ is unique and $p$ is given, ${q_\text{s}}^*$ exists and is unique for a minimum value of $J_\text{soc}$. 

Based on above convex analysis, $J_\text{soc}$ has a global minimum point $x_1^\text{s}$, which we defined as $\mathbf{B}$:
\begin{align}
\mathbf{B}=\frac{2K_1^\text{b} + B_1^\text{b} - B_1^\text{s} -n_2^\text{exit}K_2^\text{exit} - n_0^\text{enter}K_0^\text{enter}+ n_2^\text{s} K_2^\text{s}}{2(K_1^\text{s}+K_1^\text{b})},
\end{align}
%\begin{align}
%\mathbf{B}=\frac{2K_2-K_3+K_4}{2(K_1+K_2)},
%\end{align}
where we have $\mathbf{B}>\Phi$, as we define the social optimal scenario has higher steadfast proportion than HDV-only scenario. 

Let's consider possible equilibrium cases (denoted as $\mathbf{x}^*$) at mixed autonomy scenario. Given CAV penetration rate $p$, there are 3 possible cases, which are: Case 1. $J_1^\text{s}(\mathbf{x}^*)=J_1^\text{b}(\mathbf{x}^*)$. Case 2. $J_1^\text{s}(\mathbf{x}^*)>J_1^\text{b}(\mathbf{x}^*)$. Case 3. $J_1^\text{s}(\mathbf{x}^*)<J_1^\text{b}(\mathbf{x}^*)$. 

\textbf{Case 1}: 

If $J_1^\text{s}(\mathbf{x}^*)=J_1^\text{b}(\mathbf{x}^*)$ exists: based on Equation \eqref{eq:J_1^s_x_1^s} and  \eqref{eq:J_1^b_x_1^b} , together with $x_1^\text{s}+x_1^\text{b}=1$, the optimal ${x_1^\text{s}}^*$ is calculated as:
\begin{align}
  {x_1^\text{s}}^*=\frac{K_1^\text{b}+B_1^\text{b}-B_1^\text{s}}{K_1^\text{s}+K_1^\text{b}},
\end{align}
where is the same steadfast proportion as HDV-only equilibrium $\Phi$. Thus, ${x_1^\text{s}}^*=\Phi$ in this case, and $\Phi \in [0,\mathbf{B}]$.

\begin{figure}[h!]
\centering
\includegraphics[width = 0.5\textwidth]{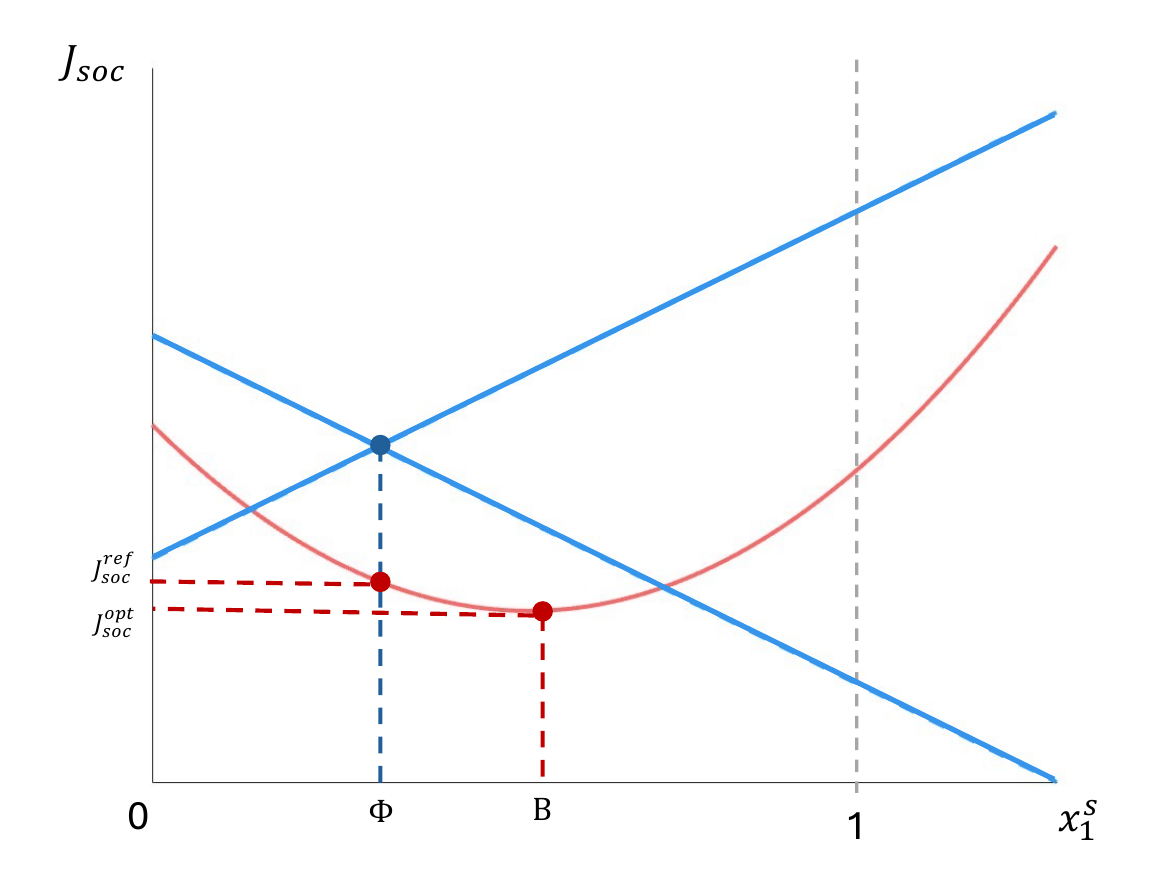}
\caption{This figure shows three equilibrium cases for mixed autonomy scenario.}
\end{figure}

Based on Equation \eqref{eq:choice_CAV_s}, the CAV proportion $x_{1,\text{CAV}}^\text{s}=pq_\text{s}$, and the corresponding HDV steadfast proportion $x_{1,\text{HDV}}^\text{s}$ can be updated as:
\begin{align}
  x_{1,\text{HDV}}^\text{s}=\Phi-pq_\text{s},
\end{align}

Due to $x_{1,\text{HDV}}^\text{s}\in [0,1-p]$, we have:
\begin{align} \label{eq:x_hdv_s_range}
  0 \leq \Phi - p q_\text{s} \leq 1-p
\end{align}

As $\Phi<\mathbf{B}$, all CAV prefer steadfast to minimize $J_\text{soc}$. Therefore, under ${q_\text{s}}^*=1$, Equation~\eqref{eq:x_hdv_s_range} must hold. Since $\Phi \leq 1$ always holds, the admissible range of $p$ that allows this case to occur is:
\begin{align}
  p\leq \Phi
\end{align}

Thus, when $0 \leq p \leq \Phi$, the equilibrium case satisfies $J_1^\text{s}(\mathbf{x}^*) = J_1^\text{b}(\mathbf{x}^*)$. When $p$ increases, $J_\text{soc}$ keeps invariant, and $J_\text{soc}=J_\text{soc}^\text{ref}$. 

%\begin{align}
%  \begin{cases}
%    \mathbf{A} + p - 1 \leq p q_\text{s}^* \leq \mathbf{A}, \\
%    0 \leq p q_\text{s}^*  \leq p,
%  \end{cases}
%\end{align}

\iffalse
1) When $ 0 \leq \mathbf{A}-pq_\text{s} \leq 1-p$, $x_{1,\text{CAV}}^\text{s}=pq_\text{s}$ should find valid interval to get valid result:
\begin{align}
  \begin{cases}
    \mathbf{A} + p - 1 \leq p q_\text{s} \leq \mathbf{A}, \\
    0 \leq p q_\text{s} \leq p,
  \end{cases}
\end{align}

As $\mathbf{A}+p-1 \leq p$ is always true, $x_{1,\text{CAV}}^\text{s}$ can always find a valid value to let ${x_1^\text{s}}^*=\mathbf{A}$. $J_\text{soc}=(K_1+K_2)\mathbf{A}^2+(K_3-2K_2-K_4)\mathbf{A}+K_2+K_4+B$, which is a fixed value.

2) When $ \mathbf{A}-pq_\text{s} < 0$ or $ \mathbf{A}-pq_\text{s} > p$: If $ \mathbf{A}-pq_\text{s}<0$, $x_{1,\text{HDV}}^\text{s}<0$. If $ \mathbf{A}-pq_\text{s} > p$, $x_{1,\text{HDV}}^\text{s}>1$. Both break the rule $ 0 \leq x_{1,\text{HDV}}^\text{s} \leq 1$, and there is no valid $x_{1,\text{HDV}}^\text{s}$ under such situations.
\fi

\textbf{Case 2}:

If $J_1^\text{s}(\mathbf{x}^*)>J_1^\text{b}(\mathbf{x}^*)$ exists: ${x_{1,\text{HDV}}^\text{s}}^*=0$. From Equation \eqref{eq:x_1^s}, it follows that ${x_{1,\text{CAV}}^\text{s}}^*={x_1^\text{s}}^*$. Based on Equation \eqref{eq:J_1^s_x_1^s} and \eqref{eq:J_1^b_x_1^b}, together with $x_1^\text{s}+x_1^\text{b}=1$, the optimal ${x_1^\text{s}}^*$ satisfies:
\begin{align}
  {x_1^\text{s}}^*>\frac{K_1^\text{b}+B_1^\text{b}-B_1^\text{s}}{K_1^\text{s}+K_1^\text{b}}=\Phi,
\end{align}

There are two possible minimum cases for $J_\text{soc}$: (1) $\Phi<{x_1^\text{s}}^*< \mathbf{B}$, (2) ${x_1^\text{s}}^* \geq \mathbf{B}$.

(1) $\Phi<{x_1^\text{s}}^*< \mathbf{B}$. In this case, ${x_1^\text{s}}={x_{1,\text{CAV}}^\text{s}}=pq_\text{s}$, and all CAV prefer steadfast, which means $q_\text{s}^*=1$, thus ${x_1^\text{s}}^*=p$. Also, the range of $p$ is updated to be $\Phi<p < \mathbf{B}$.

From Equation \eqref{eq:social_cost_beta}, $J_\text{soc}$ can be updated to be a quadratic function of $p$, and the range of p is $\Phi<p< \mathbf{B}$. When $p\in (\Phi,\mathbf{B})$ increases, $J_\text{soc}$ decreases.

(2) ${x_1^\text{s}}^* \geq \mathbf{B}$. In this case, $x_1^\text{s}=pq_\text{s}\geq \mathbf{B}$. As the range of $x_1^\text{s}\in [0,p]$, if this case exists, there must satisfy $p\geq \mathbf{B}$. 

From Equation \eqref{eq:social_cost_beta}, $J_\text{soc}$ is a quadratic function of $x_1^\text{s}$, and the range of $x_1^\text{s}\geq \mathbf{B}$. $J_\text{soc}$ always reach its minimum value at ${x_1^\text{s}}^*=\mathbf{B}$. When $p\in [\mathbf{B},1]$ increases, $J_\text{soc}$ keeps invariant.

%Based on Equation \eqref{eq:x_1^s}, $x_{1,\text{CAV}}^\text{s}={x_1^\text{s}}^*$. As $x_{1,\text{CAV}}^\text{s} \in [0,p]$, and ${x_1^\text{s}}^*\in [\Phi,1]$. If this case $J_1^\text{s}(\mathbf{x}^*)>J_1^\text{b}(\mathbf{x}^*)$ exists, the range of $x_{1,\text{CAV}}^\text{s}$ and ${x_1^\text{s}}^*$ should have overlap. Thus in this case, the range of $p$ should satisfy:
%\begin{align}
%  p>\Phi
%\end{align}

%\begin{align}
%{x_1^\text{s}}_{min}=\frac{2K_2-K_3+K_4}{2(K_1+K_2)},
%\end{align}
%where we define $\frac{2K_2-K_3+K_4}{2(K_1+K_2)}=\mathbf{B}$. In our previous analysis, it is assumed that social optimal situation has higher proportion of $x_1^\text{s}$ compared with Wardrop equilibrium situation, thus we have $\mathbf{A} \leq \mathbf{B}$, and $\mathbf{B}\geq 0$.

%1) When $\mathbf{B}\in [0,p]$, ${x_1^\text{s}}_{min}={x_1^\text{s}}^*=x_{1,\text{CAV}}^\text{s}=\mathbf{B}$. In this case, $p\geq \mathbf{B}$, ${x_1^\text{s}}^*=\mathbf{B}$, and $J_\text{soc}=K_2+K_4+B$, which is a fixed value.

%2) When $\mathbf{B} > p$, $x_{1,\text{CAV}}^\text{s}={x_1^\text{s}}^*=p$.  In this case, $\mathbf{A} < p < \mathbf{B} $, and $J_\text{soc}=(K_1+K_2)p^2+(-2K_2+K_3+K_4)p+K_2+K_4+B$. Because $\mathbf{B}={x_1^\text{s}}_{min}>p$, when $p$ increases, $J_\text{soc}$ decreases. 

%Overall, in Case 2, in $\mathbf{A}<p<\mathbf{B}$, when $p$ increases, $J_\text{soc}$ decreases. In $p\geq\mathbf{B}$, when $p$ increases, $J_\text{soc}=K_2+K_4+B$ reaches the global minimum value and keep invariant.

\textbf{Case 3}:

If $J_1^\text{s}(\mathbf{x}^*)<J_1^\text{b}(\mathbf{x}^*)$ exists: $x_{1,\text{HDV}}^\text{s}=1-p$, and $x_1^\text{s}=1-p+pq_\text{s}$. Based on Equation \eqref{eq:J_1^s_x_1^s} and \eqref{eq:J_1^b_x_1^b}, together with $x_1^\text{s}+x_1^\text{b}=1$, the optimal ${x_1^\text{s}}^*$ satisfies:
\begin{align}
  {x_1^\text{s}}^*<\frac{K_1^\text{b}+B_1^\text{b}-B_1^\text{s}}{K_1^\text{s}+K_1^\text{b}}=\Phi,
\end{align}

Since $x_1^\text{s}<\Phi$, all CAV prefer steadfast, thus $q_\text{s}=1$. Also, we have $1-p+pq_\text{s}<\Phi$, thus $\Phi>1$. This contradicts the range of $\Phi<\mathbf{B}\leq 1$, so this case does not exists. 

%Because $J_1^\text{s}(\mathbf{x}^*)<J_1^\text{b}(\mathbf{x}^*)$, $x_{1,\text{HDV}}^\text{s}=1-p$. Then $x_1^\text{s}=1-p+pq_\text{s}<\mathbf{A}\leq \mathbf{B}$. Every $J_\text{soc}$ in this case is bigger than Case 1 and Case 2.

In conclusion, combined Case 1, Case 2, and Case 3: in $0\leq p \leq \mathbf{A}$, when p increases, ${x_1^\text{s}}^*=\Phi$, $J_\text{soc}$ keep invariant. In $\Phi < p < \mathbf{B}$, when $p$ increases, $J_\text{soc}$ decreases. In $\mathbf{B}\leq p\leq1$, when $p$ increases, $J_\text{soc}$ reaches the global minimum point and keep invariant.

\end{proof}

\subsection{Proof for Theorem 3} \label{sec:appendix_theorem3}
\begin{proof}

%All CAV and HDV types are described as $j$ in the mixed autonomy scenario, which we have a disjoint union $\mathcal{J} = \mathcal{H} \cup \mathcal{C} $. Based on Equations \eqref{eq:SVO_cost1} and \eqref{eq:SVO_cost2}, and noting that the total steadfast proportion is given by $x_1^\text{s}=\sum_{j\in\mathcal{J}} x_{1,j}^\text{s}$, i.e., the aggregate steadfast ratio equals the sum of the steadfast proportions of all vehicle types, we can reformulate the equations as follows:

In the mixed-autonomy setting, all vehicle types $j \in \mathcal{J} = \mathcal{H} \cup \mathcal{C}$ are treated within a unified framework, as established in Section~\ref{sec:svo}. Recall that the total steadfast proportion satisfies $x_1^\text{s} = \sum_{j \in \mathcal{J}} x_{1,j}^\text{s}$, so that $\frac{\partial x_1^\text{s}}{\partial x_{1,j}^\text{s}} = 1$ and likewise $\frac{\partial x_1^\text{b}}{\partial x_{1,j}^\text{b}} = 1$ for all $j$. Substituting these relations into Equations~\eqref{eq:SVO_cost1} and~\eqref{eq:SVO_cost2}, the type-specific cost functions simplify to:
\begin{align}
  \tilde{J}_{1,j}^\text{s}(\mathbf{x}) &= cos(\theta_j)J_{1}^\text{s}(\mathbf{x})+ sin(\theta_j)\frac{\partial J_\text{soc}(\mathbf{x})}{\partial x_{1}^\text{s}} \frac{\partial}{\partial x_{1,j}^\text{s}}\sum_{j=1}^J x_{1,j}^\text{s},\\
  &=(cos(\theta_j)+2sin(\theta_j))K_1^\text{s}x_1^\text{s} + cos(\theta_j)B_1^\text{s} + sin(\theta_j)(B_1^\text{s} + n_2^\text{exit}K_2^\text{exit} + n_0^\text{enter}K_0^\text{enter}), \label{eq:cost_func_J_1js} \\
  \tilde{J}_{1,j}^\text{b}(\mathbf{x}) &= cos(\theta_j)J_1^\text{b}(\mathbf{x}) + sin(\theta_j)\frac{\partial J_\text{soc}(\mathbf{x})}{\partial x_{1}^\text{b}} \frac{\partial }{\partial x_{1,j}^\text{b}} \sum_{j=1}^J x_{1,j}^\text{b},\\
  &= (cos(\theta_j)+2sin(\theta_j)) K_1^\text{b} x_1^\text{b} + cos(\theta_j) B_1^\text{b}+ sin(\theta_j)( B_1^\text{b} + n_2^\text{s} K_2^\text{s}), \label{eq:cost_func_J_1jb}
\end{align}
Since $x_1^\text{s}=\sum_j x_{1,j}^\text{s}$, we have $\frac{\partial{x_1^\text{s}}}{\partial x_{1,j}^\text{s}}$, hence $\frac{\partial{J_\text{soc}}}{\partial x_{1,j}^\text{s}}=\frac{\partial{J_\text{soc}}}{\partial x_{1}^\text{s}}$. Same for $x_1^\text{b}$. Therefore, we have the fact that $\frac{\partial}{\partial x_{1,j}^\text{s}}\sum_{j=1}^J x_{1,j}^\text{s}=1$ and $\frac{\partial}{\partial x_{1,j}^\text{b}}\sum_{j=1}^J x_{1,j}^\text{b}=1$.

Also, we can simplify Equation~\eqref{eq:cost_func_J_1js} and~\eqref{eq:cost_func_J_1jb} as:
\begin{align}
  \tilde{J}_{1,j}^\text{s}&=K_{1,j}^\text{s} x_1^\text{s}+B_{1,j}^\text{s},\\
  \tilde{J}_{1,j}^\text{b}&=K_{1,j}^\text{b}x_1^\text{b}+B_{1,j}^\text{b}.
\end{align}
where $K_{1,j}^\text{s}:=(cos(\theta_j)+2sin(\theta_j))K_1^\text{s}$, $B_{1,j}^\text{s}:=cos(\theta_j)B_1^\text{s} + sin(\theta_j)(B_1^\text{s} + n_2^\text{exit}K_2^\text{exit} + n_0^\text{enter}K_0^\text{enter})$, $K_{1,j}^\text{b}:=(cos(\theta_j)+2sin(\theta_j)) K_1^\text{b}$, and $B_{1,j}^\text{b}:=cos(\theta_j) B_1^\text{b}+ sin(\theta_j)( B_1^\text{b} + n_2^\text{s} K_2^\text{s})$.

We define $\Delta_j(x_1^\text{s})=\tilde{J}_{1,j}^\text{s}-\tilde{J}_{1,j}^\text{b}$, which represents the cost difference between choosing steadfast and the bypass for vehicle type $j$. Since the total proportion on the two behaviors satisfies $x_1^\text{s}+x_1^\text{b}=1$, we can substitute $x_1^\text{b}=1-x_1^\text{s}$ into the above expression. Thus, $\Delta_j(x_1^\text{s})$ becomes a function of $x_1^\text{s}$ only and can be written in the following linear form:
\begin{align}
  \Delta_j(x_1^\text{s})= (K_{1,j}^\text{s}+K_{1,j}^\text{b})x_1^\text{s}+B_{1,j}^\text{s}-B_{1,j}^\text{b}-K_{1,j}^\text{b}.
\end{align}

Since $\theta_j\in (-arctan\frac{1}{2}, \pi-arctan\frac{1}{2})$, we have $cos(\theta_j)+2sin(\theta_j)>0$. Under this condition, for any vehicle type $j$, $J_{1,j}^\text{s}$ is monotonically increasing in $x_1^\text{s}$, whereas $J_{1,j}^\text{b}$ is monotonically decreasing in $x_1^\text{s}$. Assume there exists type $j$'s equilibrium $\tilde{J}_{1,j}^\text{s}(\chi_j)=\tilde{J}_{1,j}^\text{b}(\chi_j)$, which we define the total steadfast proportion $x_1^\text{s}$ at such equilibrium is $\chi_j$:
\begin{align}
  \chi_j=\frac{B_{1,j}^\text{b}+K_{1,j}^\text{b}-B_{1,j}^\text{s}}{K_{1,j}^\text{s}+K_{1,j}^\text{b}}
\end{align}

\iffalse
\begin{align}
\chi_j=-\frac{cos(\theta_j)(B_1^\text{s}-K_1^\text{b}-B_1^\text{b})+sin(\theta_j)(-2K_1^\text{b}+B_1^\text{s}+n_2^\text{exit}K_2^\text{exit}+n_0^\text{enter}K_0^\text{enter}-B_1^\text{b}-n_2^\text{s}K_2^\text{s})}{(cos(\theta_j)+2sin(\theta_j))(K_1^\text{s}+K_1^\text{b})}.
\end{align}
\fi

For the true ${x_1^\text{s}}^*$ under penetration $p$, vehicle type $j$ can only fall into one of the following three cases:
\begin{itemize}
  \item If ${x_1^\text{s}}^*<\chi_j$, we have $\tilde{J}_{1,j}^\text{s}<\tilde{J}_{1,j}^\text{b}$, type $j$ strictly keeps steadfast ($x_{1,j}^\text{s}=w_j(p)$).
  
  \item If ${x_1^\text{s}}^*>\chi_j$, we have $\tilde{J}_{1,j}^\text{s}>\tilde{J}_{1,j}^\text{b}$, type $j$ strictly keeps bypass ($x_{1,j}^\text{s}=0$).
  
  \item If ${x_1^\text{s}}^*=\chi_j$, we have $\tilde{J}_{1,j}^\text{s}=\tilde{J}_{1,j}^\text{b}$, type $j$ is indifferent between two choices.
\end{itemize}

Since all vehicle types are distinct and each type has a typical $\theta_j$, which means each vehicle type has a unique threshold $\chi_j$. Then we rank all thresholds $\chi_j$ in ascending order:
\begin{align}
  \chi_{1}<\chi_{2}<\cdots<\chi_{J}.
\end{align}

%For any $x_1^\text{s}\in [0,1]$, define the set $M=\{j | C_j\geq x_1^\text{s}\}$, which includes all vehicle types that fully or partial steadfast.

In any Wardrop equilibrium, there exists an index $k$ such that:
\begin{itemize}
    \item all types with threshold $> \chi_{k}$ are fully steadfast.
    
    \item all types with threshold $< \chi_{k}$ are fully bypass.
    
    \item only type $k$ may mix (if ${x_1^\text{s}}^*=\chi_{k}$).
\end{itemize}

%For every $ j\in M$, the steadfast proportion $x_{1,j}^\text{s}$ is given by:
%\begin{align}
%  \begin{cases}
%    x_{1,j}^\text{s}=w_j, &\quad if \quad C_j>x_1^\text{s} \\
%    x_{1,j}^\text{s}=C_j-\sum_{j'\in M,j'\neq j} w_{j'} &\quad if \quad C_j=x_1^\text{s}, 
%  \end{cases}
%\end{align}
%where $\sum_{j\in M} x_{1,j}^\text{s} = x_1^\text{s}$.

\textbf{Check the monotonicity of $W_k(p)$ for $p$ under fixed $k$. }

Define total steadfast proportion tail sums other than vehicle $k$ as $W_k(p)$ for $k=1,\cdots,J$:
\begin{align}
    W_k(p)= \sum_{j\in \mathcal{H},\chi_j>\chi_{k}}(1-p)w_h^\text{HDV} + \sum_{j\in \mathcal{C}, \chi_j>\chi_{k}} p w_c^\text{CAV} \label{eq:W_k(p)}
\end{align}

When $k$ is fixed, Equation~\eqref{eq:W_k(p)} is affine in $p$. Let 
\begin{align}
A_k := \sum_{j\in \mathcal{C},\chi_j>\chi_k} w_c^\text{CAV} - \sum_{j\in \mathcal{H},\chi_j>\chi_k} w_h^\text{HDV}.
\end{align}

Then $W_k(p)$ has slope $A_k$ with respect to $p$. Therefore, $W_k(p)$ is increasing, constant, or decreasing in $p$ 
if $A_k > 0$, $A_k = 0$, or $A_k < 0$, respectively.

\textbf{Check the monotonicity of $J_\text{soc}$ for $p$ under fixed $k$.}

Under fixed vehicle type $k$, the total steadfast proportion ${x_1^\text{s}}^*=\chi_k$, which is fixed. According to Equation~\eqref{eq:social_cost_beta}, the social cost $J_\text{soc}$ depends only on $x_1^\text{s}$. As a result, $J_\text{soc}$ remains constant within the region where the vehicle type $k$ is fixed. Consequently, variations in $J_\text{soc}$ occur only at the switching points where the fixed vehicle type $k$ changes.

%When $k$ is fixed, Equation~\ref{eq:W_k(p)} is a linear function of $p$. The monotonicity of $W_k(p)$ with respect to $p$ is is determined by the sign of its coefficient $\sum_{j\in \mathcal{C},\chi_j>\chi_k}w_c^\text{CAV}-\sum_{j\in \mathcal{H},\chi_j>\chi_k} w_h^\text{HDV}$. When coefficient $>0$, $W_k(p)$ is monotonically increasing with $p$; when coefficient $=0$, $W_k(p)$ is constant; when coefficient $<0$, $W_k(p)$ is monotonically decreasing with $p$.

%{\color{red}As $w_h^{HDV}$ and $w_c^{CAV}$ are constant vectors, and CAVs are defined to be more beneficial to society compared with HDVs. At least as much weight of CAVs has high steadfast thresholds as among HDVs, which means $\sum_{j\in \mathcal{C}, \chi_j>\chi_{k}} w_c^\text{CAV} > \sum_{j\in \mathcal{H},\chi_j>\chi_{k}}w_h^{HDV}$. Hence $W_k(p)$ is a linear and monotonically increasing function of $p$ under fixed $k$.}

\textbf{Classify type $k$'s behavior.}

When $k$ is fixed, there are exactly two behavior regimes for vehicle type $k$:

\textbf{Regime 1: Interior Regime for Mixed type $k$.} When vehicle type $k$ satisfy: 
\begin{align}
 0<\chi_k-W_{k}(p)<w_{k}(p), \label{eq:regime1}
\end{align}
Vehicle type $k$ adopts a mixed strategy and is partially steadfast and partially bypass (Figure~\ref{fig:mixed_k}). Vehicle types with $\chi_j > \chi_k$ are fully steadfast, while vehicle types with $\chi_j < \chi_k$ are fully bypass.

\begin{figure*}[h!]
\centering
\begin{subfigure}{0.32\textwidth}
    \includegraphics[width=\textwidth]{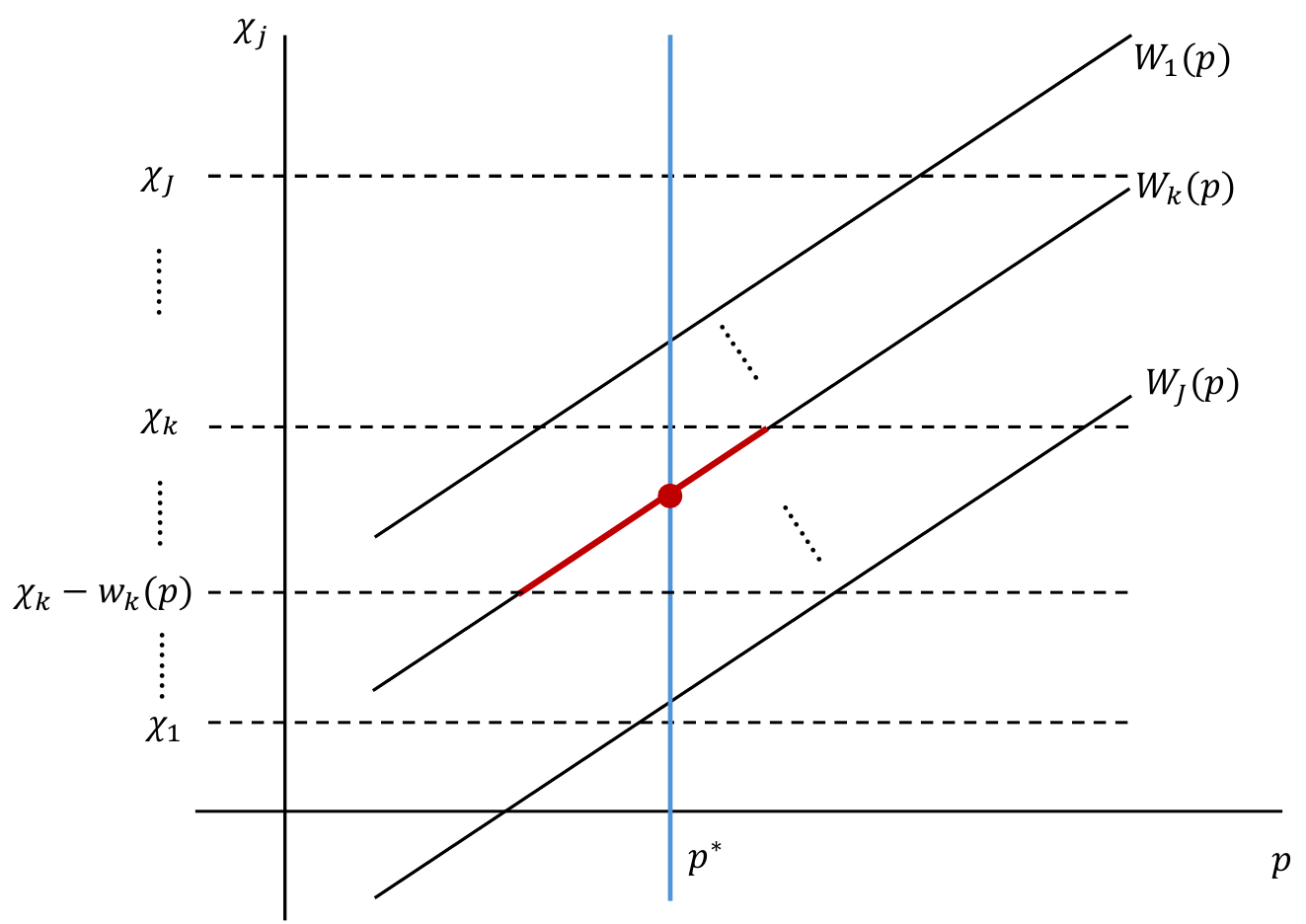}
    \caption{Mixed type $k$}
    \label{fig:mixed_k}
\end{subfigure}
\hfill
\begin{subfigure}{0.325\textwidth}
    \includegraphics[width=\textwidth]{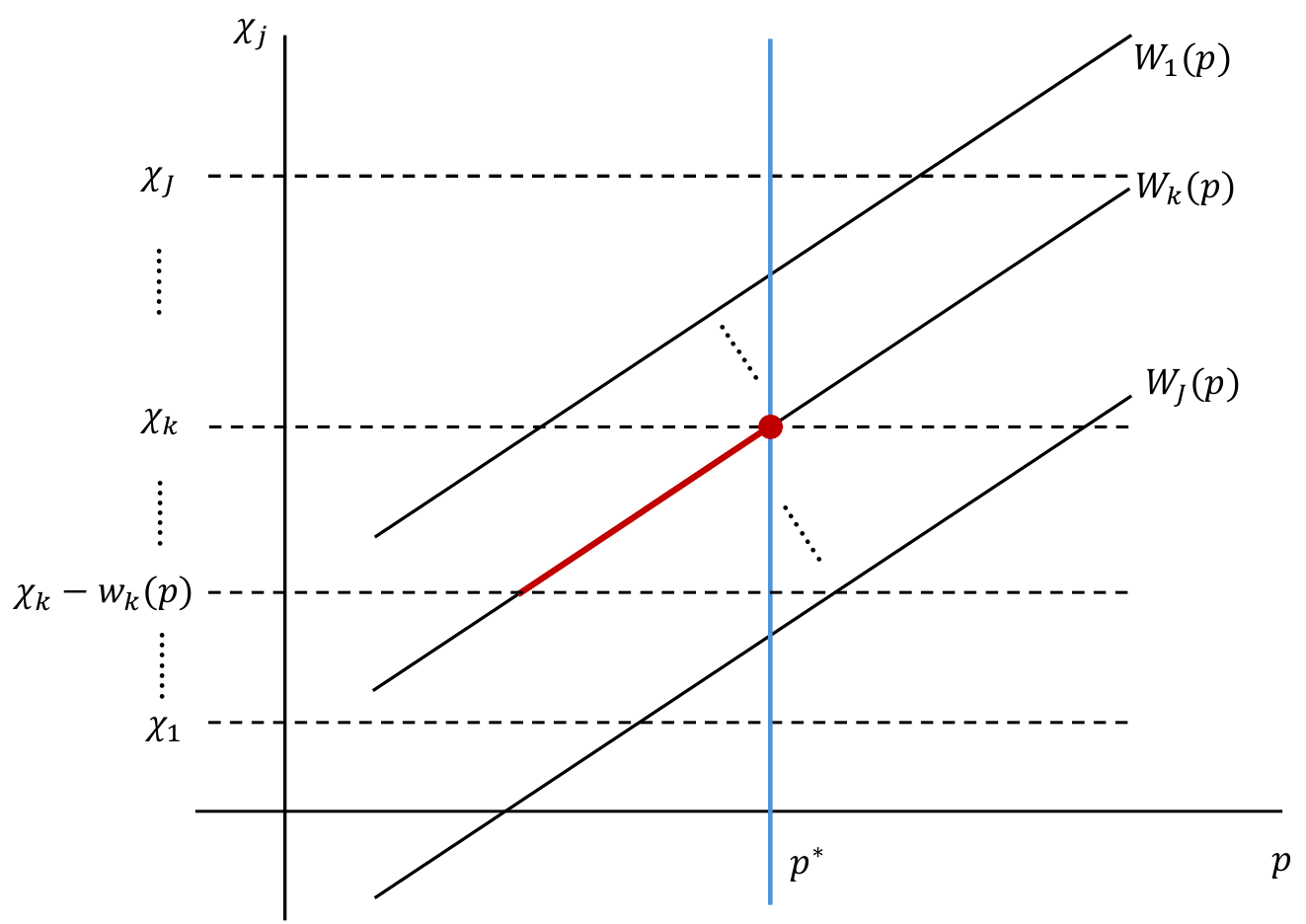}
    \caption{Purely type $k$ bypass}
    \label{fig:pure_k1}
\end{subfigure}
\hfill
\begin{subfigure}{0.325\textwidth}
    \includegraphics[width=\textwidth]{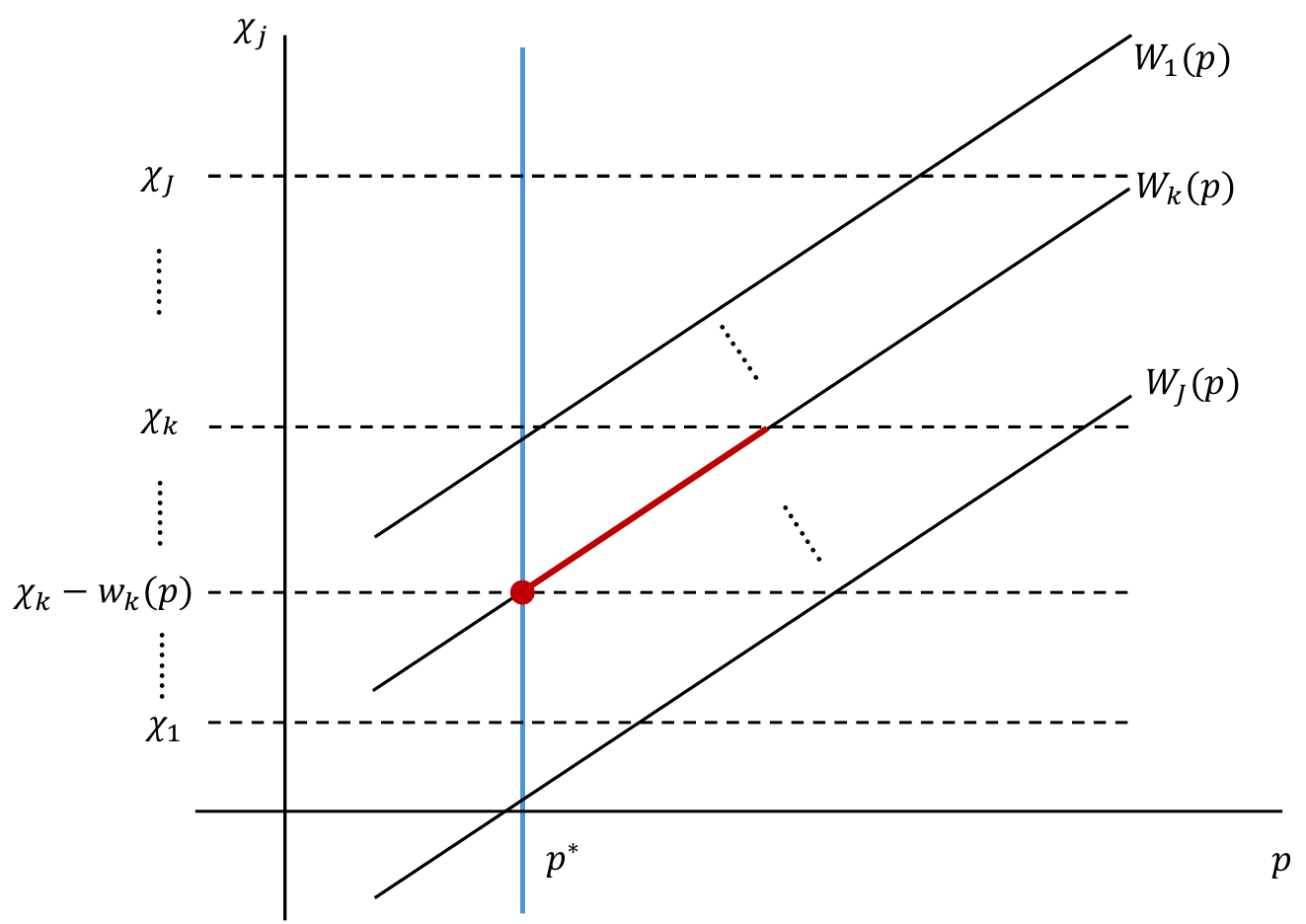}
    \caption{Purely type $k$ steadfast}
    \label{fig:pure_k2}
\end{subfigure}

\caption{These figures show type $k$'s behavior regimes.}
\label{fig:regime_k}
\end{figure*}

%Due to $0 \leq x_{1,k}^\text{s}(p) \leq w_k(p)$ , this is feasible if and only if:
%\begin{align}\label{eq:regime1}
%  W_{k}(p) < \chi_k < W_{k-1}(p).
%\end{align}

\textbf{Regime 2: Boundary Regime for Purely type $k$.} When vehicle type $k$ satisfy:
\begin{align}
  \chi_k-W_{k}(p)=0, \label{eq:regime2_1}
\end{align}
vehicle type $k$ becomes purely bypass (i.e., $x_{1,k}^s = 0$). Vehicle types with $\chi_j > \chi_k$ are fully steadfast, while vehicle types with $\chi_j \le \chi_k$ are fully bypass. This boundary corresponds to the transition from the regime where type $k$ is mixed to the regime where type $k+1$ becomes the potentially mixed type.

When vehicle type $k$ satisfy: 
\begin{align}
  \chi_k-W_{k}(p)=w_k(p),
\end{align}
vehicle type $k$ becomes purely steadfast (i.e., $x_{1,k}^s = w_k(p)$). Vehicle types with $\chi_j \ge \chi_k$ are fully steadfast, while vehicle types with $\chi_j < \chi_k$ are fully bypass. In this case, the system transitions to the regime associated with type $k-1$.

\textbf{Check the regime pairwise disjointness of $k-1$ and $k+1$.}

Based on definition of~\ref{eq:W_k(p)}, we have $W_{k-1}(p)=W_k(p)+w_k(p)$ and $W_{k+1}(p)=W_k(p)-w_{k+1}(p)$. Since $p\in I_k$, $W_k(p)<\chi_k<W_k(p)+w_k(p)$.

For $k-1$:
\begin{align}
  W_{k-1}(p)=W_k(p)+w_k(p)>\chi_k>\chi_{k-1},
\end{align}
thus there does not exist $p\in I_{k-1}$.

For $k+1$:
\begin{align}
  W_{k+1}(p)+w_{k+1}(p)=W_k(p)<\chi_k<\chi_{k+1},
\end{align}
thus there does not exist $p\in I_{k+1}$.

In conclusion, the neighboring regime will not overlap with each other.

\textbf{Check exact range of $p$ under $I_k$}.

Based on~\eqref{eq:regime1}, we can get the current range of $p$. If $A_k>0$:
\begin{align}
    \frac{\chi_k-\sum_{j\in \mathcal{H},\chi_j>\chi_k}w_h^\text{HDV}}{w_{k}^\text{CAV}+A_k} < p < \frac{\chi_k-\sum_{j\in \mathcal{H},\chi_j>\chi_k}w_h^\text{HDV}}{A_k}.
\end{align}

If $A_k<0$, only when $w_{c=k}^\text{CAV}+A_k \geq 0$, we have:
\begin{align}
  p>\frac{\chi_k-\sum_{j\in \mathcal{H},\chi_j>\chi_k}w_h^\text{HDV}}{A_k}.
\end{align}

If $A_k=0$, we have:
\begin{align}
  p>\frac{\chi_k-\sum_{j\in \mathcal{H},\chi_j>\chi_k}w_h^\text{HDV}}{w_{k}^\text{CAV}+A_k}.
\end{align}
\end{proof}

\begingroup
\small  
\bibliographystyle{apalike}
\bibliography{ref.bib}
\endgroup

\end{document}